\definecolor{blue_color}{HTML}{088CE8}
\newcommand{\MAJ}{{\mathsf{MAJ}}}
\newcommand{\eat}[1]{}
\newcommand{\adeg}{\widetilde{\deg}}
\newcommand{\eps}{\varepsilon}
\newcommand{\zbits}{\{0,1\}}
\newcommand{\bits}{\{-1,1\}}
\newcommand{\reals}{\mathbb{R}}
\newcommand{\OR}{\mathsf{OR}}
\newcommand{\E}{\mathbb{E}}
\newcommand{\poly}{\operatorname{poly}}
\newcommand{\polylog}{\operatorname{polylog}}
\newcommand{\OS}{\mathsf{OS}}
\newcommand{\IP}{\mathsf{IP}}
\newcommand{\GT}{\mathsf{GT}}
\newcommand{\HS}{{\mathsf{HS}}}
\newcommand{\HSA}{{\AHS}}
\newcommand{\AHS}{{\mathsf{AHS}}}
\newcommand{\parity}{{\mathsf{parity}}}
\newcommand{\R}{\mathcal{R}}
\newcommand{\CRAP}{{\mathsf{PUR}}} 
\newcommand{\CRAPWithParameter}[1]{{\mathsf{PUR[#1]}}}
\newcommand{\Oracle}{\mathcal{Y}}
\newcommand{\OracleWithParameter}[1]{\mathcal{Y}{[#1]}}
\newcommand{\oracle}{y}
\newcommand{\from}{\leftarrow}
\newcommand{\one}{\mathbf{1}}
\theoremstyle{definition}
\newtheorem{theorem}{Theorem}
\newtheorem*{theorem*}{Theorem}
\newtheorem*{corollary*}{Corollary}
\newtheorem*{statement*}{Statement}
\newtheorem{definition}[theorem]{Definition}
\newtheorem{claim}[theorem]{Claim} 
\newtheorem{corollary}[theorem]{Corollary}
\title{Approximate degree lower bounds for oracle identification problems}
\author{Mark Bun\footnote{Department of Computer Science, Boston University, 
Boston, MA 02215, USA\@. {\tt mbun@bu.edu}. Supported in part by NSF awards CCF-1947889 and CNS-2046425 and a Sloan Research Fellowship.} \and Nadezhda Voronova\footnote{Department of Computer Science, Boston University, 
Boston, MA 02215, USA\@. {\tt voronova@bu.edu}. Supported in part by NSF awards CCF-1947889 and CNS-2046425. 
}}
\date{March 2023}
\newcommand{\NV}[1]{{\color{blue} (Nadya: #1)}}
\begin{document}

\maketitle

\begin{abstract}
    The approximate degree of a Boolean function is the minimum degree of real polynomial that approximates it pointwise. For any Boolean function, its approximate degree serves as a lower bound on its quantum query complexity, and generically lifts to a quantum communication lower bound for a related function.

    We introduce a framework for proving approximate degree lower bounds for certain oracle identification problems, where the goal is to recover a hidden binary string $x \in \{0, 1\}^n$ given possibly non-standard oracle access to it. Our lower bounds apply to decision versions of these problems, where the goal is to compute the parity of $x$.
    We apply our framework to the ordered search and hidden string problems, proving nearly tight approximate degree lower bounds of $\Omega(n/\log^2 n)$ for each. These lower bounds generalize to the weakly unbounded error setting, giving a new quantum query lower bound for the hidden string problem in this regime.  
    Our lower bounds are driven by randomized communication \emph{upper bounds} for the greater-than and equality functions.
\end{abstract}

\begin{comment}
\begin{abstract}
    \NV{Mention the decision version and unbounded error}
    The approximate degree of a Boolean function is the minimum degree of real polynomial that approximates it pointwise. For any Boolean function, its approximate degree serves as a lower bound on its quantum query complexity, and generically lifts to a quantum communication lower bound for a related function.

    We introduce a framework for proving approximate degree lower bounds for certain oracle identification problems, where the goal is to recover a hidden binary string $x \in \{0, 1\}^n$ given possibly non-standard oracle access to it. We apply this framework to the ordered search and hidden string problems, proving nearly tight approximate degree lower bounds of $\Omega(n/\log^2 n)$ for each. These new lower bounds are driven by randomized communication \emph{upper bounds} for the greater-than and equality functions.
\end{abstract}
\end{comment}

\section{Introduction}

In an \emph{oracle identification} problem, there is an unknown string $x \in \zbits^n$. A query algorithm is given possibly non-standard oracle access to $x$, and its goal is to reconstruct $x$ by making a minimal number of queries to this oracle. More specifically, an oracle identification problem is specified by a fixed family of Boolean functions $a_1, \dots, a_N$. A query algorithm may inspect any value $a_i(x)$ of its choice at the cost of one query, and its goal is to determine $x$. Many influential problems in the study of quantum algorithms and complexity can be viewed as oracle identification problems, including van Dam's original oracle interrogation problem~\cite{Dam98}, the Bernstein-Vazirani problem~\cite{BernsteinV93}, combinatorial group testing~\cite{AmbainisM14, Belovs15}, symmetric junta learning~\cite{Belovs15}, and more~\cite{BuhrmanW99, AmbainisIKMPY04, AmbainisIKRY07, IwamaNRT12, cleve2012reconstructing, Kothari14}. In this work, we study two such oracle identification problems:

\paragraph{Ordered Search.}
Consider the following abstraction of the problem of searching an ordered list of $N = 2^n$ elements. Given a list of $N$ bits $a_i \in \{0, 1\}$ under the promise that $a_0 \le a_1 \le \dots \le a_{N-1}$, find the (binary encoding of the) minimum index $x \in \zbits^n$ such that $a_x = 1$. Binary search yields a deterministic algorithm making $n$ queries, and it is not hard to see that this is optimal for randomized algorithms as well. As for quantum algorithms, it turns out that a constant-factor speedup is possible~\cite{FarhiGGS99, ChildsLP07, BenOrH08}, but a lower bound of $\Omega(n)$ holds in this model as well~\cite{BuhrmanW99, FarhiGGS98, Ambainis99, HoyerNS02, ChildsL08}. Ordered search may be viewed as an oracle identification problem where the query algorithm is given oracle access to $a_0 = \GT_0(x), \dots, a_{N-1} = \GT_{N-1}(x)$, where each ``greater-than'' function $\GT_i(x)$ evaluates to $1$ if $i \ge x$ and to $0$ otherwise. 

\paragraph{Hidden String.}
In the hidden string problem, the goal is to reconstruct a hidden string $x \in \zbits^n$ given information about the presence of absence of potential substrings of $x$. That is, the goal is to determine $x$ given ``substring oracle'' access, i.e., oracle access to $a_s = \phi_s(x)$ for every binary string $s$ of length at most $n$, where $\phi_s(x)$ evaluates to $1$ iff $s$ is a substring of $x$. Building on a classical query algorithm of Skiena and Sundaram~\cite{SkienaS95}, Cleve et al.~\cite{cleve2012reconstructing} gave a $3n/4 + o(n)$ quantum query algorithm for this problem, and proved a nearly matching quantum query lower bound of $\Omega(n / \log^2 n)$.

\medskip

The state-of-the-art quantum query lower bounds for both problems are proved via the quantum adversary method, which in its modern formulation~\cite{HoyerLS07}, characterizes the bounded-error quantum query complexity of every function up to a constant factor~\cite{Reichardt11}. The other major technique for proving quantum query lower bounds is the polynomial method~\cite{beals2001quantum}, which lower bounds the quantum query complexity of a function by lower bounding its \emph{approximate degree}. The approximate degree of a Boolean function is the least degree of a real polynomial that approximates it pointwise to error $1/3$. Since the acceptance probability of a $T$-query quantum algorithm is a polynomial of degree $2T$, the approximate degree of a function is always at most (half of its) quantum query complexity, but it can be much smaller~\cite{Ambainis06, AaronsonBK16, Sherstov20, BunKT20}.

In this work, we prove lower bounds of $\Omega(n / \log^2 n)$ on the approximate degree of (decision variants) of the ordered search and hidden string problems. These lower bounds are nearly optimal, as the known quantum (indeed, even classical) query algorithms for these problems automatically yield $O(n)$ upper bounds on their approximate degree. For the ordered search problem, Childs and Lee~\cite{ChildsL08} explicitly posed the question of investigating approximate degree lower bounds to circumvent limitations of the adversary method.  Meanwhile, our lower bound on the approximate degree of the hidden string problem implies a quantum query lower bound matching the state-of-the-art~\cite{cleve2012reconstructing}.

Approximate degree is a fundamental measure of the complexity of Boolean functions that has been the subject of extensive study in its own right (see, e.g.,~\cite{BunT22} for a recent survey). And while nearly tight quantum query lower bounds for these problems were already known, we see two main quantum motivations for recovering these bounds via approximate degree. First, there are senses in which approximate degree is a more robust lower bound technique than the adversary method. For example, via Sherstov's pattern matrix method~\cite{Sherstov11Pattern}, any approximate degree lower bound for a Boolean function $f$ can be ``lifted'' to give the same quantum communication lower bound for a related two-party function $F$. Such a generic lifting result is not known for any other general quantum query lower bound technique. Moreover, variants of the polynomial method are capable of proving lower bounds against zero-, small-, and unbounded-error quantum algorithms~\cite{beals2001quantum, BuhrmanCWZ99}, as well as time-space tradeoffs~\cite{KlauckSW07}. Indeed, using the polynomial method, we give weakly-unbounded-error quantum query lower bounds for the hidden string problem (see Corollary~\ref{cor:hs-ue-lb}) that significantly improve over the lower bound implied by the adversary method~\cite{cleve2012reconstructing}.
%In particular, using polynomial method applied to our results, we were able to obtain a new lower bound against unbounded-error quantum algorithms for the Hidden string problem (see Corollary \ref{cor:hs-ue-lb}) which is a significant improvement over the lower bounds implied by the adversary method and \cite{cleve2012reconstructing}. 

Second, we believe that our approximate degree lower bounds shed additional light on what makes the ordered search and hidden string problems hard, and may be more transparent in this regard than existing adversary lower bounds.
%At an extremely high-level, given an oracle identification problem specified by $\{a_1(x), \dots, a_N(x)\}$, we construct a related oracle identification problem $\{y_1(x), \dots, y_m(x)\}$. We show that this new oracle identification is hard, yet oracle access to these functions makes computing all of $a_1, \dots, a_N$ easy. One interpretation of this argument is that the oracle identification problem is hard because the pertinent information encoded about $x$ by the functions $a_1, \dots, a_N$ is too easy for a polynomial/quantum query algorithm to compute itself. 
In particular, our lower bounds show that it is not only hard for quantum algorithms to reconstruct the hidden string $x$, but even to simply compute its parity (a decision problem). The other nearly tight lower bounds for the problems we consider appear to make essential use of the fact that the query algorithm needs to reconstruct all of $x$, and it isn't clear (at least to us) how to adapt them to hold for their decision variants. We believe that the technique we introduce, or at the very least the ``indirect'' method we use to prove our lower bounds, will be more broadly useful in understanding the approximate degree and quantum query complexity of other oracle identification problems.

\subsection{Techniques}

Here we give a brief summary of the ideas behind our lower bound for ordered search. A  more detailed technical overview, including a discussion of how we apply our framework to the hidden string problem, appears in Section~\ref{sec:roadmap}. Full proofs appear in Sections~\ref{sec:OS} and~\ref{sec:HS}.

The first lower bound for quantum ordered search was given by Buhrman and de Wolf~\cite{BuhrmanW99}, who actually showed an $\Omega(\sqrt{n})$ lower bound on its approximate degree.
% \emph{approximate degree} of a decision variant of the ordered search problem. The approximate degree of a function is the least degree of a polynomial that approximates it pointwise to error $1/3$. Since the acceptance probability of a $T$-query quantum algorithm is a polynomial of degree $2T$, the approximate degree of a function is always at most (half of its) quantum query complexity, but it can be much smaller.
%All subsequent lower bounds for quantum ordered search have been proved by different techniques  that more specifically use the structure of quantum query algorithms, namely either the quantum adversary method or Holevo-style information arguments.
The starting point for the proof of our lower bound is their ingenious indirect argument, so let us review it here. %Let $x \in \zbits^n$ for $n = \log N$, and interpret it as the binary representation of a number in $[N] = \{1, \dots, N\}$. For each $i \in [N]$, define $y_i = \GT_i(x)$, which evaluates to $1$ if $i \ge x$ and to $0$ otherwise. These values satisfy $y_1 \le y_2 \le \dots \le y_N$, and moreover, the minimum index $i$ for which $y_i = 1$ is exactly the number encoded by $x$. 
Recall that the ability to solve ordered search on inputs $a_0 \le a_1 \le \dots \le a_{N-1}$ enables recovering the string $x \in \{0, 1\}^n$, where $N = 2^n$, for which every $a_i = \GT_i(x)$. This, in particular, enables the evaluation of any ``hard'' Boolean function of $x$, e.g., its parity. In light of this, define the partial Boolean function  $\OS_N(a_{0}, \dots, a_{N-1}) := \parity(x)$ whenever there exists an $x$ for which $a_i = \GT_i(x)$ for every $i$. Let $p : \zbits^N \to \mathbb{R}$ be a polynomial of degree $d$ approximating $\OS_N$. It is known that every polynomial approximating $\parity$ must have degree $\Omega(n)$, so the goal now is to use this fact to prove a lower bound on the degree of $p$. To do so, we use the additional fact that the functions $\GT_i$ can each be approximated by a degree $O(\sqrt{n})$ polynomial $q_i$ arising from, say, a variant of Grover search. By making $p$ ``robust to noise'' in its input without increasing its degree~\cite{BuhrmanNRW07, sherstov2012making}, we get that the composed polynomial $p(q_{0}(x), \dots, q_{N-1}(x)) \approx \parity(x)$ and has degree $O(d\sqrt{n})$. Now the fact that the approximate degree of $\parity$ is $\Omega(n)$ implies that $d = \Omega(\sqrt{n})$.

In summary, the lower bound for $\OS_N$ follows from the fact that we can express the function $\parity(x) = \OS_N(\GT_{0}(x), \dots, \GT_{N-1}(x))$, where we have a lower bound on the approximate degree of $\parity$ and an \emph{upper bound} on the approximate degree of $\GT$. However, the lower bound gets stuck at degree $\Omega(\sqrt{n})$ because the functions $\GT_i$ themselves require nontrivial degree $O(\sqrt{n})$ to approximate, and this is tight.

To get an improved lower bound of $\widetilde{\Omega}(n)$ on the approximate degree of $\OS_N$, we introduce the following idea to make $\GT$ behave as if it were easier to approximate by low degree polynomials, while preserving the hardness of parity. Given an input $x \in \zbits^n$, we redundantly encode $x$ as a longer string $\Oracle(x) \in \zbits^m$ for some $m = \poly(n)$. This encoding is chosen so that 
\begin{itemize}
	\item Access to $\Oracle(x)$ instead of just $x$ itself makes each function $\GT_i(x)$ approximable by a much lower degree polynomial. That is, for every $i$, there exists a polynomial $q_i$ of degree $\polylog(n)$ such that $q_i(\Oracle(x)) \approx \GT_i(x)$  for every $x$.
	\item Even with access to $\Oracle(x)$, the function $\parity(x)$ remains hard to approximate. That is, for every polynomial $p$ of degree at most $n / \polylog(n)$, we have that $p(\Oracle(x))$ fails to approximate $\parity(x)$.
\end{itemize}
We can now obtain our improved lower bound by applying Buhrman and de Wolf's argument to the redundantly encoded inputs. Specifically, given a robust polynomial $p : \zbits^N \to \mathbb{R}$ of degree $d$ approximating $\OS_N$, we would have $p(q_{0}(\Oracle(x)), \dots, q_{N-1}(\Oracle(x))) \approx \OS_N(\GT_0(x), \dots, \GT_{N-1}(x)) = \parity(x)$ for every $x$. Our upper bound on the degrees of the $q_i$'s, together with our lower bound on the degree needed to approximate $\parity$, imply that $d \polylog n \ge n / \polylog n$, and hence $d \ge \widetilde{\Omega}(n)$.

All that remains is to construct the appropriate encoding $\Oracle$. Our approach is inspired by Nisan's classic randomized \emph{communication} protocol for computing the two-party greater-than function.
%which in turn is a reduction to a protocol for computing equality.
The most helpful way to think about this protocol for our purposes is as follows. Suppose Alice and Bob hold strings $a, b \in \{0, 1\}^n$ and their goal is to determine whether the natural number represented by $a$ is at least that represented by $b$. They may do so by performing binary search to identify the minimum index $j$ for which $a_j \ne b_j$, at which point the answer is determined by which of $a_j$ or $b_j$ is $1$. Each step of this binary search can be conducted by testing the equality of a substring of $a$ with a substring of $b$. Each equality test, in turn, may be performed (with high success probability) by comparing the inner products of $a$ and $b$ with a shared random string. The protocol requires $\log n$ steps of binary search, and each equality test should be repeated $O(\log \log n)$ times to achieve high success probability, giving an overall communication cost of $\widetilde{O}(\log n)$.

Now let us see how to turn this communication protocol into a polynomial approximating $\GT_i(x)$. Think of $x$ as Bob's input to the communication protocol, and of Bob's role as passively computing an encoding $\Oracle(x)$ that consists of many inner products of $x$ with random strings. Now thinking of $i$ as Alice's input, she can compute $\GT_i(x)$ (with high probability) by repeatedly querying $\Oracle(x)$ at the locations that correspond to the appropriate inner products from the protocol described above. This results in a $\widetilde{O}(\log n)$ randomized query algorithm for computing $\GT_i(x)$ from $\Oracle(x)$, the success probability of which is a degree-$\widetilde{O}(\log n)$ polynomial in $\Oracle(x)$.

The final step is to argue that even given $\Oracle(x)$, consisting of many inner products of random strings with $x$, the parity function $\parity(x)$ remains hard to compute. To see why this is true, note that a single inner product of $x$ with a random bit string is itself a parity on a random subset of indices. That is, $\Oracle(x) = (\parity(x|_{S_1}), \dots, \parity(x|_{S_m}))$ for random subsets $S_1, \dots, S_m \subseteq [n]$. The key observation then, is that a degree-$d$ polynomial of these random parities is able to approximate the full $\parity(x)$ if and only if some degree-$d$ polynomial of these random parities \emph{exactly} computes $\parity(x)$, which in turn happens if and only if a symmetric difference of at most $d$ of the sets $S_1, \dots, S_m$ yields the entire set of indices $[n]$. %(We prove this fact from first principles, but it is inspired by the \emph{method of dual polynomials}, a now-standard approach to proving polynomial inapproximability results via linear programming duality.) 
As a result, as long as neither the degree $d$ nor the number of random inner products $m$ is too large, we obtain that $\parity(x)$ cannot be approximated using $\Oracle(x)$.\footnote{In fact, this argument shows that it is impossible to approximate $\parity(x)$ to bounded error, but even to represent it in sign. This corresponds to a \emph{threshold degree} lower bound.}

%Suppose Alice holds a number $i \in [N]$, Bob holds a number with binary encoding $x \in \zbits^n$, and their goal is to determine whether $i \ge x$.

\subsection{Our results in detail}

Recall that we introduce a framework that allows us to prove lower bounds on approximate degree, and hence quantum query complexity. It most naturally applies to decision versions of oracle identification problems, and extends to the ``weakly unbounded error'' setting of error approaching $1/2$. We summarize the results we prove using this framework in Table~\ref{tbl:results}.

%We designed a framework that allows us to obtain lower bounds for approximate degree and, hence, quantum query complexity. It most naturally targets oracle identification problems and allows to prove lower bounds for these problems, even in an unbounded-error regime. The result we obtain using this framework are summarized in Table \ref{tbl:results}.

% Please add the following required packages to your document preamble:
% \usepackage{multirow}
% \usepackage[table,xcdraw]{xcolor}
% If you use beamer only pass "xcolor=table" option, i.e. \documentclass[xcolor=table]{beamer}
\begin{table}[]
\begin{tabular}{|l|l|l|l|l|}
\hline
Problem                                                                     & Model                                                                                                                            & Error    & Previous work                                                                                                          & This work                                                                     \\ \hline
                                                                            &                                                                                                                                  & Unbounded & \begin{tabular}[c]{@{}l@{}}$O(n - \log{\frac{1}{\gamma}})$,\\ $\Omega(\sqrt{n} - \log{\frac{1}{\gamma}})$\end{tabular} & {\color[HTML]{FE0000} $\Omega(\frac{n}{\log^2{n}} - \log{\frac{1}{\gamma}})$} \\ \cline{3-5} 
                                                                            & \multirow{-3}{*}{\begin{tabular}[c]{@{}l@{}}Approximate degree and \\ quantum query complexity,\\ decision version\end{tabular}} & Constant  & $O(n)$, $\Omega(\sqrt{n})$                                                                                             & {\color[HTML]{FE0000} $\Omega(\frac{n}{\log^2{n}})$}                          \\ \cline{2-5} 
                                                                            &                                                                                                                                  & Unbounded & $\Theta(n - \log{\frac{1}{\gamma}})$                                                                                   & {\color[HTML]{333333} $\Omega(\frac{n}{\log^2{n}} - \log{\frac{1}{\gamma}})$} \\ \cline{3-5} 
\multirow{-4}{*}{\begin{tabular}[c]{@{}l@{}}Ordered \\ search\end{tabular}} & \multirow{-2}{*}{\begin{tabular}[c]{@{}l@{}}Quantum query complexity,\\ reconstruction version\end{tabular}}                     & Constant  & $\Theta(n)$                                                                                                            & $\Omega(\frac{n}{\log^2{n}})$                                                 \\ \hline
                                                                            &                                                                                           \multirow{-1.5}{*}{\begin{tabular}[c]{@{}l@{}}Approximate degree and \\ quantum query complexity,\\ decision version\end{tabular}}                                      & \begin{tabular}[c]{@{}l@{}}\vspace{-2.5mm}\\ Unbounded  \\ \vspace{-2.5mm} \end{tabular} & $O(n - \log{\frac{1}{\gamma}})$                                                                                        & {\color[HTML]{FE0000} $\Omega(\frac{n}{\log^2{n}} - \log{\frac{1}{\gamma}})$} \\ \cline{3-5} 
                                                                            &  & Constant   & $O(n)$                                                                                                                 & {\color[HTML]{FE0000} $\Omega(\frac{n}{\log^2{n}})$}                          \\ \cline{2-5} 
                                                                            &                                                                                                                                  & Unbounded  & \begin{tabular}[c]{@{}l@{}}$O(n - \log{\frac{1}{\gamma}})$,\\ $\Omega(\gamma^2 \frac{n}{\log^2{n}})$\end{tabular}      & {\color[HTML]{FE0000} $\Omega(\frac{n}{\log^2{n}} - \log{\frac{1}{\gamma}})$} \\ \cline{3-5} 
\multirow{-6}{*}{\begin{tabular}[c]{@{}l@{}}Hidden \\ string\end{tabular}}  & \multirow{-3}{*}{\begin{tabular}[c]{@{}l@{}}Quantum query complexity,\\ reconstruction version\end{tabular}}                     & Constant   & $O(n)$                                                                                                                 & $\Omega(\frac{n}{\log^2{n}})$                                                 \\ \hline
\end{tabular}
\caption{Summary of our results and prior work.}
\label{tbl:results}
\end{table}

\paragraph{Ordered search.} 
%Ordered search problem was extensively studied in the model of quantum query complexity. 
As mentioned, binary search yields a deterministic algorithm making $n$ queries, which in turn yields a polynomial of degree $n$ that exactly computes $\OS_{2^n}$. To compute this function with error probability $\frac{1}{2} - \gamma$ for some parameter $\gamma > 0$, there is an easy way to modify binary search to obtain an $O(n - \log{\frac{1}{\gamma}})$-query randomized algorithm (see Appendix \ref{sec:unbounded-ub} for details). This implies an upper bound of $O(n - \log{\frac{1}{\gamma}})$ 
 on the approximate degree of $\OS_{2^n}$ with error parameter $1/2 - \gamma$.

Before this work, the best lower bound on approximate degree (for both bounded and unbounded error) was obtained by \cite{BuhrmanW99} and was $\Omega(\sqrt{n} - \log{\frac{1}{\gamma}})$ for approximation to error $\frac{1}{2} - \gamma$. We significantly improve their result and obtain the following lower bound.

\begin{theorem*}
For every natural number $n$ and $0 < \gamma < 1/2$, every polynomial that approximates $\OS_{2^n}$ pointwise to error $\frac{1}{2} - \gamma$ requires degree
$$\Omega\left(\frac{n}{\log^2{n}} - \log{\frac{1}{\gamma}}\right).$$
\end{theorem*}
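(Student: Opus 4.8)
The plan is to follow the indirect route of Buhrman and de Wolf, boosted by a redundant encoding of the input. Recall that $\OS_{2^n}(a_0,\dots,a_{N-1}) = \parity(x)$ whenever $a_i = \GT_i(x)$ for all $i$, where $N = 2^n$. So suppose $p\colon\zbits^N\to\reals$ has degree $d$ and approximates $\OS_{2^n}$ to error $\tfrac12-\gamma$. Fix (for now, abstractly) a redundant encoding $x\mapsto\Oracle(x)$, and suppose that for every $i$ there is a polynomial $q_i$, of degree $T$ in the coordinates of $\Oracle(x)$, with $q_i(\Oracle(x))$ within $\tfrac13$ of $\GT_i(x)$ for all $x$. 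Making $p$ robust to noise in its inputs at the cost of an additive $O(\log(1/\gamma))$ in degree (via~\cite{sherstov2012making}, after a standard truncation to make $p$ bounded), and composing, yields a single polynomial $r(x) := \tilde p\bigl(q_0(\Oracle(x)),\dots,q_{N-1}(\Oracle(x))\bigr)$ that approximates $\parity(x)$ to error strictly below $\tfrac12$, and whose degree in the coordinates of $\Oracle(x)$ is at most $(d + O(\log(1/\gamma)))\cdot T$. The whole proof reduces to designing $\Oracle$ so that (i) each $\GT_i$ is computable from $\Oracle(x)$ by a short randomized query algorithm, making $T$ small, while (ii) $\parity(x)$ cannot be approximated --- indeed not even sign-represented --- by a low-degree polynomial in the coordinates of $\Oracle(x)$.

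For the encoding I would use Nisan's public-coin greater-than protocol, read as a query problem: set $\Oracle(x) = \bigl(\parity(x|_{S_1}),\dots,\parity(x|_{S_m})\bigr)$, with one fingerprint $\parity(x|_S)$ for each prefix length $k\in[n]$ and each of $O(\log\log n)$ repetitions, so $m=\widetilde O(n)$ and $\log m = O(\log n)$. Given $i$, the algorithm binary-searches for the first bit on which the binary encodings of $i$ and $x$ disagree; each of the $O(\log n)$ steps is a prefix-equality test, carried out by reading the $O(\log\log n)$ fingerprints of that prefix length and XORing them with the corresponding fingerprints of $i$ (which it computes itself). This computes $\GT_i(x)$ from $\Oracle(x)$ to error $<\tfrac13$ with $T=\widetilde O(\log n)$ queries --- indeed $O(\log n)$ with a more careful protocol --- so taking $q_i$ to be the acceptance probability works. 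For (ii): passing to $\pm1$ variables, the coordinate $\parity(x|_{S_j})$ becomes the character $\chi_{S_j}$, products of characters correspond to symmetric differences of the index sets, and hence a degree-$D$ real polynomial in the coordinates of $\Oracle(x)$ is, as a function of $x$, a real linear combination of characters $\chi_T$ with $T$ ranging over symmetric differences of at most $D$ of the $S_j$. If $[n]$ is not among these, then $r$ has zero correlation with $\chi_{[n]}=\parity$, contradicting the fact that $r$ approximates $\parity$ to error below $\tfrac12$. So it suffices to choose the $S_j$ so that $[n]$ is not a symmetric difference of any $\le D$ of them; taking each $S_j$ uniformly at random within its prefix, a union bound over the at most $m^{D}$ candidate symmetric differences --- each equal to $[n]$ with probability at most $2^{-n}$ once it involves a full-length fingerprint --- shows a good choice exists whenever $D \le d^* := \Theta(n/\log m) = \Theta(n/\log n)$.

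Putting it together: the composed polynomial $r$ has degree $D = (d + O(\log(1/\gamma)))\cdot\widetilde O(\log n)$ in the coordinates of $\Oracle(x)$ yet approximates $\parity$ to error below $\tfrac12$, so $D > d^* = \Theta(n/\log n)$; this rearranges to $d \ge \widetilde\Omega(n/\log^2 n) - O(\log(1/\gamma))$, where one factor of $\log n$ comes from the depth of the binary search and the other from $\log m$. Tracking the polylogarithmic factors (in particular using the $O(\log n)$-query form of the protocol) gives the stated $\Omega\bigl(n/\log^2 n - \log(1/\gamma)\bigr)$.

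The main obstacle is arranging properties (i) and (ii) for a single encoding: the fingerprints must be structured enough for the binary-search protocol to run --- one per prefix length, with enough repetitions that binary search is reliable --- yet ``generic'' enough, here random within each prefix, that no short symmetric difference of them recovers all of $[n]$. A secondary point needing care is the weakly-unbounded-error bookkeeping: the robustification must be applied with error parameter $\Theta(\gamma)$ so that composition keeps the error bounded away from $\tfrac12$; and because the legal inputs $(\GT_i(x))_i$ to $\OS_{2^n}$ form only a promise set, one must be careful that feeding the near-Boolean values $q_i(\Oracle(x))$ into the robust version of $p$ stays within its noise tolerance.
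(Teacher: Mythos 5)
Your proposal follows the paper's broad strategy --- an indirect Buhrman--de Wolf-type argument, boosted by a redundant encoding of $x$ via Nisan's public-coin greater-than protocol, with parity hardness coming from the observation that a low-degree polynomial in the fingerprints corresponds to short symmetric differences of the index sets. The robustification/composition step, the conversion to $\pm1$ characters, and the symmetric-difference argument against parity are essentially the same as in the paper's Theorem~\ref{thm:gen-lb}.

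However, there is a genuine gap in the construction of the fingerprints and of the polynomials $q_i$. You propose an oracle with only $O(\log\log n)$ fingerprints per prefix length, so $m = \widetilde O(n)$ total, and define $q_i$ as ``the acceptance probability'' of the binary search. But once the index sets $S_1,\dots,S_m$ are fixed, the binary search that reads all $O(\log\log n)$ fingerprints at each level is deterministic: there is no randomness left to take an acceptance probability over. The error bound you cite ($<1/3$) is a bound over the random choice of the $S_j$'s, for each fixed $(i,x)$; it does not produce a single fixed choice of $S_j$'s for which the algorithm is simultaneously correct on \emph{all} pairs $(i,x)$. Indeed, no such fixed choice exists with only $O(\log\log n)$ fingerprints at, say, level $n-1$: there are $\approx 2^{2n}$ pairs of distinct $(n-1)$-bit prefixes, and $O(\log\log n)$ random parities cannot be a perfect hash for them, so some pair will collide and the deterministic binary search will err on the corresponding $(i,x)$. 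This is precisely the quantifier-switching problem --- ``for all $(i,x)$ there is a good $r$'' versus ``there is an $r$ good for all $(i,x)$'' --- that the paper solves with a Newman-style argument: the oracle contains $t = O(n)$ independent packages of fingerprints at each level, the algorithm $A_{(r,i)}$ chooses a package uniformly at random and runs the deterministic binary search using that package alone, $q_{(r,i)}$ is the acceptance probability over that internal randomness, and a Hoeffding plus union bound over the $2^{2n}$ pairs shows that with probability $\ge 2/3$ over the random oracle, every $q_{(r,i)}$ is $1/6$-close to $\GT_i$. This is why the paper's $\R_{\OS++}$ has size $m = O(n^3\log n)$ rather than $\widetilde O(n)$, though since the lower bound only depends on $\log m$ this does not change the final bound. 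Without this device the composed polynomial you write down does not actually exist. Two smaller points: the oracle must also include the $n$ coordinate functions $\one_1,\dots,\one_n$ so the final step of the binary search can read a single bit of $x$ (the paper's $\mathcal B$ block), and the probability a $D$-fold symmetric difference equals $[n]$ should be bounded by $2^{-(n-D)}$ rather than $2^{-n}$ once those deterministic coordinate sets are present, though this does not affect the conclusion.
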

This result is restated as Theorem \ref{thm:os2-lb}. It shows that it is hard to approximate the decision version of the ordered search problem $\OS_{2^n}$ (with $\parity$ as the predicate converting from reconstruction to decision problem) not only to constant error, but even to small advantage $\gamma$ over random guessing. For instance, approximating $\OS_{2^n}$ with advantage $\gamma = 2^{-n^{0.99}}$ still requires degree $\Omega(\frac{n}{\log^2{n}})$. Our lower bound is nearly tight in both the bounded and unbounded error regimes.

\paragraph{Query complexity of ordered search.} %As we already mentioned, Ordered Search problem was extensively studied in the quantum query model with bounded error probability, and tight bounds for the problem were achieved, see ~\cite{FarhiGGS99, ChildsLP07, BenOrH08, BuhrmanW99, FarhiGGS98, Ambainis99, HoyerNS02, ChildsL08}. 
Most previous work on the quantum query complexity of ordered search addressed the bounded error regime and the reconstruction version of the problem, where the goal is to output the entire string $x$, rather than a specific Boolean predicate applied to $x$. To our knowledge, the best prior lower bound for the decision version of ordered search with unbounded error follows from \cite{BuhrmanW99} as described above and is $\Omega(\sqrt{n} - \log{\frac{1}{\gamma}})$. Note also that the $\Omega(n)$ lower bound of \cite{Ambainis99}, stated there for constant error, also generalizes to a tight lower bound $\Omega(n - \log{\frac{1}{\gamma}})$ for unbounded error, but it appears to hold only for the reconstruction version of ordered search.

Our application of the polynomial method implies a nearly tight quantum query lower bound that applies to the decision version of the problem.

%By the polynomial method, our result for approximate degree implies a worse lower bound but for a decision version of the problem.

\begin{corollary} \label{cor:os-ue-lb}
    Every quantum algorithm that computes $\OS_{2^n}$ (decision version with $\parity$) with probability of error at most $\frac{1}{2} - \gamma$ requires $\Omega(\frac{n}{\log^2{n}} - \log{\frac{1}{\gamma}})$ queries.
\end{corollary}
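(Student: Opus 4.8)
The plan is to derive this corollary directly from the approximate degree lower bound of Theorem~\ref{thm:os2-lb} via the polynomial method of Beals et al.~\cite{beals2001quantum}. Recall the basic fact underlying that method: if $Q$ is a $T$-query quantum algorithm acting on inputs $a \in \zbits^N$, then the probability that $Q$ accepts is exactly a real multilinear polynomial in $a_0, \dots, a_{N-1}$ of degree at most $2T$. This holds regardless of any promise on the input, so it applies verbatim to partial functions such as the decision version of $\OS_{2^n}$.

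First I would instantiate this with $Q$ a quantum algorithm that computes $\OS_{2^n}$ (decision version with $\parity$) with probability of error at most $\frac{1}{2} - \gamma$, using $T$ queries. Let $p : \zbits^N \to \reals$ be its acceptance-probability polynomial, so $\deg(p) \le 2T$. By the error guarantee, for every input $a = (\GT_0(x), \dots, \GT_{N-1}(x))$ in the promise set of $\OS_{2^n}$ we have $\bigl| p(a) - \OS_{2^n}(a) \bigr| \le \tfrac{1}{2} - \gamma$ (reading the output as $\{0,1\}$-valued, or after an affine rescaling to $\bits$). Thus $p$ is a pointwise approximation of $\OS_{2^n}$ to error $\frac{1}{2} - \gamma$, and hence $2T \ge \deg(p)$ is at least the degree required to approximate $\OS_{2^n}$ pointwise to error $\frac{1}{2} - \gamma$.

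Then I would invoke Theorem~\ref{thm:os2-lb}, which lower bounds exactly this quantity by $\Omega\bigl(\tfrac{n}{\log^2 n} - \log\tfrac{1}{\gamma}\bigr)$. Combining the two inequalities gives $2T = \Omega\bigl(\tfrac{n}{\log^2 n} - \log\tfrac{1}{\gamma}\bigr)$, and dividing by $2$ yields $T = \Omega\bigl(\tfrac{n}{\log^2 n} - \log\tfrac{1}{\gamma}\bigr)$, which is the claimed bound. (The same argument, or the trivial reduction that reconstruction is at least as hard as the decision version, also yields the corresponding reconstruction-version entries of Table~\ref{tbl:results}.)

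I do not expect a genuine obstacle here, since the corollary is essentially a packaging of Theorem~\ref{thm:os2-lb} through a standard tool. The only points warranting a line of care are that the polynomial method is being used for a partial function and in the weakly-unbounded-error regime: the degree-$2T$ bound on the acceptance-probability polynomial holds for all inputs irrespective of the promise, and ``error at most $\frac{1}{2} - \gamma$'' is precisely ``$(\tfrac{1}{2}-\gamma)$-pointwise approximation on the promise set,'' matching the error parameter in the theorem. Both observations are routine, so the proof is short.
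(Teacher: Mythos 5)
Your proposal is correct and is exactly the intended derivation: the paper gives no separate proof of Corollary~\ref{cor:os-ue-lb}, presenting it as an immediate consequence of Theorem~\ref{thm:os2-lb} via the polynomial method of~\cite{beals2001quantum}. The one small point you could state explicitly (though it is automatic) is that the acceptance probability of a quantum algorithm lies in $[0,1]$ on \emph{all} Boolean inputs, so the resulting polynomial also satisfies the boundedness-outside-the-promise requirement in the paper's definition of $\eps$-approximation for partial functions.
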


\paragraph{Hidden string.} 
The work of \cite{SkienaS95} yields a simple deterministic algorithm making $O(n)$ queries, which in turn yields a polynomial of degree $O(n)$ that exactly computes $\HS_{2^{n+1} - 1}(\dots, \phi_s(x), \dots) := \parity(x)$ where $x \in \{0, 1\}^n$ is the hidden string in question. Again, this algorithm can be modified to get a $O(n - \log{\frac{1}{\gamma}})$-query algorithm with error $1/2-\gamma$ (see Appendix \ref{sec:unbounded-ub} for details). This implies an upper bound $O(n - \log{\frac{1}{\gamma}})$ on the approximate degree of $\HS_{2^{n+1}-1}$.

We give the first lower bound on the approximate degree of the hidden string problem:

%\NV{Change the statement in the same way as OS above}
\begin{theorem*} 
For every natural number $n$ and $0 < \gamma < 1/2$, every polynomial that approximates $\HS_{2^{n+1}-1}$ to error $\frac{1}{2} - \gamma$ requires degree

$$\Omega\left(\frac{n}{\log^2{n}} - \log{\frac{1}{\gamma}}\right).$$

\end{theorem*}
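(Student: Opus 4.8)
The plan is to follow the same ``indirect'' template used for $\OS_{2^n}$, but with the greater-than functions $\GT_i$ replaced by the substring-indicator functions $\phi_s$. Recall the structure: to solve the hidden string problem is to reconstruct $x \in \zbits^n$, so in particular $\HS_{2^{n+1}-1}(\dots,\phi_s(x),\dots) = \parity(x)$ as a partial function. Given a degree-$d$ polynomial $p$ approximating $\HS_{2^{n+1}-1}$ to error $\tfrac12-\gamma$, I first make it \emph{robust} to noise in its inputs without increasing its degree, using the robustification results of \cite{BuhrmanNRW07, sherstov2012making}. Then, as in the ordered search argument, I would use a redundant encoding $\Oracle(x) \in \zbits^m$, $m = \poly(n)$, built out of inner products $\parity(x|_{S_j})$ of $x$ with random subsets $S_j \subseteq [n]$, together with a randomized query algorithm --- derived from Nisan's communication protocol for equality/greater-than --- of cost $\widetilde O(\log n)$ that computes each $\phi_s(x)$ from $\Oracle(x)$. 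Its acceptance probability is a degree-$\widetilde O(\log n)$ polynomial $q_s(\Oracle(x)) \approx \phi_s(x)$. Composing, $p(\dots,q_s(\Oracle(x)),\dots)$ approximates $\parity(x)$ and has degree $d \cdot \widetilde O(\log n)$; the parity-hardness-under-$\Oracle$ fact (a symmetric difference of few $S_j$'s cannot cover $[n]$ when $d$ and $m$ are both small) forces $d \cdot \polylog n \ge n/\polylog n$, hence $d \ge \widetilde\Omega(n)$. The $-\log\tfrac1\gamma$ term enters exactly as in the $\OS$ case, via the standard fact that error-$(\tfrac12-\gamma)$ approximation costs at least (approximate-degree-to-constant-error) $- O(\log\tfrac1\gamma)$ in degree.

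The one genuinely new ingredient, and the step I expect to be the main obstacle, is showing that each substring-indicator $\phi_s(x)$ can be approximated by a $\polylog(n)$-degree polynomial in $\Oracle(x)$ --- i.e., finding the right communication primitive. Whether $s$ (of length $\ell \le n$) occurs as a substring of $x$ is the OR over all $n-\ell+1$ alignments $t$ of the predicate ``$x$ restricted to positions $t,\dots,t+\ell-1$ equals $s$.'' Each such equality test is exactly an $\EQ$ instance between a (known-to-Alice) string $s$ and a substring of the (Bob-side) hidden $x$, and Nisan-style equality testing reduces this to a handful of inner products $\parity(x|_{S})$ where $S$ is a random subset intersected with the alignment window --- precisely the coordinates we put into $\Oracle(x)$. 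So the query algorithm for $\phi_s$ is: for each of the $\le n$ alignments, run $O(\log n)$ repetitions of the inner-product equality test (for total failure probability $\ll 1/n$ per alignment after amplification), and OR the outcomes; this is $\widetilde O(n)$ queries to $\Oracle$, not $\polylog(n)$. The fix --- and this is the crux --- is that we do not need a \emph{query-efficient} algorithm for $\phi_s$; we need a \emph{low-degree approximating polynomial}. An OR of $n$ bits has approximate degree $O(\sqrt n)$, which is still too large. Instead, following the technical overview's allusion, one should observe that the $n$ alignment-equality tests are not arbitrary: consecutive windows overlap heavily, and one can organize the search for a matching alignment as a $\log n$-depth process (binary-search / divide-and-conquer over the position of the leftmost match, mirroring Nisan's $\GT$ protocol), so that only $\widetilde O(\log n)$ inner-product queries are ever made along any branch. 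I would need to verify carefully that this recursive structure genuinely computes $\phi_s$ and that its success probability, as a polynomial in $\Oracle(x)$, has degree $\widetilde O(\log n)$ --- this is where the Skiena--Sundaram / Cleve-et-al.\ combinatorial insight about reconstructing strings from substring queries would be repurposed.

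Granting that primitive, the rest is bookkeeping that parallels the $\OS$ proof almost verbatim. I would: (i) fix the number of random sets $m$ and the error budget so that the composed polynomial $p(\dots,q_s(\Oracle(x)),\dots)$ has error bounded away from $\tfrac12$ by $\Omega(\gamma)$ for every $x \in \zbits^n$ (here robustness of $p$ absorbs the $q_s$ approximation errors, and a union bound over the $\poly(n)$ values of $s$ controls the failure probability of the $\Oracle$-based tests); (ii) invoke the lemma (the threshold-degree statement flagged in the footnote) that no degree-$d$ polynomial in $(\parity(x|_{S_1}),\dots,\parity(x|_{S_m}))$ even sign-represents $\parity(x)$ unless $d$ and $m$ are large --- concretely, unless some symmetric difference of $\le d$ of the $S_j$ equals $[n]$, an event that a suitable random choice of the $S_j$ avoids with positive probability when $m \cdot \binom{m}{\le d}$-type counting is dominated; and (iii) combine $\deg(q_s) = \widetilde O(\log n)$ with the resulting lower bound $\widetilde\Omega(n)$ on the degree needed in $\Oracle(x)$ to conclude $d = \widetilde\Omega(n/\log^2 n)$, then subtract the $O(\log\tfrac1\gamma)$ term from the error reduction. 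The statement is then Theorem~\ref{thm:os2-lb}'s analogue for $\HS$; the only place one must be more careful than in $\OS$ is tracking the exact dependence of $m$ and the number of substrings $2^{n+1}-1$ on $n$ to be sure the counting argument in step (ii) still goes through and that $m$ stays $\poly(n)$.
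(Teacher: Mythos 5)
Your plan hits a wall exactly where you suspect it might, and the paper avoids that wall by taking a different route that you don't anticipate. The step you flag as ``the crux'' --- approximating the \emph{unanchored} substring indicator $\phi_s(x) = \OR_{i}\,\phi_{i,s}(x)$ by a $\polylog(n)$-degree polynomial in $\Oracle(x)$ --- does not go through via the binary-search-over-alignments idea you propose. Unlike ordered search, where $\GT_i(x)$ is monotone in $i$ and so admits a binary search for the threshold, the events ``$s$ matches $x$ at position $i$'' have no monotone or nested structure in $i$: a mismatch at alignment $i$ tells you nothing about alignment $i'\neq i$, and overlapping windows do not give a divide-and-conquer recursion of depth $\log n$. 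So the OR over $\approx n$ alignments really does seem to require either $\Omega(\sqrt n)$ degree (for approximate OR) or $\widetilde\Omega(n)$ queries (for a naive union bound), either of which destroys the final bound. The Skiena--Sundaram/Cleve et al.\ insight you gesture at is not used to make $\phi_s$ cheap to evaluate from $\Oracle$; it is used in a completely different place.

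What the paper actually does is introduce the \emph{anchored} hidden string function $\AHS$, whose queries are $\phi_{i,s}(x)$ for a given position $i$ and pattern $s$. Each such query is a \emph{single} equality test between $s$ and a fixed window of $x$, which a constant number (four) of inner-product queries to $\Oracle(x)$ suffices to approximate --- no OR, no binary search, degree $O(1)$. Running the $\CRAP$ framework on $\AHS$ with these degree-$O(1)$ inner polynomials gives an $\Omega(n/\log n)$ lower bound for $\AHS$ (Theorem~\ref{thm:ahs-lb}). The passage from $\AHS$ to $\HS$ is then a black-box reduction (Claim~\ref{clm:ahs-hs}, due to \cite{cleve2012reconstructing}): an approximating polynomial for $\HS_{2^{n'+1}-1}$ of degree $d(n')$ yields one for $\AHS_{2^{n+2}-n-4}$ of degree $2d(10n\log n)$, by running the $\HS$ polynomial on an $O(n\log n)$-bit padded/marked encoding of $x$ whose unanchored substring queries can each be answered by $O(1)$ anchored queries on $x$. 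Plugging $n' = 10n\log n$ into $\Omega(n/\log n)$ produces the extra $\log n$ in the denominator, giving $\Omega(n'/\log^2 n')$ for $\HS$, with the $-\log\frac1\gamma$ term handled as in the ordered-search case. So: your surrounding scaffolding (robustification, the $\CRAP$/symmetric-difference argument, the $-\log\frac1\gamma$ accounting) all matches the paper, but the centerpiece --- how to handle the OR over alignments --- is not a clever protocol for $\phi_s$; it is a reformulation of the problem ($\AHS$) followed by a problem-level reduction, which is why the hidden string bound has a $\log^2 n$ loss where the anchored version has only $\log n$.
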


This result is restated as Corollary \ref{cor:hs-lb}, and gives a nearly tight lower bound for approximating the decision version of $\HS_{2^{n+1} - 1}$ to both constant and weakly unbounded error. %(with $\parity$ as the predicate converting from reconstruction to a decision problem) is hard not only to constant error, but even approximating with very small advantage $\gamma$, e.g. $\gamma = 2^{-n^{.99}}$ against random guessing requires degree $\Omega(\frac{n}{\log^2{n}})$. It is almost tight for both bounded and unbounded error regimes.

\paragraph{Query complexity of hidden string.} %The hidden string problem was studied in the model of query complexity. 
%The work of \cite{SkienaS95} yields a deterministic algorithm making $O(n)$ queries. 
Complementing the $O(n)$-query deterministic algorithm of~\cite{SkienaS95}, it turns out that a constant-factor speedup is possible for quantum algorithms \cite{cleve2012reconstructing}.  As for lower bounds, the latter work shows a lower bound $\Omega(\frac{n}{\log^2{n}})$ on reconstruction by adversary method. This lower bounds holds for bounded error, but does not generalize well to unbounded error regime. (By \cite{barnum2001quantum,hoyer2007negative}, the same proof implies a lower bound of $\Omega(\gamma^2 \frac{n}{\log^2{n}})$ for solving the reconstruction version of hidden string with error $\frac{1}{2} - \gamma$.)

Our approximate degree lower bound recovers their lower bound for bounded error, and gives a significantly stronger lower bound for the weakly unbounded error regime, both for the decision version of the problem.
 
\begin{corollary} \label{cor:hs-ue-lb} 
    Every quantum algorithm that computes $\HS_{2^{n+1}-1}$ (decision version with $\parity$) with probability of error at most $\frac{1}{2} - \gamma$ requires $\Omega(\frac{n}{\log^2{n}} - \log{\frac{1}{\gamma}})$ queries.
\end{corollary}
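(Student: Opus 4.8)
The plan is to obtain this corollary immediately from the approximate degree lower bound of Corollary~\ref{cor:hs-lb} via the polynomial method of Beals et al.~\cite{beals2001quantum}. Recall that this method shows that for any quantum algorithm making $T$ queries to an input $z \in \zbits^{N}$, the acceptance probability $P(z) := \Pr[\text{algorithm outputs } 1]$ is a real multilinear polynomial in $z_1, \dots, z_N$ of degree at most $2T$. Crucially, this degree bound depends only on the number of queries and not at all on the success probability, so it is exactly the tool needed in the weakly-unbounded-error regime; and it applies verbatim to partial functions, since $P$ is a globally defined polynomial on $\zbits^{N}$ that one then restricts to the promise set.

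First I would fix a quantum algorithm that computes $\HS_{2^{n+1}-1}$ (decision version with the $\parity$ predicate) with probability of error at most $\frac{1}{2}-\gamma$ using $T$ queries, and let $P : \zbits^{N} \to [0,1]$ with $N = 2^{n+1}-1$ be its acceptance probability polynomial, of degree at most $2T$. Unpacking what it means to compute the decision version with this error: for every hidden string $x \in \zbits^n$, running the algorithm on the substring-oracle encoding $z = (\phi_s(x))_s$ outputs $\parity(x)$ with probability at least $\frac{1}{2}+\gamma$, i.e., $P(z) \ge \frac{1}{2}+\gamma$ when $\parity(x)=1$ and $P(z) \le \frac{1}{2}-\gamma$ when $\parity(x)=0$. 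Equivalently, $|P(z) - \HS_{2^{n+1}-1}(z)| \le \frac{1}{2}-\gamma$ for every $z$ in the promise set, so $P$ is a polynomial approximating $\HS_{2^{n+1}-1}$ pointwise to error $\frac{1}{2}-\gamma$ in precisely the sense of Corollary~\ref{cor:hs-lb}.

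Then I would invoke Corollary~\ref{cor:hs-lb}, which says any such polynomial has degree $\Omega\!\left(\frac{n}{\log^2 n} - \log\frac{1}{\gamma}\right)$. Combining with $\deg P \le 2T$ gives $2T = \Omega\!\left(\frac{n}{\log^2 n} - \log\frac{1}{\gamma}\right)$, hence $T = \Omega\!\left(\frac{n}{\log^2 n} - \log\frac{1}{\gamma}\right)$, as claimed. (The same two-line argument proves Corollary~\ref{cor:os-ue-lb} from Theorem~\ref{thm:os2-lb}.)

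I do not anticipate a genuine obstacle: all of the substantive work lies in establishing the approximate degree lower bound Corollary~\ref{cor:hs-lb}, and this corollary is just the standard lift from approximate degree to quantum query complexity. The only two points worth a sentence of care are (i) noting that the Beals et al.\ degree bound is error-independent, so it is legitimate to apply it with error $\frac{1}{2}-\gamma$ rather than $\frac{1}{3}$, and (ii) noting that it applies to the partial (promise) function $\HS_{2^{n+1}-1}$ because the acceptance-probability polynomial is defined on all inputs and need only satisfy the approximation inequality on the promise set.
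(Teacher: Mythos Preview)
Your proposal is correct and is exactly the (implicit) argument the paper relies on: the corollary is stated without proof because it follows immediately from Corollary~\ref{cor:hs-lb} via the polynomial method of~\cite{beals2001quantum}. One tiny sharpening of your point (ii): the paper's definition of $\adeg_\eps$ for partial functions also requires the approximator to be bounded off-promise, but since you already noted $P:\zbits^N\to[0,1]$ is an acceptance probability this condition is automatic.
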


\subsection{Further discussion}

One of our initial motivations for studying the approximate degree of ordered search came from the preliminary version of Chattopadhyay et al.~\cite{ChattopadhyayKLM17}. They showed that $\OS_N \circ \IP_m^N$ has randomized communication complexity $\Omega(\log N \cdot m)$, where $\IP_m$ is a two-party inner product (mod 2) gadget on $m$-bit inputs. This was done via an involved simulation argument, showing how a communication protocol for $\OS_N \circ \IP_m^N$ could be used to construct a randomized decision tree for $\OS_N$. The techniques were specialized to the both the outer function and the inner function. Subsequent work~\cite{ChattopadhyayFKMP21} recovered this result using a generic simulation theorem. A direct application of Sherstov's pattern matrix method~\cite{Sherstov11Pattern}  to our result yields a \emph{quantum} communication lower bound of $\Omega(\log N / \log^2 \log N)$ on $\OS_N \circ g^N$ even for a constant-sized gadget $g$.

%$\OS$ is a special case of the ``oracle identification'' problem. In this problem, there is a hidden string $x \in \zbits^n$ and the goal is to recover $x$ using a minimal number of queries to $f_1(x), \dots, f_N(x)$ for a family of functions $f_1, \dots, f_N$.
Hoza~\cite{Hoza17} used ideas conceptually related to ours to nearly recover the known quantum query (but not approximate degree) lower bound for ordered search. Roughly, he used a Holevo-information argument to show that if an oracle identification problem specified by functions $a_1, \dots, a_N$ can be solved with $T$ quantum queries, then $Q^*(A) \cdot T \gtrsim n$, where $A(i, x) = a_i(x)$ and $Q^*$ is the bounded-error two-party quantum communication complexity with shared entanglement. His quantum query lower bound for ordered search follows directly from the fact that the quantum communication complexity of the two-party greater-than function $\GT$  on $n$-bit inputs is $O(\log n)$. However, without opening up the communication protocol for $\GT$ as we do, it is not clear how to recover an approximate degree lower bound from his construction.

The idea of indirectly proving approximate degree lower bounds by combining a lower bound for one problem with an upper bound for another also appears in~\cite{Ben-DavidBGK18}. They gave a tight lower bound on the approximate degree of any function of the form $f \circ g^n$ where $f$ is an $n$-input symmetric function by combining a known lower bound for $\parity \circ g^n$~\cite{Sherstov12} with a quantum query and approximate degree upper bound for the combinatorial group testing problem~\cite{Belovs15}.

We believe it should be possible to extend our techniques to prove new lower bounds for other oracle identification problems. A family of special cases of oracle identification is captured by the symmetric junta learning problem~\cite{AmbainisM14}. Here, there is a symmetric function $h : \zbits^k \to \zbits$ and each $f_S$ takes the form $f_S(x) = h(x |_S)$. An important instance of this problem is the combinatorial group testing problem, wherein one takes $h = \OR_k$. Belovs gave a tight upper bound of $O(\sqrt{k})$~\cite{Belovs15} for this problem. He also determined the query complexity for $h = \mathsf{EXACT}-\mathsf{HALF}$ to be $\Theta(k^{1/4})$ and gave an upper bound of $O(k^{1/4})$ for $h = \MAJ$. These upper bounds were also (nearly) recovered algorithmically by Montanaro and Shao~\cite{MontanaroS20}. Despite its similarity to $\mathsf{EXACT}-\mathsf{HALF}$, no polynomial lower bound is known for the majority function $\MAJ$.

In the counterfeit coin problem, there is a hidden string $x \in \bits^n$ with Hamming weight at most $k$. A query is parameterized by a balanced (i.e., having an equal number of $1$'s and $-1$'s) string $y \in \{-1, 0, 1\}^n$, and indicates whether $\langle x, y \rangle$ is zero or non-zero. Iwama et al.~\cite{IwamaNRT12} gave a quantum algorithm making $O(k^{1/4})$ queries and conjectured this is tight, but no lower bound is known. Note that the oracle here is quite similar to the $\mathsf{EXACT}-\mathsf{HALF}$ oracle.

%In the string matching problem, there is a hidden string $x$ and a query is a specified by a candidate substring $y$. There is an upper bound of $\approx 3n / 4$ quantum queries, and the best lower bound \cite{cleve2012reconstructing} is $\Omega(n / \log^2 n)$. Our techniques seem to be able to reprove this lower bound. (We are working out the details.)

%Another special case of oracle identification is the hidden shift problem. For a known function $f$, a query is parameterized by $y \in \bits^n$ and the answer is $f(x \cdot y)$. The query complexity of this problem ranges from $1$ when $f$ is bent all the way to $\Omega(2^{n/2})$ when $f$ is a point function.

\section{Technical ideas}\label{sec:roadmap}

\subsection{Our lower bound framework}

We begin with a somewhat more abstract description of our framework for proving approximate degree lower bounds for oracle identification problems. The main idea is to provide additional information about the hidden input to an oracle identification problem so as to selectively affect the ability of quantum query algorithms and approximating polynomials to compute the functions we wish to understand.
%We introduce the idea of providing some additional information about the input along with the input itself in the quantum query/approximating polynomials models and investigate how helpful this information could be for computing different functions. 
%More specifically, we show that some about the input string $x$ computing $\parity(x)$ remains relatively hard even given this additional information. On the other hand, we show that the same type of information could be useful to compute other functions of the input more efficiently. By using these observations we construct a framework for proving lower bounds on the approximate degree of some functions and, as a corollary, for proving lower bounds on the quantum query complexity of these functions.

%Consider the following class of problems in the quantum query model. Let the oracle have some hidden input $x$. It can answer any of the queries about this input from some set of valid queries $\{a_1(x), a_2(x), \ldots a_N(x)\}$. The goal is to construct an algorithm that identifies the hidden input $x$ while minimizing the number of queries made to the oracle. 

%The paragraph above describes a class of problems called ``oracle identification problems''. In this work, we consider the decision versions of these problems, in which the algorithm needs to identify the parity of the hidden input. For further discussion, it would be more convenient to reformulate the same class of problems in the polynomial setting.

Recall that an oracle identification problem is specified by a family of functions $a_1, \dots, a_N$. Given query access to the values $a_1(x), \dots, a_N(x)$, the goal in our decision problems is to compute the function $\parity(x)$. Suppose that we may identify $\parity(x) = f(a_1(x), \dots, a_N(x))$ for some function $f$. If we can construct a function $\Oracle$ such that:

%Consider the class of functions $f(a_1, a_2, \ldots, a_N)$ that could be presented as $\parity(x)$ for some $x$, and all of the inputs $a_1, a_2, \ldots, a_N$ could be presented as functions of the same $x$ as well: $a_1 = a_1(x), a_2 = a_2(x), \ldots, a_N = a_N(x)$. The goal is to construct a polynomial that approximates the function $f$ while minimizing the degree of the polynomial.

%We show that for several functions $f$ there exists a function $\Oracle$ that depends on $x$ such that:

\begin{itemize}
    \item Given $\Oracle(x)$, every function $a_i(x)$ can be computed by a low-degree polynomial, but
    \item Given $\Oracle(x)$, computing the parity of $x$ requires a high-degree polynomial,%Moreover, we show that parity remains hard given any of $\Oracle$ from a wide class of functions $\{\Oracle\}$.
\end{itemize}

Then by combining these two statements, we see that the function $f(a_1, \dots, a_N)$ itself requires a high-degree polynomial.
%We use this framework to prove a lower bound for the $\OS$ function and for the $\AHS$ function. 
We apply this framework taking $f$ to be either the $\OS$ function or for the ``anchored hidden string'' $\AHS$ function. The latter also implies a lower bound for the original (decisional) hidden string function $\HS$ described in the introduction. %and lower bounds on quantum query complexity for both $\AHS$ and $\HS$.

In the following sections, we describe the main technical ideas that go into the proofs of our lower bounds. In order to provide more intuition about the structure of $\Oracle$, we describe the steps of constructing it for $\OS$ in detail before returning to the generalized framework.

\subsection{Ordered search lower bound}

First, notice that $\OS_N$ has the structure of an oracle identification problem since 
$$\OS_N(\GT_{0^n}(x), \GT_{0^{n-1}1}(x), \ldots, \GT_{1^n}(x)) = \parity(x)$$
where $N = 2^n$ and $\GT_i(x) = 1$ if and only if $x \leq i$ where $i, x \in \{0, 1\}^n$ if compared as numbers written in binary notation.

%Now, for the sake of convenience, let's consider this problem in the randomized query access setting. In this setting, the algorithm is given oracle access to the input and the goal is to compute the desired function while minimizing the number of queries to the oracle. This setting is closely related to the real-valued polynomial approximation since the existence of any randomized or quantum query algorithm computing a function $f$ with few queries implies the existence of a low-degree polynomial that approximates $f$.

%\NV{Mark, please look at this part (restructured). From here...}

We want to show that there exists a function $\Oracle$ of $x$ that we think of as revealing partial information about $x$ such that:

\begin{itemize}
    \item On one hand, for all $i \in \{0, 1\}^n$ there is an algorithm that makes a small number of queries to $\Oracle$ and can identify the value of $\GT_i(x)$ with constant probability of success. Note that a query-efficient algorithm automatically gives rise to a low-degree approximating polynomial.
    \item On the other hand, approximating the value of $\parity(x)$ given $\Oracle$ with any probability of success requires a lot of queries to $\Oracle$. Let us denote this auxiliary problem by $\CRAP(\Oracle) := \parity(x)$.
\end{itemize}
It is helpful to think of $\Oracle$ itself as an oracle, whose output is given to a polynomial or to a query algorithm, whose goal is then to compute some other function of $x$. We describe how we construct oracle $\Oracle$ through several attempts. 

Let us first focus on constructing an oracle $\Oracle$ that meets the first condition. To do so, we can use the idea behind the $O(\log{n} \log{\log{n}})$-bit communication protocol\footnote{A more efficient $O(\log n)$-bit communication protocol is known and underlies our sharpest result for ordered search. We discuss it in Sections~\ref{sec:roadmap-improved} and~\ref{sec:OS}.} for the two-party communication problem $\GT$ to obtain an efficient randomized query algorithm for every function $\GT_i$. In the $\GT$ communication problem, Alice and Bob both get a string of $n$ bits and the goal is to decide if the number represented by Alice's string is greater than the number represented by Bob's string. 

In this randomized communication protocol for $\GT$, Alice checks if the first halves of the inputs are equal and depending on the answer, she either recursively continues on the first halves of the inputs or the second halves. By doing so, she finds the most significant bit where the inputs differ. To perform each equality check, both Alice and Bob compute the inner products modulo 2 of each of the inputs with the same set of some $\alpha$ (publicly) random strings, Bob sends his values to Alice, and Alice compares these values to the values she obtained. If the original values were equal, then the inner products will be always equal, and otherwise, at least one pair of inner products will be unequal with high probability for sufficiently large $\alpha$. This elementary operation (i.e., the ability to compute inner products with random strings) will be exactly what we want our oracle $\Oracle$ to be useful for.

\paragraph{First attempt.}
We will eventually give a randomized construction of the oracle $\Oracle$, and to this end, think of it as taking as input both the hidden string $x$ and a random input $r$. Let $\Oracle(r, x)$ be a function that takes a collection of $m$ $n$-bit strings $r \in \times_{i \in [m]} (\{0, 1\}^n)$ and $x \in \{0, 1\}^n$, and outputs $m$ bits, each representing the inner product of  $r_i$ with $x$: $(\Oracle(r, x))_i = \langle r_i, x\rangle$. %For simplicity, we will defer details about the structure of the strings $r_i$ until describing the final construction. %For the sake of keeping this intuition explanation simple, we are not fixing the size and the type of the collection, as well as the type of $\Oracle$, until the final construction.

Our first attempt, however, will make no use of randomness at all. Let us consider $\Oracle(r, x)$ where $r$ consists of all possible strings of length $n$. That is, the output of the oracle consists of $\langle x, r_i \rangle$ for every $r_i \in \{0, 1\}^n$.

%\NV{...to here.}

%Let's assume our query algorithm $C_{(i)}$ knows $i$ and, given oracle access to $\Oracle(r, x)$ where $r$ is a collection of all $n$-bit strings, needs to output the value of $\GT_i(x)$ with high probability. 
Let us now see how to construct a query algorithm $C_i$ that, given oracle access to $\Oracle(r, x)$, computes $\GT_i(x)$ with high probability. This algorithm emulates Alice's side in the communication protocol, fixing her input to $i$. It samples random strings used in the communication protocol, computes the inner products of $i$ with these random strings on its own, and asks the oracle (emulating Bob) for the inner products of $x$ with the same random strings.

From the correctness of the communication protocol for $\GT$ we can conclude that for all $x, i \in \{0, 1\}^n$

$$\Pr_{{r_1\ldots, r_{\alpha \log{n}}}}[C_{i}(\Oracle(r, x)) \neq \GT_i(x)] < \log{n} \cdot 2^{-\alpha}$$
where ${r_1\ldots, r_{\alpha \log{n}}}$ are the strings that $C_{i}$ sampled during the run, and $r$ is a collection of all $n$-bit strings. The number of queries is $\alpha \log{n}$.

Thus we see that this oracle satisfies the first condition: it helps to compute the $\GT_i$ efficiently for every $i$ and $x$. But now there is a problem with the second condition: $\parity(x) = \CRAP(\Oracle)$ can be computed easily since $\parity(x) = \CRAP(\Oracle(r, x)) = \langle x, 1^n\rangle$. So there is a 1-query algorithm (and hence a degree-1 polynomial) that exactly  computes $\CRAP(\Oracle(r, x))$, violating our second condition.

\paragraph{Second attempt.} Our goal now is to reduce the efficacy of the oracle $\Oracle$ in terms of how well it can be used by low-degree polynomials to approximate $\CRAP$. To do this, we instead consider a distribution over the potential oracles defined by the collection of strings used in the protocol. Let $r$ denote a sequence of the random strings that could appear in one run of $\GT$ protocol described earlier. Let $\hat{\R}$ denote the set of all such sequences.
This allows us to define a distribution of oracles $\OracleWithParameter{\hat{\R}}(r, x)$, where $r \from \hat{\R}$, and for us to consider a deterministic query algorithm. Let $B_{(r, i)}$ be a deterministic algorithm that is given access to the $\OracleWithParameter{\hat{\R}}(r, x)$ where $r \from \hat{\R}$ is chosen uniformly at random, and which has the realization of $r$ and $i$ hardcoded into it. This algorithm is able to emulate the communication protocol (and the algorithm $C_{i}$), but now each time it needs a random string, it uses one provided in $r$.

From the correctness of the communication protocol for $\GT$ we again can conclude that for all $x, i \in \{0, 1\}^n$

$$\Pr_{r \from \hat{\R}}[B_{(r, i)}(\OracleWithParameter{\hat{\R}}(r, x)) \neq \GT_i(x)] < \log{n} \cdot  2^{-\alpha}.$$
So, with high probability, $B_{(r, i)}$ computes $\GT_i(x)$ over the choice of the oracle $\OracleWithParameter{\hat{\R}}(r, x)$ for $r \from \hat{\R}$. 

Does this new oracle satisfy the second condition? Now an approximation to $\CRAPWithParameter{\hat{\R}}(\OracleWithParameter{\hat{\R}}(r, x))$ needs to approximate $\parity(x)$ when given a set of random parities from $\hat{\R}$. Indeed, we show this requires high degree, as a consequence of the fact that high degree polynomial is necessary to construct the full parity of $x$ from random parities. 

However, we need to add one more improvement to our structure. For every fixed $i, x$, the algorithm $B_{(r, i)}$ when run on $\OracleWithParameter{\hat{\R}}(r, x)$ computes $\GT_i(x)$ with high probability over $r \from \hat{\R}$. But we need to switch quantifiers: we want an oracle that is ``good'' for all possible inputs for $\GT$ simultaneously and, unfortunately, our current construction doesn't give an algorithm computing $\GT_i(x)$ for all $i,x \in \{0, 1\}^n$ using the same $r \from \hat{\R}$.

\paragraph{Third (and final) attempt.} So, is there a way to fix the source of randomness so it works for all possible inputs? Inspired by Newman's theorem \cite{newman1991private} on simulating public randomness using private randomness in communication complexity, we show that there is. We show that by taking $t = O(\frac{n}{\delta^2})$ copies of $\hat{\R}$, denoted $\R_1, \R_2, \ldots \R_t$, we get a ``good base'' for the oracle. Consider a randomized algorithm $A_{(r, i)}$ that, given access to to $\OracleWithParameter{\R'}(r, x)$ with $r \from \R' = \times_{j \in [t]} \R_j$, does the following:

\begin{itemize}
    \item Sample $j \leftarrow [t]$ at random.
    \item Run $B_{(r, i)}$ using the set $\R_j$ as the source of randomness.
\end{itemize}

Following the argument underlying Newman's theorem, we show that this algorithm computes $\GT_i(x)$ with $\log{n} \cdot 2^{-\alpha} + \delta$ failure probability. It works for every $i$ and $x$ and it still makes only $\alpha \log{n}$ queries to the oracle. If we put $\delta = \frac{1}{12}$ and $\alpha = O(\log{\log{n}})$ then the probability of this algorithm failing for some input pair is at most $\frac{1}{6}$ with only $\alpha \log{n} = O(\log{n} \log{\log{n}})$ queries to the oracle, i.e.,

$$\Pr_{r \from \R'}[A_{(r, i)}(\OracleWithParameter{\R'}(r, x)) \neq \GT_i(x)] < \frac{1}{6}.$$

This change also doesn't increase the ``size'' of the oracle (i.e., the number of queries it can answer) too much. This allows us to show that with high probability it is still impossible to combine the given partial parities to create the full parity using a low-degree polynomial, so the second condition is also satisfied. So there exists an oracle that allows computing the $\GT$ with low-degree polynomials but requires a high-degree polynomial to compute $\parity(x)$ which is exactly what allows us to prove the lower bound on the approximate degree of $\OS$.

\subsection{Technical ideas behind the parity lower bound}

Our technique relies on a lower bound on the approximate degree of $\parity(x)$, or, more precisely, on the ``Parity Under Randomness $\R$'' function $\CRAPWithParameter{\R}(\OracleWithParameter{\R}(r, x))$ evaluates to $\parity(x)$ on input $\OracleWithParameter{\R}(r, x)$.   
%We actually prove a more general statement of the lower bound on the approximate degree of $\CRAPWithParameter{\R}$,  that we mentioned in the intuition for $\OS_N$ bound.
We, in fact, prove a more general statement lower bounding the approximate degree of $\CRAPWithParameter{\R}$ for a class of potential structures $\R$.

Specifically, we show that the parity function is hard, even to sign-represent, and even given access to $\OracleWithParameter{\R}$ consisting of inner products of $x$ with random strings $r_i$ where each bit of $r_i$ is either fixed to zero or is an unbiased random bit. The only other restriction we need on $\OracleWithParameter{\R}$ is that its ``size'', i.e., the number of inner products it provides, is small. The bigger this number is, the worse our the lower bound becomes.

The proof idea is based on the hardness of sign-representing parity as described in \cite{aspnes1991expressive}, combined with the following combinatorial observation: given a set of $n$-bit strings (corresponding to samples from $\R$, and in turn to random inner products) where in every string each bit is either zero or is an unbiased random bit, with high probability no small subset of them adds up to the all-ones string (which corresponds to the $\parity$ function).

\subsection{Improved ordered search and anchored hidden string lower bounds} \label{sec:roadmap-improved}

Our generalized lower bound for approximating $\CRAPWithParameter{\R}$ allows us to obtain other lower bounds for oracle identification problems. 
%The generalization of the lower bound to a wider range of oracles $\Oracle$ allows us to obtain more lower bounds while only proving an upper bound on the efficiency of computing $\{a_1(x), a_2(x), \ldots, a_N(x)\}$ given $\Oracle$ that satisfies the conditions of the generalized lower bound for $\CRAP$.
For example, we give a slightly stronger lower bound for $\OS$ than what is implied by the discussion above. There is, in fact, a more efficient randomized communication protocol for $\GT$ that uses $O(\log{n})$ bits of communication. It can be converted into randomized query algorithm and thus into a polynomial of degree $O(\log{n})$. At the same time, this more efficient protocol is still based on computing equalities of substrings of inputs, and so the appropriate $\Oracle$ has a very similar structure to the one described above while still satisfying the conditions of the generalized lower bound for $\CRAP$. Moreover, the necessary ``size'' of $\Oracle$ barely blows up at all. Putting everything together gives our improved lower bound of $\Omega\left(\frac{n}{\log^2 n}\right)$ on the approximate degree of $\OS$.

Using the same framework, we can also obtain a nearly tight lower bound on the approximate degree of the anchored hidden string problem $\AHS$. In the anchored hidden string problem, the goal is to determine the parity of $x$ given oracle access to $y_{i, s} = \phi_{i, s}(x)$ for every index $i$ and every binary string $s$ of length at most $n$, where $\phi_{i, s}(x) = 1$ iff the substring of $x$ starting at index $i$ matches $s$. This oracle identification problem has the right form for our framework since

$$\AHS_N((\phi_{i, s}(x))_{i \in [n], s \in \{0, 1\}^{\leq n-i+1}}) = \parity(x).$$
%where $\phi_{i, s}(x) = 1$ if and only if $s \in \{0, 1\}^{\leq n - i}$ is present in $x \in \{0, 1\}^n$ as a substring and it starts from position $i \in [n]$.
Moreover, each function $\phi_{i, s}(x)$ simply computes the equality function of $s$ with a substring of $x$ of length $|s|$ starting from position $i$. As we have already seen, we can compute the equality function very efficiently given an oracle $\Oracle$ of the right random structure, and such a $\Oracle$ meets the conditions of our generalized lower bound for $\CRAPWithParameter{\Oracle}$. This directly implies a lower bound of $\Omega\left(\frac{n}{\log n}\right)$ on the approximate degree of $\AHS$.

Finally, the last lower bound described in this work is on the approximate degree of $\HS$. This lower bound follows via a reduction from $\AHS$. This reduction was first introduced in \cite{cleve2012reconstructing} in the quantum query model, but it holds for polynomial approximation as well.

\section{Ordered search and generalized lower bound}\label{sec:OS}

In this section we give the formal proof of our lower bound on the approximate degree of ordered search. We show how our framework is used for this function and prove the generalized lower bound on $\parity$ that we later reuse for the hidden string problem.

\subsection{Preliminaries}

Our lower bounds on the approximate degree of (a decision version) of ordered search and the hidden string problem require the following definition of polynomial approximations for promise problems.

\begin{definition}
Let $f : D \to \{0, 1\}$ where $D \subseteq \{0, 1\}^n$ for some $n \in \mathbb{N}$ be a partial Boolean function. For $\frac{1}{2} > \eps > 0$, a polynomial $p : \{0, 1\}^n \to \mathbb{R}$ is an $\eps$-approximation to $f$ if $|p(x) - f(x)| \le \eps$ for every $x \in D$ and $-\eps \le p(x) \le 1 + \eps$ for all $x \in \{0, 1\}^n$. The $\eps$-approximate degree of $f$, denoted $\adeg_\eps(f)$ is the the least degree of a polynomial $p$ that $\eps$-approximates $f$. We use the convention $\adeg(f) = \adeg_{1/3}(f)$ to refer to the ``approximate degree of $f$'' without qualification.
\end{definition}

That is, we require a polynomial approximation to a partial function defined on a domain $D$ to approximate the function on $D$ and remain bounded outside of $D$. Note that this is the type of approximation that arises from quantum query algorithms for promise problems.

We also formally define the ordered search function $\OS$ and the family of greater-than functions $\GT$. 

\begin{definition}
    For all $i \in \{0, 1\}^n$ define the function $\GT_i: \{0, 1\}^n \to \{0, 1\}$ to be the indicator of whether the value of the input is smaller than $i$: $\GT_i(x) = 1$ if and only if $x \leq i$ where $i$ and $x$ are compared as numbers written in binary notation.
\end{definition}

\begin{definition}
    The ordered search function $\OS_{2^n}: \{0^{k}1^{2^n-k}\mid k \in [2^n]\} \to \{0, 1\}$ is a partial function defined the following way: $\OS_{2^n}(0^{k}1^{2^n-k}) = \parity(x)$ where $x \in \{0, 1\}^n$ is the binary representation of $k$.
\end{definition}

\subsection{The notion of a \emph{good base}.}

In order to formally define the oracle, i.e. the source of additional information about the input, we introduce the notion of a ``\emph{good base}''  for the oracle. A set $\R$, consisting of tuples of strings, is a \emph{good base} if it's constructed as follows. 
%\NV{Type of $\R$}

Let $\R'$ be a Cartesian product of $m'$ subsets of $\{0, 1\}^{n}$ where each subset $\R^{\tau}$ is itself defined by an $n$-bit string-template $\tau = \tau_1 \tau_2 \ldots \tau_n \in \{0, 1\}^{n}$

$$\R^{\tau} = S_{\tau_{1}} S_{\tau_{2}} S_{\tau_{3}} \ldots S_{\tau_{n}}$$ 
where $S_0 = \{0\}$ and $S_1 = \{0, 1\}$. 

For example, if $\tau = 00100010$ then $\R^{\tau} = S_{\tau_{1}} S_{\tau_{2}} S_{\tau_{3}} \ldots S_{\tau_{n}} = S_0 S_0 S_1 S_0 S_0 S_0 S_1 S_0 = $ \\$\{0\} \{0\} \{0, 1\} \{0\} \{0\} \{0\} \{0, 1\} \{0\} = \{00000000, 00000010, 00100000, 00100010\}$.

Let $\mathcal{B} = \{\one_1\} \times \{\one_2\} \times \ldots \times \{\one_n\}$ where $\one_j = 0^{i-j}10^{n-j}$ is the string that has the value 1 in $j$-th position and has the value 0 everywhere else. Let $\R =  \mathcal{B} \times \R'$, and thus $\R$ is a Cartesian product of $m = n + m'$ subsets of $\{0, 1\}^n$. Note that every $r \in \R$ is a $m$-tuple of $n$-bit strings:
$$r = (r_1, r_2, \ldots r_m) = (\one_1, \one_2, \ldots, \one_{n-1}, \one_n, r_{n+1}, r_{n+2}, \ldots, r_{n+m'})$$
where each $r_j$ is a string of length $n$, the first $n$ strings are fixed for all $r \in \R$, and the last $m'$ strings are from some sets $\R^{\tau}$ each for some template $\tau$. If $r \from \R$ is chosen u.a.r. then each $r_j,n < j \leq m'$ is chosen u.a.r. from some $\R^{\tau}$ and thus the subsequence of bits of $r_j$ corresponding to ones in $\tau$ is a uniformly random string, and the subsequence of bits of $r_j$ corresponding to zeros in $\tau$ is the all-zero string.

Any set $\R$ with the above structure will be called a \emph{good base} of size $m$. Such an $\R$ is helpful for building our oracles as follows.

Let $\OracleWithParameter{\R}: \R \times \{0, 1\}^n \to \{0, 1\}^{m}$ be the following function: $(\OracleWithParameter{\R}(r, x))_j = \langle r_j, x\rangle$ where $r_j$ is an $n$-bit string from the collection $r \in \R$ and the inner product is taken modulo 2. Note that $\OracleWithParameter{\R}$ is parameterized by $\R$, so for each \emph{good base} $\R$ the function $\OracleWithParameter{\R}$ will be different. We will omit the parameter $\R$ later in places where it is clear from context.

Notice the following properties of this function $\OracleWithParameter{\R}(r, x)$ that hold whenever $\R$ is a \emph{good base}:

\begin{itemize}
    \item For every $r \in \R$, the values $\Oracle(r, x)$ completely determine $x$. Since the first $n$ strings of $r$ are $\one_1$, $\one_2$, $\ldots$, $\one_{n-1}$, $\one_n$, the first $n$ bits of $\Oracle(r, x)$ are exactly bits of $x$.
    \item Given $\Oracle(r, x)$ for $r \from \R$ and $r$ itself, one can compute (with some probability of error) whether a subsequence of $x$ specified by some pattern $\tau$ agrees with some fixed string $s$ in those indices. To be more specific, if given $(\Oracle(r, x))_j = \langle r_j, x \rangle$ where $r_j$ is sampled from $\R^{\tau}$ uniformly at random, and $r_j$ itself, one can check whether the strings $x \land \tau$ (where $\land$ denotes bitwise AND)  and $s \land \tau$ are equal for any $s \in \{0, 1\}^n$ with one-sided error probability $\frac{1}{2}$.
\end{itemize}

So, $\OracleWithParameter{\R}(r, x)$ could be used as an equality oracle for a fixed set of subsequences of $x$ (predefined by $\R$) when $r$ is chosen uniformly at random from $\R$. Thus, $\OracleWithParameter{\R}(r, x)$ might give more information about $x$ than $x$ alone and might make some computations on $x$ more efficient. 

On the other hand, some functions of $x$ remain ``hard'' even when given $\Oracle(r, x)$. We will later show that $\parity(x)$ remains hard to compute even with this additional information.

\subsection{Approximating polynomials for $\GT_i$}

We start our proof by showing that for some \emph{good base} $\R_{\OS}$ the oracle $\OracleWithParameter{\R_{\OS}}$ could be used to make the computation of $\GT$ functions more efficient.

\begin{claim}\label{clm:os-ub}
   There exists a \emph{good base} $\R_{\OS}$ of size $m = O(n^2 \log{\log{n}})$ such that if $r \from \R_{\OS}$ is sampled uniformly at random, then with probability at least $\frac{2}{3}$ over the choice of $r$ there exists a family of $2^n$ polynomials $\{q_{(r, i)}: \{0, 1\}^{m} \to \{0, 1\}\mid i \in \{0, 1\}^n\}$, each of degree at most $2 \log{n} \log{\log{n}}$, such that given $\Oracle[\R_{\OS}](r, x)$ as the input, each polynomial $q_{(r, i)}(\OracleWithParameter{\R_{\OS}}(r, x))$ approximates the corresponding $\GT_i(x)$ with error at most $\frac{1}{6}$. That is,
   
    $$\Pr_{r \from \R_{\OS}}\left[\exists i, x \in \{0, 1\}^n: \left|q_{(r, i)}(\Oracle(r, x)) - \GT_i(x)\right| > \frac{1}{6}\right] < \frac{1}{3}.$$
    
\end{claim}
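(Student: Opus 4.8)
The plan is to build the good base $\R_{\OS}$ directly from the two-party $\GT$ communication protocol sketched in the overview, and then to apply a Newman-style amplification to fix the randomness once and for all. First I would recall the basic one-step gadget: for two $n$-bit strings restricted to a contiguous block $[\ell, r]$, the equality of the blocks can be tested with one-sided error $1/2$ by comparing inner products with a single uniformly random string supported on that block — this is exactly the template construction, where a template $\tau$ that is all-ones on $[\ell,r]$ and zero elsewhere defines a set $\R^\tau = S_{\tau_1}\cdots S_{\tau_n}$. Nisan's binary-search protocol for $\GT$ uses $\log n$ equality tests along a root-to-leaf path in the binary search tree over the $n$ bit positions; there are fewer than $2n$ possible blocks (intervals) that can ever arise, so taking one template $\tau$ per such block, and $\alpha = \Theta(\log\log n)$ independent copies of each to drive the error of a single equality test down to $n^{-\Theta(1)}$ per repetition, gives a preliminary base $\hat\R$ of size $m' = O(n \log\log n)$; including $\mathcal{B}$ contributes another $n$ strings. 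For a fixed $i$ and $x$, emulating Alice's side of the protocol — computing the inner products of $i$ (restricted to the relevant block) with the sampled random strings on her own, and reading $\langle r_j, x\rangle$ off the oracle — yields a deterministic algorithm $B_{(r,i)}$ making $\alpha \log n = O(\log n \log\log n)$ queries, which by a union bound over the $\log n$ steps satisfies $\Pr_{r \from \hat\R}[B_{(r,i)}(\Oracle[\hat\R](r,x)) \ne \GT_i(x)] < \log n \cdot n^{-\Theta(1)} < \delta/2$ for $\delta = 1/12$, for every fixed pair $(i,x)$.

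Next I would run the Newman argument to switch the order of quantifiers. Take $t = O(n/\delta^2)$ independent copies $\R_1, \dots, \R_t$ of $\hat\R$ and set $\R' = \times_{j\in[t]} \R_j$, so that $\R_{\OS} = \mathcal{B} \times \R'$ is a good base of size $m = n + t \cdot m' = O(n^2 \log\log n)$. For each $i$, let $A_{(r,i)}$ pick $j \from [t]$ and run $B$ using $\R_j$; for a fixed pair $(i,x)$, the random variable "fraction of $j\in[t]$ for which $B$ errs" has expectation $< \delta/2$ over the choice of $r\from\R'$, so by a Chernoff bound the probability (over $r$) that this fraction exceeds $\delta$ is at most $\exp(-\Omega(\delta^2 t)) < 2^{-2n}$. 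Union-bounding over the $2^n \cdot 2^n$ pairs $(i,x)$, with probability at least $1 - 2^{2n} \cdot 2^{-2n} \cdot (\text{small}) \ge 2/3$ over $r \from \R_{\OS}$, simultaneously for all $(i,x)$ the algorithm $A_{(r,i)}$ errs with probability at most $\log n \cdot n^{-\Theta(1)} + \delta \le 1/6$ over its internal coin $j$. Note $A_{(r,i)}$ still makes only $\alpha\log n = O(\log n\log\log n)$ queries to $\Oracle$.

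Finally I would convert, for each good $r$, the randomized query algorithm $A_{(r,i)}$ into the polynomial $q_{(r,i)}$: the acceptance probability of a $T$-query (classical, adaptive) randomized algorithm reading bits of $\Oracle[\R_{\OS}](r,x)$ is a multilinear polynomial in those bits of degree at most $T$. With $T = \alpha \log n \le 2\log n \log\log n$ (absorbing the constant $\alpha$ into the bound), this $q_{(r,i)}$ has degree at most $2\log n\log\log n$, takes values in $[0,1]$, and satisfies $|q_{(r,i)}(\Oracle(r,x)) - \GT_i(x)| \le 1/6$ for every $x$. Taking the event "$r$ is good" from the previous paragraph gives exactly the claimed statement. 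The main obstacle, and the place requiring the most care, is the quantifier-switching step: a naive union bound over $(i,x)$ pairs would cost an extra factor beyond what a single copy of $\hat\R$ can absorb, which is precisely why the $t = O(n/\delta^2)$ copies are needed — one must check that the Chernoff deviation bound beats the $2^{2n}$ union bound while keeping $m = O(n^2\log\log n)$, so that the final "size" of the oracle remains small enough for the companion parity lower bound to bite. A secondary subtlety is making sure the block-equality gadget genuinely has one-sided error $1/2$ per random string (equal blocks always pass; unequal blocks fail with probability exactly $1/2$ over a uniform random mask on the block), so that $\alpha$ repetitions drive the per-step error down geometrically.
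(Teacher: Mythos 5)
Your proposal is correct and follows essentially the same route as the paper's proof: build $\hat\R$ from the $O(n)$ templates arising in Nisan's binary-search protocol, use $\alpha = \Theta(\log\log n)$ repetitions to drive each equality test's error down, Newman-amplify with $t = \Theta(n)$ packages to switch the $\forall (i,x)$ and $\exists r$ quantifiers via Chernoff plus a union bound over $2^{2n}$ pairs, and finally read off $q_{(r,i)}$ as the acceptance probability of the $O(\log n \log\log n)$-query algorithm. One small correction: $\alpha = \Theta(\log\log n)$ repetitions reduce the per-equality-test error to $2^{-\alpha} = (\log n)^{-\Theta(1)}$, not $n^{-\Theta(1)}$; the union bound over $\log n$ steps still gives total error $\log n \cdot (\log n)^{-\Theta(1)} = o(1) < 1/12$ for large $n$, so the argument goes through unchanged.
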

\begin{proof} 

This proof consists of two parts: constructing a \emph{good base} $\R_{\OS}$ and showing that it actually helps to compute every $\GT_i$.

\textbf{Constructing the \emph{good base} $\R_{\OS}$.} We are going to construct $\R_{\OS}$ based on what random strings are useful in the communication protocol computing $\GT$ of two $n$-bit strings, $x$ and $i$. Intuitively, in this protocol, we first need to check if the first half of $i$ and $x$ are equal using a randomized communication protocol for equality. To do that we need to compute and compare $\langle x, r\rangle$ and $\langle i, r\rangle$, for some number $\alpha$ of random strings $r$ to be determined later, where each $r$ is sampled from $\{0, 1\}^{\frac{n}{2}} \{0\}^{\frac{n}{2}}$. If the computed values $\langle i, r\rangle = \langle x, r \rangle$ for all $r$ we have considered, then we repeat this procedure on the second half of $x$ and $i$, which corresponds to computing and comparing $\langle x, r\rangle$ and $\langle i, r\rangle$ for $\alpha$ random strings $r$ sampled from $\{0\}^{\frac{n}{2}}\{0, 1\}^{\frac{n}{4}} \{0\}^{\frac{n}{4}}$. If, on the other hand, the values were not equal then we repeat this procedure on the first half of $x$ and $i$, which corresponds to computing and comparing $\langle x, r\rangle$ and $\langle i, r\rangle$ for $\alpha$ random strings $r$ sampled from $\{0, 1\}^{\frac{n}{4}} \{0\}^{\frac{3n}{4}}$. Since we want our oracle to be useful to emulate this procedure to compute $\GT_i(x)$, it should ``contain'' all the random strings used in this protocol. 

Let $\hat{\R} = \R^{1^{n/2}0^{n/2}} \times \left(\R^{1^{n/4}0^{3n/4}} \times \R^{0^{n/2} 1^{n/4}0^{n/4}}\right) \times \ldots \times \left(\bigtimes_{i = 0}^{2^{k}-1}\R^{0^{2in/2^{k+1}} 1^{n/2^{k+1}}0^{n-((2i+1)n/2^{k+1})}}\right) \times \ldots \times \left(\bigtimes_{i = 0}^{n/2}\R^{0^{2i} 1^{1}0^{n-(2i+1)}}\right)$. See Figure~\ref{fig:struct-os} for an illustration.

\begin{figure}[h]
\begin{center}
\begin{tabular}{ccc}
{\Large $\tau$} & {\Large$\R^{\tau}$} & Structure of $\R^{\tau}$\\ \\
{\Large $1^{\frac{n}{2}}0^{\frac{n}{2}}$} & {\Large \hspace{1cm} $\{0, 1\}^{\frac{n}{2}}\{0\}^{\frac{n}{2}}$ \hspace{1cm}} & 
    \small
    \tabcolsep=0.12cm
    \begin{tabular}{|l|l|l|l|l|l|l|l|l|l|l|l|l|l|l|l|}
    \hline
   \cellcolor{blue_color}$\star$ &\cellcolor{blue_color}$\star$ &\cellcolor{blue_color}$\star$ &\cellcolor{blue_color}$\star$ &\cellcolor{blue_color}$\star$ &\cellcolor{blue_color}$\star$ &\cellcolor{blue_color}$\star$ &\cellcolor{blue_color}$\star$ & 0 & 0 & 0 & 0 & 0 & 0 & 0 & 0 \\ \hline
    \end{tabular}
    \normalsize
    \\ 
    \\
{\Large $1^{\frac{n}{4}}0^{\frac{3n}{4}}$} & {\Large $\{0, 1\}^{\frac{n}{4}}\{0\}^{\frac{3n}{4}}$} & 
    \small
    \tabcolsep=0.12cm
    \begin{tabular}{|l|l|l|l|l|l|l|l|l|l|l|l|l|l|l|l|}
    \hline
   \cellcolor{blue_color}$\star$ &\cellcolor{blue_color}$\star$ &\cellcolor{blue_color}$\star$ &\cellcolor{blue_color}$\star$ & 0 & 0 & 0 & 0 & 0 & 0 & 0 & 0 & 0 & 0 & 0 & 0 \\ \hline
    \end{tabular}
    \normalsize
    \\
{\Large $0^{\frac{n}{2}}1^{\frac{n}{4}}0^{\frac{n}{4}}$} & {\Large $\{0\}^{\frac{n}{2}}\{0, 1\}^{\frac{n}{4}}\{0\}^{\frac{n}{4}}$} & 
    \small
    \tabcolsep=0.12cm
    \begin{tabular}{|l|l|l|l|l|l|l|l|l|l|l|l|l|l|l|l|}
    \hline
    0 & 0 & 0 & 0 & 0 & 0 & 0 & 0 &\cellcolor{blue_color}$\star$ &\cellcolor{blue_color}$\star$ &\cellcolor{blue_color}$\star$ &\cellcolor{blue_color}$\star$ & 0 & 0 & 0 & 0 \\ \hline
    \end{tabular}
    \normalsize
 \\ 
 \\
{\Large $1^{\frac{n}{8}}0^{\frac{7n}{8}}$} & {\Large $\{0, 1\}^{\frac{n}{8}}\{0\}^{\frac{7n}{8}}$} & 
    \small
    \tabcolsep=0.12cm
    \begin{tabular}{|l|l|l|l|l|l|l|l|l|l|l|l|l|l|l|l|}
    \hline
   \cellcolor{blue_color}$\star$ &\cellcolor{blue_color}$\star$ & 0 & 0 & 0 & 0 & 0 & 0 & 0 & 0 & 0 & 0 & 0 & 0 & 0 & 0 \\ \hline
    \end{tabular}
    \normalsize\\
{\Large $0^{\frac{n}{4}}1^{\frac{n}{8}}0^{\frac{5n}{8}}$} & {\Large $\{0\}^{\frac{n}{4}}\{0, 1\}^{\frac{n}{8}}\{0\}^{\frac{5n}{8}}$} & 
    \small
    \tabcolsep=0.12cm
    \begin{tabular}{|l|l|l|l|l|l|l|l|l|l|l|l|l|l|l|l|}
    \hline
    0 & 0 & 0 & 0 &\cellcolor{blue_color}$\star$ &\cellcolor{blue_color}$\star$ & 0 & 0 & 0 & 0 & 0 & 0 & 0 & 0 & 0 & 0 \\ \hline
    \end{tabular}
    \normalsize\\
{\Large $0^{\frac{n}{2}}1^{\frac{n}{8}}0^{\frac{3n}{8}}$} & {\Large $\{0\}^{\frac{n}{2}}\{0, 1\}^{\frac{n}{4}}\{0\}^{\frac{3n}{8}}$} & 
    \small
    \tabcolsep=0.12cm
    \begin{tabular}{|l|l|l|l|l|l|l|l|l|l|l|l|l|l|l|l|}
    \hline
    0 & 0 & 0 & 0 & 0 & 0 & 0 & 0 &\cellcolor{blue_color}$\star$ &\cellcolor{blue_color}$\star$ & 0 & 0 & 0 & 0 & 0 & 0 \\ \hline
    \end{tabular}
    \normalsize\\
{\Large $0^{\frac{3n}{4}}1^{\frac{n}{8}}0^{\frac{n}{8}}$} & {\Large $\{0\}^{\frac{3n}{4}}\{0, 1\}^{\frac{n}{8}}\{0\}^{\frac{n}{8}}$} & 
    \small
    \tabcolsep=0.12cm
    \begin{tabular}{|l|l|l|l|l|l|l|l|l|l|l|l|l|l|l|l|}
    \hline
    0 & 0 & 0 & 0 & 0 & 0 & 0 & 0 & 0 & 0 & 0 & 0 &\cellcolor{blue_color}$\star$ & \cellcolor{blue_color}$\star$ & 0 & 0 \\ \hline
    \end{tabular}
    \normalsize
\end{tabular}
\caption{Structure of $\hat\R$. Blue cells with $\star$ represent indices in which either a 0 or a 1 could appear.}
\label{fig:struct-os}
\end{center}
\end{figure}

This $\hat{\R}$ describes all the strings used as the source of randomness in the $O(\log{n} \log{\log{n}})$ communication protocol for $\GT$, but each of the strings appears in the structure only once instead of $\alpha$ times. So, we need to duplicate this structure $\alpha$ times to properly simulate the protocol.

To finish the structure, we are going to add two other steps to the structure. 
First, we are going to have some number $t$ of individual ``prepackaged'' copies to be determined later for the $\GT$ protocol. Let $\R_1 = \ldots = \R_{t} = \times_{\alpha} \hat{\R}$. Each of the copies has enough randomness and the right structure of that randomness to simulate one full run of the $\GT$ protocol. Let $\R' = \bigtimes_{j \in [t]} \R_{j}$ which allows us to handle $t$ runs.
Secondly, we want to be able to obtain the value of any specific index of $x$, so we add a set of ``basis'' strings to the structure: $\mathcal{B} = \{\one_1\} \times \{\one_2\} \times \ldots \times \{\one_n\}= \{10\ldots0\}\times\{010\ldots0\} \times \ldots \times \{00\ldots010\} \times \{00\ldots01\}$. 

The final underlying structure of the oracle will be a Cartesian product of $\R'$ and $\mathcal{B}$: 
$\R_{\OS} = \mathcal{B} \times \R' = \mathcal{B} \times (\bigtimes_{j \in [t]} \R_{j})$. See Figure \ref{fig:oracle-os} for an illustration.

\begin{figure}[h]
\begin{tabular}{ccccccccccc}
$\mathcal{B}$ &             & $\R_1$ &             & $\R_2$ &             &          &             & $\R_{t-1}$ &             & $\R_t$ \\
 &             & $\alpha$ copies &             & $\alpha$ copies &             &          &             & $\alpha$ copies &             & $\alpha$ copies \\
\footnotesize
\tabcolsep=0.11cm
\begin{tabular}{|llllll|}
\hline
$\cellcolor{yellow}1$ & 0 & 0 & 0 & 0 & 0 \\
0 & $\cellcolor{yellow}1$ & 0 & 0 & 0 & 0 \\
0 & 0 & \cellcolor{yellow}1 & 0 & 0 & 0 \\
0 & 0 & 0 & \cellcolor{yellow}1 & 0 & 0 \\
0 & 0 & 0 & 0 & \cellcolor{yellow}1 & 0 \\
0 & 0 & 0 & 0 & 0 & \cellcolor{yellow}1 \\ \hline
\end{tabular}
& $\bigtimes$ &
\footnotesize
\tabcolsep=0.11cm
\begin{tabular}{|llllllll|}
\hline
$\cellcolor{blue_color}$ & $\cellcolor{blue_color}$ & $\cellcolor{blue_color}$ & $\cellcolor{blue_color}$ & & & & \\
$\cellcolor{blue_color}$ & $\cellcolor{blue_color}$ & & & & & & \\
& & & & $\cellcolor{blue_color}$ & $\cellcolor{blue_color}$ & & \\
$\cellcolor{blue_color}$ & & & & & & & \\
& & $\cellcolor{blue_color}$ & & & & & \\
& & & & $\cellcolor{blue_color}$ & & & \\
& & & & & & $\cellcolor{blue_color}$ & \\ \hline
\end{tabular}
& $\bigtimes$ & 
\footnotesize
\tabcolsep=0.11cm
\begin{tabular}{|llllllll|}
\hline
$\cellcolor{blue_color}$ & $\cellcolor{blue_color}$ & $\cellcolor{blue_color}$ & $\cellcolor{blue_color}$ & & & & \\
$\cellcolor{blue_color}$ & $\cellcolor{blue_color}$ & & & & & & \\
& & & & $\cellcolor{blue_color}$ & $\cellcolor{blue_color}$ & & \\
$\cellcolor{blue_color}$ & & & & & & & \\
& & $\cellcolor{blue_color}$ & & & & & \\
& & & & $\cellcolor{blue_color}$ & & & \\
& & & & & & $\cellcolor{blue_color}$ & \\ \hline
\end{tabular}
& $\bigtimes$ & $\ldots$ & $\bigtimes$ & 
\footnotesize
\tabcolsep=0.11cm
\begin{tabular}{|llllllll|}
\hline
$\cellcolor{blue_color}$ & $\cellcolor{blue_color}$ & $\cellcolor{blue_color}$ & $\cellcolor{blue_color}$ & & & & \\
$\cellcolor{blue_color}$ & $\cellcolor{blue_color}$ & & & & & & \\
& & & & $\cellcolor{blue_color}$ & $\cellcolor{blue_color}$ & & \\
$\cellcolor{blue_color}$ & & & & & & & \\
& & $\cellcolor{blue_color}$ & & & & & \\
& & & & $\cellcolor{blue_color}$ & & & \\
& & & & & & $\cellcolor{blue_color}$ & \\ \hline
\end{tabular}
& $\bigtimes$ & 
\footnotesize
\tabcolsep=0.11cm
\begin{tabular}{|llllllll|}
\hline
$\cellcolor{blue_color}$ & $\cellcolor{blue_color}$ & $\cellcolor{blue_color}$ & $\cellcolor{blue_color}$ & & & & \\
$\cellcolor{blue_color}$ & $\cellcolor{blue_color}$ & & & & & & \\
& & & & $\cellcolor{blue_color}$ & $\cellcolor{blue_color}$ & & \\
$\cellcolor{blue_color}$ & & & & & & & \\
& & $\cellcolor{blue_color}$ & & & & & \\
& & & & $\cellcolor{blue_color}$ & & & \\
& & & & & & $\cellcolor{blue_color}$ & \\ \hline
\end{tabular}
\end{tabular}
\caption{Structure of $\R_{\OS}$. Each $\R_j$ consist of $\alpha$ copies of $\hat{\R}$. }
\label{fig:oracle-os}
\end{figure}

\normalsize

We also set the parameters to be $\alpha = 2\log{(\log{n})}, t = 250n\ln{2}$.
Notice that this set $\R_{\OS}$ is a \emph{good base} by construction and has size $m = n + \alpha t n = n + c n^2 \log{(\log{n})}$ for some constant $c$.

\textbf{Constructing the family of approximating polynomials.} 
In order to prove this claim, we first describe a randomized query algorithm that computes $\GT_i(x)$ correctly for all $i$ and $x$ with high probability given $\OracleWithParameter{\R_{\OS}}(r, x)$ as input. We then explain how to convert this query algorithm into a polynomial. 
The algorithm construction itself consists of two parts. In the first part, for all $j \in [t]$ we show the existence of a deterministic algorithm $B_{(r, i, j)}$ that, given $\Oracle(r, x)$, can compute $\GT_i(x)$ for every specific $x, i \in \{0, 1\}^n$ with good probability over the choice of $r \from \R_{\OS}$, and this algorithm is only going to use the parts of the input that correspond to $\R_j$ and $\mathcal{B}$. In the second part, we show that the algorithm $A_{(r, i)}$ that chooses a copy $j$ to use randomly and runs $B_{(r, i, j)}$, computes $\GT_i(x)$ correctly for all $i$ and $x$ with high probability given $\Oracle(r, x)$ as input.
        
    For all $i \in \{0, 1\}^n, j \in [t], r \in \R_{\OS}$ let $B_{(r, i, j)}(\Oracle(r, x))$ be the following deterministic algorithm. 

    \begin{enumerate}
        \item Set $\ell = 0, u = n/2$
        \item\label{step:os-ub} While $\ell < u$:
        \item \qquad Set $\tau = 0^{\ell} 1^{u - \ell} 0^{n - u}$
        \item \qquad For all indices $v \in [m]$ corresponding to $n$-bit strings drawn from $\R^{\tau}$ within the $j$-th 
        
        \qquad copy $\R_j$:
        \item \qquad \qquad Compute $\langle i, r_v \rangle$ and compare it to $(\Oracle(r, x))_v = \langle x, r_v \rangle$. 
        \item  \qquad If for all such $v$ the inner products are equal, i.e., $\langle i, r_v \rangle = (\Oracle(r, x))_v$, then set 
        
        \qquad $\textit{tmp} = u, u = u + (u - \ell)/2, \ell = \textit{tmp}$ and go step~\ref{step:os-ub}.
        \item  \qquad Otherwise, set $u = (u + \ell)/2$ and go step~\ref{step:os-ub}
        %\item \qquad \qquad If $\langle i, r_v \rangle \neq (\Oracle(r, x))_v$ then set $u = (u + \ell)/2$ and go step \ref{step:os-ub} 
        %\item \qquad Set $\textit{tmp} = u, u = u + (u - \ell)/2, \ell = \textit{tmp}$ and go step \ref{step:os-ub}
        \item Compare $i_{\ell} = \langle i, \one_{\ell} \rangle$ and $(\Oracle(r, x))_{\ell} = \langle x, \one_{\ell} \rangle = x_{\ell}$. If $x_{\ell} \leq i_{\ell}$ then accept. Otherwise, reject.
    \end{enumerate}

    The last step is possible specifically because of $\mathcal{B}$ in the structure of $\R_{\OS}$: $r_{\ell} = \one_{\ell}$ for all $\ell \leq n$ and for all $r \in \R_{\OS}$.
    Notice that this algorithm emulates the randomized communication protocol for the $\GT$ communication problem. 
    
    In general, the algorithm emulates the randomized communication protocol for equality on the first half of the segment $[\ell, u + (u - \ell)]$ in $x$ and $i$, and depending on the result it splits the inputs into smaller segments and continues recursively. In the end, if all the runs of equality protocols were correct, the algorithm finds and compares the most significant bit where $x$ and $i$ differ.

    By \cite{nisan1993communication} we know that this algorithm computes $\GT_i(x)$ with probability at least $1 - (\log{n})2^{-\alpha} = 1 - (\log{n})2^{-2\log{(\log{n}})} = 1 - \frac{1}{\log{n}} \geq \frac{11}{12}$ for sufficiently large $n$ independently of the choice of $j \in [t]$. That is, for all $j \in [t]$ and for all $i, x \in \{0, 1\}^n$,

    $$\Pr_{r \from \R_{\OS}}[B_{(r, i, j)}(\Oracle(r, x)) = \GT_i(x)] \geq \frac{11}{12}.$$
    This algorithm makes at most $\alpha \log{n} = 2 \log{n} \log{\log{n}}$ queries to the oracle $\Oracle(r, x)$. Note that this algorithm needs access to the specific $r$ needed to compute every $\langle i, r_v\rangle$ and we enable this by ``hardcoding'' this $r$ into the algorithm and creating a separate algorithm for each possible $r$.

    We have shown that for every fixed $i, x \in \{0, 1\}^n$ there are many $r \in \R_{\OS}$ that if used as a first input for the oracle $\Oracle$ allow $B_{(r, i, j)}$ to compute $\GT_i(x)$. Unfortunately, this is not enough: our algorithm should be universal, i.e., we want a single algorithm that with high probability over $r$ succeeds on all $i$ and $x$. On the other hand, $B_{(r, i, j)}$ only uses one fixed ``package'' of random strings, namely the $j$-th package. 

    Let $W(i, x, r, j)$ be the indicator that the $j$-th package of random strings in $r$ defines a set of ``bad'' random strings for $(i, x)$: $W(i, x, r, j) = 1$ if and only if $B_{(r, i, j)}(\Oracle(r, x)) \neq \GT_i(x)$. We established that $B_{(r, i, j)}(\Oracle(r, x))$ works well if given a random $r \from \R_{\OS}$ for every $j \in [t]$ and the probability of this algorithm outputting an incorrect answer is at most $\frac{1}{12}$. So for all $i, x \in \{0, 1\}^n, j \in [t]$, we have
    $$\Pr_{r \from \R_{\OS}}[W(i, x, r, j) = 1] = \E_{r \from \R_{\OS}}[W(i, x, r, j)] \leq \frac{1}{12}.$$
    We can't immediately get a useful upper bound on the probability of $r \from \R$ working out for all $i$ and $x$ at the same time. To achieve this, we'll design a new algorithm that uses all $t$ packages of random strings. Its construction and analysis are inspired by Newman's classic argument used for simulating public randomness by private randomness in communication protocols.
    
    For all $i \in \{0, 1\}^n, r \in \R_{\OS}$ let $A_{(r, i)}(\Oracle(r, x))$ be the following randomized algorithm:
        
    \begin{itemize}
        \item Choose $j \from [t]$ uniformly at random.
        \item Run $B_{(r, i, j)}(\Oracle(r, x))$.
    \end{itemize}
    Let us now analyse $A_{(r, i)}$. The number of queries that $A_{(r, i)}$ makes to the oracle is the same as $B_{(r, i, j)}$ which is $\alpha \log{n} = 2 \log{n} \log{\log{n}}$. We fix a pair $(i, x)$ and evaluate the following probability.
    $$ \Pr_{r \from \R_{\OS}}\left[\Pr_{j \from [t]}[B_{(r, i, j)} \neq \GT_i(x)] > \frac{1}{6}\right] = \Pr_{r \from \R_{\OS}}\left[\frac{1}{t} \sum_{j \in [t]} W(i, x, r, j) > \frac{1}{6}\right].$$
    We established that $\E_{r \from \R_{\OS}}[W(i, x, r, j)] \leq \frac{1}{12}$ and so by Hoeffding's inequality,
    $$\Pr_{r \from \R_{\OS}}\left[\frac{1}{t} \sum_{j \in [t]} W(i, x, r, j) > \frac{1}{12} + \frac{1}{12}\right] \leq e^{-2\frac{t}{144}} \leq 2^{-\frac{500n}{144}}.$$
    By a union bound over all possible $i, x \in \{0, 1\}^n$,
    $$\Pr_{r \from \R_{\OS}}\left[\exists i, x \in \{0, 1\}^n: \frac{1}{t} \sum_{j \in [t]} W(i, x, r, j) > \frac{1}{6}\right] \leq 2^{2n} 2^{-\frac{500n}{144}} \leq 2^{-n} < \frac{1}{3}.$$
    Therefore, we have proven that
    $$\Pr_{r \from \R_{\OS}}\left[\exists i, x \in \{0, 1\}^n: \Pr_{j \from [t]}[A_{(r, i)}(\Oracle(r, x))) \neq \GT_i(x)] > \frac{1}{6}\right] < \frac{1}{3}.$$
    The last step is to convert this family of query algorithms into a family of approximating polynomials. Let $q_{(r, i)}$ denote the acceptance probability of $A_{(r, i)}$. A standard argument (e.g.,~\cite[Theorem 15]{BuhrmanW02}) implies that this is a polynomial of degree at most $2 \log{n} \log{\log{n}}$ such that
        
    $$\Pr_{r \from \R_{\OS}}\left[\exists i, x \in \{0, 1\}^n: \left|q_{(r, i)}(\Oracle(r, x)) - \GT_i(x)\right| > \frac{1}{6}\right] < \frac{1}{3},$$
    which is exactly what we were looking for.
\end{proof}

We successfully converted the most well-known communication protocol for $\GT$ that requires $O(\log{n} \log{\log{n}})$ bits of communication into a family of polynomials of degree $O(\log{n} \log{\log{n}})$ that approximates $\GT_i$. It's known that there is a better communication protocol for $\GT$ that requires only $O(\log{n})$ bits of communication, as observed by Nisan \cite{nisan1993communication}. The next claim establishes that this more efficient protocol can be converted into a family of polynomials as well.

\begin{claim}\label{clm:os2-ub}
   There exists a \emph{good base} $\R_{\OS++}$ of size $m = O(n^3 \log{n})$ such that if $r \from \R_{\OS++}$ is sampled uniformly at random, then with probability at least $\frac{2}{3}$ over the choice of $r$ there exists a family of polynomials $\{q_{(r, i)}: \{0, 1\}^{m} \to \{0, 1\} \mid i \in \{0, 1\}^n\}$, each of degree at most $O(\log{n})$, such that given $\Oracle(r, x)$ as the input, each polynomial $q_{(r, i)}(\OracleWithParameter{\R_{\OS++}}(r, x))$ approximates the corresponding $\GT_i(x)$ with error at most $\frac{1}{6}$. That is,
   
    $$\Pr_{r \from \R_{\OS++}}\left[\exists i, x \in \{0, 1\}^n: \left|q_{(r, i)}(\Oracle(r, x)) - \GT_i(x)\right| > \frac{1}{6}\right] < \frac{1}{3}.$$  
\end{claim}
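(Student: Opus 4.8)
The plan is to reuse the argument of Claim~\ref{clm:os-ub} essentially verbatim, the only change being that the $O(\log n\log\log n)$-bit equality-based protocol for $\GT$ is replaced by the more communication-efficient $O(\log n)$-bit protocol pointed out by Nisan~\cite{nisan1993communication}. The first step is to recall that protocol in a form that exposes its equality-test skeleton: just like the protocol underlying Claim~\ref{clm:os-ub}, on inputs $i$ (Alice) and $x$ (Bob) it locates the most significant coordinate on which $i$ and $x$ disagree, and every message Bob sends is the inner product modulo~$2$ of $x$ with a public random string that is supported on some contiguous interval and is zero outside it; the improvement over Claim~\ref{clm:os-ub} lies purely in how the equality subroutines are scheduled and amplified, which brings the total communication -- equivalently, the total number of inner products of $x$ with public random strings that one run reads -- down to $O(\log n)$.

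Given this, I would build the good base $\R_{\OS++}$ in exact analogy with $\R_{\OS}$. Into $\hat\R$ I put the interval-template $\R^\tau$ for every contiguous interval $\tau$ that Nisan's protocol can ever query (over all inputs $i$ and all settings of its public coins), each with the number of copies dictated by that protocol's amplification; I then take $\R_1=\dots=\R_t$ to be $t$ independent ``packages'' of $\hat\R$, set $\R'=\bigtimes_{j\in[t]}\R_j$, prepend the basis block $\mathcal B=\{\one_1\}\times\dots\times\{\one_n\}$ (so the last step of the search can read off the needed bit of $x$ directly, exactly as before), and put $\R_{\OS++}=\mathcal B\times\R'$. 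This is a good base by construction, and a direct count of (number of interval-templates)$\,\times\,$(copies per template)$\,\times\,t$, plus $n$ for $\mathcal B$, works out for appropriate parameter choices to $m=O(n^3\log n)$.

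With the base in hand, the remaining steps are copies of the corresponding steps of Claim~\ref{clm:os-ub}. For each package $j\in[t]$ and each $i$, let $B_{(r,i,j)}$ be the deterministic algorithm that, given $\OracleWithParameter{\R_{\OS++}}(r,x)$, emulates Alice's side of Nisan's protocol on input $i$ using the random strings stored in package $j$ of $r$; by correctness of the protocol, for every fixed $i,x$ and every $j$ we get $\Pr_{r\from\R_{\OS++}}[B_{(r,i,j)}(\Oracle(r,x))\ne\GT_i(x)]\le\frac1{12}$, and $B_{(r,i,j)}$ makes only $O(\log n)$ queries. Letting $A_{(r,i)}$ choose $j\from[t]$ uniformly and run $B_{(r,i,j)}$, Hoeffding's inequality across the $t$ independent packages together with a union bound over the $2^{2n}$ pairs $(i,x)$ -- which is what pins down $t=\Theta(n)$ -- yields $\Pr_{r\from\R_{\OS++}}[\exists\,i,x:\ \Pr_{j\from[t]}[A_{(r,i)}(\Oracle(r,x))\ne\GT_i(x)]>\frac16]<\frac13$. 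Finally, taking $q_{(r,i)}$ to be the acceptance probability of $A_{(r,i)}$ gives, by the standard conversion (e.g.~\cite[Theorem~15]{BuhrmanW02}), a polynomial of degree $O(\log n)$ with the required $\frac16$-approximation guarantee, simultaneously for all $i,x$, with probability at least $\frac23$ over $r$.

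I expect essentially all the difficulty to be concentrated in the first two steps. Newman's argument, the Hoeffding-plus-union-bound step, and the query-algorithm-to-polynomial conversion are identical to Claim~\ref{clm:os-ub}. What is new is (a) presenting Nisan's $O(\log n)$-bit protocol for $\GT$ as a recursion built out of inner-product-based equality tests of contiguous substrings, and then (b) bounding both the number of distinct interval-templates that can arise and the number of copies of each that the protocol's amplification requires, tightly enough to certify the $m=O(n^3\log n)$ size bound -- and, importantly, to confirm that $m$ is still small enough for the generalized $\CRAP$ lower bound proved in the sequel to remain applicable.
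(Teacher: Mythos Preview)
Your proposal is correct and follows essentially the same approach as the paper. The paper's proof (in the appendix) makes the ``Nisan protocol'' step concrete by invoking the noisy binary search of Feige et al.~\cite{feige1994computing}: the search for the most significant differing bit becomes a sequence of noisy comparisons ``is the key $>a$?'', each answered by a constant-error equality test on the \emph{prefix} of length $a$, so the templates are $\tau=1^a0^{n-a}$ rather than general contiguous intervals; the Feige et al.\ analysis also dictates non-uniform amplification (leaf-level questions get $\Theta(\log n)$ repetitions), which is exactly the bookkeeping you flagged in point~(b). Everything else---packaging into $t=\Theta(n)$ copies, Hoeffding plus the union bound over $2^{2n}$ pairs, and the query-to-polynomial conversion---is identical to Claim~\ref{clm:os-ub}, as you anticipated.
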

%\NV{Push the proof into the appendix}
The proof of Claim \ref{clm:os2-ub} is similar to the proof of Claim \ref{clm:os-ub} and can be found in Appendix \ref{sec:os2-ub}.

\subsection{General lower bound}

To complete the framework and to obtain the lower bound for Ordered Search we need to show why computing the parity is hard even given $\OracleWithParameter{\R_{\OS}}$ or $\OracleWithParameter{\R_{\OS++}}$. We will show a stronger lower bound that would allow us to reuse this lower bound for other applications. Specifically, we will show that computing the parity of input $x$ remains hard given $\OracleWithParameter{\R}$ for any \emph{good base} $\R$ of small size.

\subsubsection{Combinatorial claim}

    The hardness of $\parity$ in this model is based on the following statement.
    For every \emph{good base} $\R$ of small size with high probability over the sample $r \from \R$ for every set of $n$-bit strings taken from the collection $r$ of size at most $O(\frac{n}{\log{n}})$, the bitwise parity of these strings is not equal to the all-ones string.
    
\begin{claim}\label{clm:gen-comb}
    For every \emph{good base} $\R$ of size $m$ with probability at least $\frac{2}{3}$ over the choice of $r \from \R$ for every set of elements $T \subseteq [m]$ of size at most $d = \frac{n}{4\log{m}} - 1$, the bitwise parity of $n$-bit strings $r_i$, $i \in T$ from the collection $r \from \R$ is not equal to the all-ones string:
    
    $$\Pr_{r \from \R}\left[\forall T \subseteq [m], |T| \leq d: \bigoplus_{i \in T} r_i \neq 1^n\right] \geq \frac{2}{3}.$$
    
\end{claim}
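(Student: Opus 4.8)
The plan is a first-moment argument: for each fixed $T\subseteq[m]$ with $|T|\le d$ I bound $\Pr_{r\from\R}\!\big[\bigoplus_{i\in T}r_i = 1^n\big]$, and then union-bound over all such $T$. Write $T = T_0\cup T_1$ with $T_0 = T\cap[n]$ indexing the fixed basis strings $\one_1,\dots,\one_n$ contributed by $\mathcal B$, and $T_1 = T\setminus[n]$ indexing the strings contributed by $\R'$ (each such $r_j$ is drawn uniformly from some $\R^{\tau}$, where $\tau=\tau(j)$ is the template of that factor). The structural observation is that for every coordinate $p\in[n]\setminus T_0$ we have $\bigoplus_{i\in T_0}(\one_i)_p = 0$, since $\one_i$ has its unique $1$ in coordinate $i\in T_0$; hence $\big(\bigoplus_{i\in T}r_i\big)_p = \bigoplus_{j\in T_1}(r_j)_p$ on those coordinates, and $\bigoplus_{i\in T}r_i = 1^n$ forces $\bigoplus_{j\in T_1}(r_j)_p = 1$ for each of the at least $n-|T_0|\ge n-d$ coordinates $p\in[n]\setminus T_0$. (If $T_1=\emptyset$ this is already impossible, since $|T_0|\le d<n$.)

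Next, by the definition of a \emph{good base} the free coordinates of the strings $r_j$ ($j\in T_1$) --- i.e., the positions where $\tau(j)$ is $1$ --- are mutually independent uniform bits, and distinct $r_j$ are independent. Distinct coordinates $p\ne p'$ therefore depend on disjoint sets of these underlying bits, so the parities $\big\{\bigoplus_{j\in T_1}(r_j)_p : p\in[n]\setminus T_0\big\}$ are mutually independent; and each of them is either identically $0$ (if no $r_j$, $j\in T_1$, is free at $p$) or a parity of at least one independent uniform bit, hence uniform on $\{0,1\}$. In all cases $\Pr[\bigoplus_{j\in T_1}(r_j)_p=1]\le\tfrac12$, and by independence across $p$,
$$\Pr_{r\from\R}\!\left[\bigoplus_{i\in T}r_i = 1^n\right]\;\le\;2^{-(n-|T_0|)}\;\le\;2^{-(n-d)}.$$

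Finally, there are at most $\sum_{k=0}^d\binom mk\le m^{d+1}$ sets $T$ to consider, and the choice $d=\frac{n}{4\log m}-1$ gives $m^{d+1}=m^{\,n/(4\log m)}=2^{n/4}$ while $d<n/4$; a union bound then yields
$$\Pr_{r\from\R}\!\left[\exists\,T\subseteq[m],\ |T|\le d:\ \bigoplus_{i\in T}r_i=1^n\right]\;\le\;2^{n/4}\cdot 2^{-(n-d)}\;\le\;2^{-n/2}\;\le\;\tfrac13,$$
which proves the claim (for the small $n$ with $d\le0$ the only admissible $T$ is $\emptyset$, so the statement is vacuous). I do not expect a genuine obstacle here; the only delicate point is the independence bookkeeping in the middle step --- verifying that the per-coordinate parities $\bigoplus_{j\in T_1}(r_j)_p$ really are mutually independent across $p$ and uniform at every coordinate covered by some template --- together with the edge cases $T_1=\emptyset$ and $|T_0|=d$, and checking that $d=\frac{n}{4\log m}-1$ is precisely the threshold at which $m^{d+1}\cdot2^{-(n-d)}$ drops below $1/3$.
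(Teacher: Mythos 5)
Your proof is correct and follows essentially the same approach as the paper: a first-moment/union-bound argument over $T$, using the coordinate-wise independence of the free bits in $r$ to show $\Pr[\bigoplus_{i\in T}r_i = 1^n]\le 2^{-(n-d)}$, then summing $m^{d+1}$ such terms. The only cosmetic difference is organizational --- the paper classifies each coordinate $k$ into three types (random, fixed-to-$1$ by $\mathcal B$, fixed-to-$0$), while you split $T$ into $T_0=T\cap[n]$ and $T_1=T\setminus[n]$ and restrict attention to coordinates $p\notin T_0$; both decompositions lead to the same per-$T$ bound and the same union-bound calculation.
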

\begin{proof}
    Fix an arbitrary \emph{good base} $\R$ of size $m$.
    Fix a set $T \subseteq [m]$ where $|T| \leq d$. We want to bound the probability $\Pr_{r \from \R}[\bigoplus_{i \in T} r_i = 1^n]$ that for this $r$ and for this $T$ the strings corresponding to the indices in $T$ sum up to the string of all ones. 
    Fix a specific index $k \in [n]$. We compute the probability that index $k$ is set to 1 in $\bigoplus_{i \in T} r_i$. To do this we need to understand how the candidate strings $r_i, i\in T$ can influence this value. 
    
    There are three possible scenarios for each index $k$: 
        \begin{itemize}
        \item (Type I) There is at least one string $r_i \in \{0, 1\}^n$ with $i \in T$ such that it is chosen from $\R^{\tau}$ where $\tau_k = 1$. Then in each such string, the bit at index $k$ is sampled independently at random with probability $\frac{1}{2}$. Thus $\Pr_{r \from \R}[\langle \bigoplus_{i \in T} r_i, \one_k\rangle = 1] = \frac{1}{2}$. 
        \item (Type II) There are no strings $r_{i}, i \in T$ such that $r_i$ is chosen from $\R^{\tau}$ and $\tau_k = 1$, but there is $r_i, i \in T$ that is chosen from $\mathcal{B}$, such that $r_i = \one_k$. Then the value of $\langle \bigoplus_{i \in T} r_i, \one_k\rangle$ is one since there is exactly one string in this sum with the $k$th index value set to one. Thus $\Pr_{r \from \R}[\langle \bigoplus_{i \in T} r_i, \one_k\rangle = 1] = 1$.
        \item (Type III) There are no strings $r_{i}, i \in T$ such that $r_i$ is chosen from $\R^{\tau}$ and $\tau_k = 1$, and there is no $r_i, i \in T$ that is chosen from $\mathcal{B}$, such that $r_i = \one_k$. Then for all strings $r_{i}$ the index $k$ is 0, so $\Pr_{r \from \R}[\langle \bigoplus_{i \in T} r_i, \one_k\rangle = 1] = 0$.
        \end{itemize}    
\begin{center}
\begin{figure}[h]
\begin{tabular}{ccccccccc}
\textbf{{\large Index $k$}}                       & $1$                                                          & $2$                                                          & $3$                                                          & $4$                                                          & $5$                                                          & $6$                                                          & $7$                                                          & $8$                                                          \\ \cline{2-9} 
\multicolumn{1}{r|}{$r_{i_1}$}                                     & \multicolumn{1}{c|}{$\star\cellcolor{blue_color}$}                     & \multicolumn{1}{c|}{$\star\cellcolor{blue_color}$}                     & \multicolumn{1}{c|}{$\star\cellcolor{blue_color}$}                     & \multicolumn{1}{c|}{$\star\cellcolor{blue_color}$}                     & \multicolumn{1}{c|}{$0$}                                     & \multicolumn{1}{c|}{$0$}                                     & \multicolumn{1}{c|}{$\star\cellcolor{blue_color}$}                                     & \multicolumn{1}{c|}{$0$}                                     \\ \cline{2-9} 
                                                                   &                                                              &                                                              &                                                              &                                                              &                                                              &                                                              &                                                              &                                                              \\ \cline{2-9} 
\multicolumn{1}{r|}{$r_{i_2}$}                                     & \multicolumn{1}{c|}{$\star\cellcolor{blue_color}$}                     & \multicolumn{1}{c|}{$\star\cellcolor{blue_color}$}                     & \multicolumn{1}{c|}{$\star\cellcolor{blue_color}$}                                     & \multicolumn{1}{c|}{$0$}                                     & \multicolumn{1}{c|}{$0$}                                     & \multicolumn{1}{c|}{$0$}                                     & \multicolumn{1}{c|}{$0$}                                     & \multicolumn{1}{c|}{$0$}                                     \\ \cline{2-9} 
                                                                   &                                                              &                                                              &                                                              &                                                              &                                                              &                                                              &                                                              &                                                              \\ \cline{2-9} 
\multicolumn{1}{r|}{$r_{i_3}$}                                     & \multicolumn{1}{c|}{$\star\cellcolor{blue_color}$}                                     & \multicolumn{1}{c|}{$0$}                                     & \multicolumn{1}{c|}{$\star\cellcolor{blue_color}$}                     & \multicolumn{1}{c|}{$\star\cellcolor{blue_color}$}                     & \multicolumn{1}{c|}{$0$}                                     & \multicolumn{1}{c|}{$0$}                                     & \multicolumn{1}{c|}{$\star\cellcolor{blue_color}$}                                     & \multicolumn{1}{c|}{$0$}                                     \\ \cline{2-9} 
                                                                   &                                                              &                                                              &                                                              &                                                              &                                                              &                                                              &                                                              &                                                              \\ \cline{2-9} 
\multicolumn{1}{r|}{$r_{i_4}$}                                     & \multicolumn{1}{c|}{$0$}                     & \multicolumn{1}{c|}{$\star\cellcolor{blue_color}$}                                     & \multicolumn{1}{c|}{$0$}                                     & \multicolumn{1}{c|}{$0$}                                     & \multicolumn{1}{c|}{$0$}                                     & \multicolumn{1}{c|}{$0$}                                     & \multicolumn{1}{c|}{$0$}                                     & \multicolumn{1}{c|}{$0$}                                     \\ \cline{2-9} 
                                                                   &                                                              &                                                              &                                                              &                                                              &                                                              &                                                              &                                                              &                                                              \\ \cline{2-9} 
\multicolumn{1}{r|}{$r_{i_5}$}                                     & \multicolumn{1}{c|}{$0$}                                     & \multicolumn{1}{c|}{$0$}                                     & \multicolumn{1}{c|}{$1\cellcolor{yellow}$}                                     & \multicolumn{1}{c|}{$0$}                                     & \multicolumn{1}{c|}{$0$}                                     & \multicolumn{1}{c|}{$0$}                                     & \multicolumn{1}{c|}{$0$}                                     & \multicolumn{1}{c|}{$0$}                                     \\ \cline{2-9} 
                                                                   &                                                              &                                                              &                                                              &                                                              &                                                              &                                                              &                                                              &                                                              \\ \cline{2-9} 
\multicolumn{1}{r|}{$r_{i_6}$}                                     & \multicolumn{1}{c|}{$0$}                                     & \multicolumn{1}{c|}{$0$}                                     & \multicolumn{1}{c|}{$0$}                                     & \multicolumn{1}{c|}{$0$}                                     & \multicolumn{1}{c|}{$1\cellcolor{yellow}$}                                     & \multicolumn{1}{c|}{$0$}                                     & \multicolumn{1}{c|}{$0$}                                     & \multicolumn{1}{c|}{$0$}                                     \\ \cline{2-9} 
\multicolumn{1}{l}{}                                               & \multicolumn{1}{l}{}                                         & \multicolumn{1}{l}{}                                         & \multicolumn{1}{l}{}                                         & \multicolumn{1}{l}{}                                         & \multicolumn{1}{l}{}                                         & \multicolumn{1}{l}{}                                         & \multicolumn{1}{l}{}                                         & \multicolumn{1}{l}{}                                         \\
\textbf{{\large Type}}                            & I                                                            & I                                                            & I                                                            & I                                                            & II                                                          & III                                                          & I                                                           & III                                                          \\
\multicolumn{1}{l}{}                                               & \multicolumn{1}{l}{{\color{white} $1/2$}} & \multicolumn{1}{l}{{\color{white} $1/2$}} & \multicolumn{1}{l}{{\color{white} $1/2$}} & \multicolumn{1}{l}{{\color{white} $1/2$}} & \multicolumn{1}{l}{{\color{white} $1/2$}} & \multicolumn{1}{l}{{\color{white} $1/2$}} & \multicolumn{1}{l}{{\color{white} $1/2$}} & \multicolumn{1}{l}{{\color{white} $1/2$}} \\
$\Pr_{r \from \R}[\langle \bigoplus_{i \in T} r_i, \one_k\rangle]$ & $1/2$                                                        & $1/2$                                                        & $1/2$                                                        & $1/2$                                                        & $1$                                                          & $0$                                                          & $1/2$                                                          & $0$                                                         
\end{tabular}
\caption{Example of index types, $T = \{i_1, i_2, i_3, i_4, i_5, i_6\}$.}
\label{fig:types}
\end{figure}    
\end{center}       
Notice that $T$ fully defines the types of all indices and thus the values of $\langle \bigoplus_{i \in T} r_i, \one_k\rangle$ for $k$ of types II and III don't depend on the choice of $r \from \R$. On the other hand, the values of indices of type I do depend on the choice of $r \from \R$. Each of them is either a parity of independent random bits or the negation of a parity of independent random bits which is fixed by $T$ too. Thus they behave as independent bits themselves and therefore the values $\langle \bigoplus_{i \in T} r_i, \one_k\rangle$ are mutually independent for all indices $k$.

Denote by $n_{\text{I}}, n_{\text{II}}, n_{\text{III}}$ the numbers of indices of each type. Notice that $n_{\text{I}}+n_{\text{II}}+n_{\text{III}} = n$ and  $n_{\text{II}} \leq d$. Then in this notation

        $$\Pr_{r \from \R}\left[\bigoplus_{i \in T} r_i = 1^n\right] = \left(\frac{1}{2}\right)^{n_{\text{I}}} 1^{n_{\text{II}}} 0^{n_{\text{III}}}.  $$

 If there exists $k \in [n]$ of the third type, the probability $\Pr_{r \from \R}[\bigoplus_{i \in T} r_i = 1^n]$ becomes $0$, so to upper bound the probability we may assume all the indices have one of the first two types. And, since $n_{\text{II}} \leq d$, to maximize the value we assume that $n_{\text{II}} = d$. Thus we have
        
        $$\Pr_{r \from \R}\left[\bigoplus_{i \in T} r_i = 1^n\right] = \left(\frac{1}{2}\right)^{n - n_{\text{II}}} 1^{n_{\text{II}}} \leq 1^d \left(\frac{1}{2}\right)^{n-d} = 2^{-(n-d)}.$$
        Since $d = \frac{n}{4\log{m}}-1$ and $m \geq n$, we have $n-d > \frac{n}{2}$ for sufficiently large $n$. So for a fixed $T$,
        
        $$\Pr_{r \from \R}\left[\bigoplus_{i \in T} r_i = 1^n\right] < 2^{-\frac{n}{2}}.$$
    There are $\binom{m}{\leq d}$ ways to choose the set $T$, so by a union bound over the choice of $T$, the probability that for some set of size at most $d$ the value $\bigoplus_{i \in T} r_i$ is equal to the string of all ones is
    
    $$\Pr_{r \from \R}\left[\exists T \subseteq [m], |T| \leq d: \bigoplus_{i \in T} r_i = 1^n\right] \leq \binom{m}{\leq d} 2^{-\frac{n}{2}} = 2^{-\frac{n}{2}} \sum_{d' = 0}^d \binom{m}{d'} \leq 2^{-\frac{n}{2}} \sum_{d' = 0}^d m^{d'} \leq 2^{-\frac{n}{2}} m^{d+1} $$ $$ = 2^{-\frac{n}{2}} m^{\frac{n}{4\log{m}}} = 2^{-\frac{n}{2}} 2^{\frac{n}{4\log{m}} \log{m}} = 2^{\frac{n}{4} - \frac{n}{2}} = 2^{-\frac{n}{4}} < \frac{1}{3}.$$
\end{proof}

\subsubsection{Lower bound on the degree of $\CRAPWithParameter{\R}$ }

For every \emph{good base} $\R$ and for every fixed $r \in \R$ define the function $\CRAPWithParameter{\R}_{r}: D[\R]_r \to \{0, 1\}$ where $D[\R]_r = \{\OracleWithParameter{\R}(r, x) \mid x \in \{0, 1\}^n\}$ is the subset of $\{0, 1\}^m$ where each domain point corresponds to one specific $x \in \{0, 1\}^n$ and is consistent with the fixed $r$. This function outputs the parity of the string encoded by the input: $\CRAPWithParameter{\R}_{r}(\Oracle(r, x)) = \parity(x)$. It is well defined since $\parity(x) = \bigoplus_{r_i \in \mathcal{B}} \langle x, r_i\rangle = \bigoplus_{i \in [n]} (\Oracle(r, x))_i$. Note that both $D[\R]_r$ and $\CRAPWithParameter{\R}$ are parameterized by $\R$ and, as with $\OracleWithParameter{\R}$, we will omit the parameter later in places where the parameter is clear from the context.

Our goal is to show that $\CRAPWithParameter{\R}$ is hard to approximate if $\R$ is a \emph{good base} of small size. We do this by showing that for every \emph{good base} $\R$ of size $m$ if $r \from \R$ u.a.r. then every polynomial $p$ of degree at most $d = O(\frac{n}{\log{m}})$ is completely uncorrelated with $\CRAPWithParameter{\R}_{r}(\Oracle(r, x))$ with high probability over the choice of $r$. 

\begin{theorem}\label{thm:gen-lb}
    For every \emph{good base} $\R$ of size $m$ if $r \from \R$ u.a.r. then with probability at least $\frac{2}{3}$ over the choice of $r$ every polynomial $p : \{0, 1\}^{m} \to \reals$ of degree at most $d = \frac{n}{4\log{m}} - 1$ doesn't approximate $\CRAPWithParameter{\R}_r$:

    $$\Pr_{r \from \R}\left[\forall \varepsilon < \frac{1}{2}, \ \forall p, \deg(p) \leq d, \exists \oracle \in D[\R]_r: \left|p(\oracle) - \CRAPWithParameter{\R}_r(\oracle)\right| > \varepsilon\right] \geq \frac{2}{3}$$
    
\end{theorem}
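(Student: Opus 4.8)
The plan is to isolate the only probabilistic ingredient, namely Claim~\ref{clm:gen-comb}, and then argue deterministically: for \emph{any} fixed $r$ in the good event of that claim, no polynomial of degree at most $d$ can even sign-represent $\CRAPWithParameter{\R}_r$ on $D[\R]_r$. The deterministic argument combines a change of variables that turns a low-degree polynomial of the oracle bits $(\Oracle(r,x))_j$ into a polynomial of $x$ whose Fourier support consists of small symmetric differences of the ``sample sets'' $\mathrm{supp}(r_j)$, together with the classical fact (going back to Aspnes et al.~\cite{aspnes1991expressive}) that a polynomial sign-representing $\parity$ must have $\parity$ in its Fourier support.

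Concretely, first I would fix $r \from \R$ satisfying the conclusion of Claim~\ref{clm:gen-comb}: for every $T \subseteq [m]$ with $|T| \le d$, $\bigoplus_{j \in T} r_j \ne 1^n$. This event has probability at least $\tfrac23$, so it suffices to prove the bracketed statement of the theorem for every such $r$. Suppose toward a contradiction that for some $\varepsilon < \tfrac12$ there is a polynomial $p$ with $\deg(p) \le d$ and $|p(\Oracle(r,x)) - \parity(x)| \le \varepsilon$ for all $x \in \{0,1\}^n$. Set $h := 1 - 2p$, which still has degree at most $d$; a direct case check (using $\varepsilon < \tfrac12$) shows that for every $x$ the value $h(\Oracle(r,x))$ is nonzero and has sign exactly $(-1)^{\parity(x)}$.

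Next comes the change of variables. Write $X_k := (-1)^{x_k} \in \{-1,1\}$ and, for $j \in [m]$, let $S_j := \mathrm{supp}(r_j)$, so that the $j$-th oracle bit satisfies $(-1)^{(\Oracle(r,x))_j} = \prod_{k \in S_j} X_k = \chi_{S_j}(X)$, i.e., $(\Oracle(r,x))_j \mapsto (1-\chi_{S_j}(X))/2$. After reducing $h$ to multilinear form (which does not raise its degree), each of its monomials is a constant times a product of at most $d$ of the variables $\oracle_j$; under the substitution this product becomes $\pm\,\chi_{\triangle_{j \in T} S_j}(X)$ for the corresponding set $T$ with $|T| \le d$, since a product of characters is the character of the symmetric difference. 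Hence the function $q(X) := h(\Oracle(r,x))$, viewed as a multilinear polynomial over $\{-1,1\}^n$, has Fourier support contained in $\{\triangle_{j \in T} S_j : T \subseteq [m],\ |T| \le d\}$. By the choice of $r$, the set $[n]$ is not of this form, so $\widehat{q}([n]) = 0$. On the other hand $\sgn(q(X)) = (-1)^{\parity(x)} = \chi_{[n]}(X)$ for every $X \in \{-1,1\}^n$, and $q(X) \ne 0$, so $q(X)\chi_{[n]}(X) = |q(X)| > 0$ pointwise; averaging over a uniform $X$ gives $\widehat{q}([n]) = \E_X\big[q(X)\chi_{[n]}(X)\big] > 0$, contradicting $\widehat{q}([n]) = 0$. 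Thus no such $p$ exists, and the theorem follows.

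I expect the only delicate point to be the bookkeeping in the substitution step: one must be careful that multilinearizing $h$ and then replacing each $\oracle_j$ by a character genuinely confines the Fourier support of $q$ to $\le d$-fold symmetric differences, and that the possibility of several subsets $T$ collapsing to the same set $S$ only helps (it can never produce $[n]$, which is all we need). The remaining pieces — the reduction from $\varepsilon$-approximation with $\varepsilon < \tfrac12$ to strict sign representation, and the one-line Fourier argument of~\cite{aspnes1991expressive} — are routine. Note also that, since the conclusion holds for every $\varepsilon < \tfrac12$, this is in fact a threshold-degree lower bound on $\CRAPWithParameter{\R}_r$.
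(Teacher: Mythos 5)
Your proof is correct and is essentially the paper's argument: both isolate Claim~\ref{clm:gen-comb} as the only probabilistic ingredient, and both then show that the composed polynomial $q(X) = h(\Oracle(r,x))$, expanded in the Fourier basis over $X\in\{-1,1\}^n$, has $\widehat{q}([n]) = 0$ yet must have $\widehat{q}([n]) \neq 0$ because it sign-represents $\parity$ on $D[\R]_r$ --- the paper's Claim~\ref{clm:gen-lb-helper} and the surrounding correlation computation are exactly this statement packaged as a correlation bound rather than as a sign-representation contradiction. The one wrinkle you flag is resolved just as you anticipate: either first pass to the $\pm 1$-valued bits $\oracle'_j := 1 - 2\oracle_j = \chi_{S_j}(X)$ (a degree-preserving linear change of variables, as the paper does), under which each monomial becomes a single character $\chi_{\triangle_{j \in T} S_j}$, or note that the $\{0,1\}$-valued monomial $\prod_{j \in T}\frac{1 - \chi_{S_j}}{2}$ expands to a signed sum of $\chi_{\triangle_{j \in T'} S_j}$ over $T' \subseteq T$, which still confines the Fourier support of $q$ to $\le d$-fold symmetric differences, so $\widehat{q}([n]) = 0$ follows in either case.
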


Note that Theorem~\ref{thm:gen-lb} rules out approximating polynomials that may be unbounded outside of the domain of $\CRAPWithParameter{\R}_r$. That is, it asserts that there is no low-degree approximating polynomial even when that polynomial is permitted to take values outside of $[0, 1]$ on points outside of the domain of $\CRAP_{r}$. Note also that since the lower bound applies for all $\eps < 1/2$, it actually entails a threshold degree lower bound on computing $\CRAPWithParameter{\R}$.

%\NV{Mark, should we add a comment about threshold degree here or we'll limit its appearance to the introduction?}

\begin{proof} 

    Fix an arbitrary \emph{good base} $\R$ of size $m$.

    For convenience in this proof, let us change notation to consider polynomials approximations over $\{-1, 1\}$ instead of over $\{0, 1\}$. Define $\Oracle': \R \times \{-1, 1\}^n \to \{-1, 1\}^{m}$ to be $(\Oracle'(r, x'))_i = 1 - 2(\Oracle(r, (\frac{1-x'_1}{2}, \frac{1-x'_2}{2}, \ldots, \frac{1-x'_n}{2})))_i = 1-2\langle x, r_i\rangle$ where $r_i$ is the vector corresponding to $i$th component of $\Oracle(r, x)$ and $x \in \{0, 1\}^n$ is the vector that corresponds to $x' \in \{-1, 1\}^n$: $x_i = \frac{1-x'_i}{2}$ for all $i \in [n]$. Notice that this change of notation satisfies the following: if $a \in \{0, 1\}$ and $a'$ is the corresponding value in the new notation $a' \in \{-1, 1\}$ then $a' = (-1)^{a}$.

    Let's also rewrite $\CRAP_{r}$ in this new notation. Let $D'_r$ represent the domain of $\CRAP'_r$: $D'_r = \{\Oracle'(r, x') \mid x' \in \{-1, 1\}^n\}$ and the function $\CRAP'_{r}: D'_r \to \{-1, 1\}$ be $\CRAP'_{r}(\Oracle'_1, \Oracle'_2, \ldots, \Oracle'_m) = 1 - 2 \CRAP_{r}(\frac{1-\Oracle'_1}{2}, \frac{1-\Oracle'_2}{2}, \ldots, \frac{1-\Oracle'_m}{2})$.

    Note that every polynomial $p': \{-1, 1\}^m \to \reals$ that approximates $\CRAP'_r$ to error $\varepsilon$ can be converted by a linear transformation into a polynomial $p: \{0, 1\}^m \to \reals$ of the same degree that approximates $\CRAP_r$ to error $\varepsilon/2$. So it suffices to prove that no polynomial $p'$ of degree at most $d$ approximates $\CRAP'_r$ to error $\varepsilon < 1$.
    
    Assume toward a contradiction that there is a polynomial $p'$ of degree $d$ that approximates $\CRAP'_{r}$. This means that there exists $\varepsilon < 1$ such that for all $\oracle' \in D'_r$,
    
    $$\left| p'(\oracle') - \CRAP'_{r}(\oracle')\right| < \varepsilon.$$
    
    Consider the following expression:

\begin{equation} \label{eq:1}
\begin{split}
\frac{1}{2^n} \left|\sum_{\oracle' \in D'_r} \CRAP'_{r}(\oracle') (\CRAP'_{r}(\oracle') - p'(\oracle'))\right| &\leq \frac{1}{2^n} \left(\max_{\oracle' \in D'_r} \left|p'(\oracle') - \CRAP'_{r}(\oracle')\right|\right) \left(\sum_{\oracle' \in D'_r} \left|\CRAP'_{r}(\oracle')\right| \right)\\ & < \frac{1}{2^n}\varepsilon |D'_r| = \varepsilon.
\end{split}
\end{equation}
    The last equality holds because $\Oracle'(r, \cdot)$ is surjective, and hence $|D'_r| = 2^n$. 
    On the other hand,

\begin{equation} 
\begin{split}
    \frac{1}{2^n} \left|\sum_{\oracle' \in D'_r} \CRAP'_{r}(\oracle') (\CRAP'_{r}(\oracle') - p'(\oracle'))\right| &= 
    \frac{1}{2^n} \left|\left(\sum_{\oracle' \in D'_r}\CRAP'_{r}(\oracle') \CRAP'_{r}(\oracle')\right) - \left(\sum_{\oracle' \in D'_r} \CRAP'_{r}(\oracle') p'(\oracle')\right)\right| \label{eq:2}\\ &= \frac{1}{2^n}\left||D'_r| - \left(\sum_{\oracle' \in D'_r} \CRAP'_{r}(\oracle') p(\oracle')\right)\right|.
\end{split}
\end{equation}

    We now show that with high probability the expression above is equal to $\frac{|D'_r|}{2^n}$.

    \begin{claim}\label{clm:gen-lb-helper}
        With probability at least $\frac{2}{3}$ over the choice of $r \from \R$, for every polynomial $p' : \{-1, 1\}^m \to \reals$ of degree at most $d = \frac{n}{4\log{m}} - 1$ we have $$\sum_{\oracle' \in D'_r} \CRAP'_{r}(\oracle') p'(\oracle') = 0.$$
    \end{claim}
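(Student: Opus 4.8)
The plan is to translate the sum into Fourier analysis over $\bits^n$ and reduce it to the purely combinatorial statement of Claim~\ref{clm:gen-comb}. Since $\Oracle'(r, \cdot) : \bits^n \to D'_r$ is a bijection (indeed $|D'_r| = 2^n$), the sum $\sum_{\oracle' \in D'_r} \CRAP'_r(\oracle') p'(\oracle')$ equals $\sum_{x' \in \bits^n} \CRAP'_r(\Oracle'(r, x')) \, p'(\Oracle'(r, x'))$. For $r \in \R$ write $S_i = S_i(r) = \{k \in [n] : (r_i)_k = 1\}$ for the support of the string $r_i$. Then the $i$-th oracle coordinate is a parity character, $(\Oracle'(r, x'))_i = (-1)^{\langle r_i, x\rangle} = \prod_{k \in S_i} x'_k =: \chi_{S_i}(x')$, and, because the first $n$ strings of $r$ are $\one_1, \dots, \one_n$, we also have $\CRAP'_r(\Oracle'(r, x')) = (-1)^{\parity(x)} = \chi_{[n]}(x')$.

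Next I would expand $p'$ into monomials. Writing $p'(z_1, \dots, z_m) = \sum_{T \subseteq [m],\, |T| \le d} c_T \prod_{i \in T} z_i$ and substituting $z_i = \chi_{S_i}(x')$, using $(x'_k)^2 = 1$ collapses each monomial to a single character $\prod_{i \in T} \chi_{S_i}(x') = \chi_{U_T}(x')$, where $U_T := \triangle_{i \in T} S_i$ is the symmetric difference of the supports $\{S_i : i \in T\}$. Hence $\CRAP'_r(\Oracle'(r, x')) \, p'(\Oracle'(r, x')) = \sum_{T:\, |T| \le d} c_T\, \chi_{[n]}(x')\chi_{U_T}(x') = \sum_{T:\, |T| \le d} c_T\, \chi_{[n] \triangle U_T}(x')$, and summing over $x' \in \bits^n$ with orthogonality of characters ($\sum_{x'} \chi_V(x') = 2^n$ if $V = \emptyset$, else $0$) yields $\sum_{\oracle' \in D'_r} \CRAP'_r(\oracle') p'(\oracle') = 2^n \sum_{T:\, |T| \le d,\ U_T = [n]} c_T$.

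It therefore suffices to show that with probability at least $2/3$ over $r \from \R$ there is no $T \subseteq [m]$ with $|T| \le d$ and $\triangle_{i \in T} S_i(r) = [n]$: in that case the sum on the right is empty regardless of the coefficients $c_T$, so the identity holds simultaneously for every polynomial $p'$ of degree at most $d$. But $\triangle_{i \in T} S_i(r) = [n]$ holds exactly when $\bigoplus_{i \in T} r_i = 1^n$, which is precisely the event bounded in Claim~\ref{clm:gen-comb} for the same $d = \frac{n}{4\log m} - 1$. Invoking that claim finishes the proof.

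The argument is essentially routine once this setup is in place; the one point that deserves care is the order of quantifiers, since we need a single ``good'' $r$ that defeats all degree-$d$ polynomials at once, and this is exactly why it matters that the obstruction reduces to a property of $r$ alone (no small symmetric difference of supports equals $[n]$) rather than to anything depending on $p'$. The only other thing to watch is the degree bookkeeping — that a degree-$d$ polynomial in the $m$ oracle bits composes to a combination of characters $\chi_U$ with $U$ a symmetric difference of at most $d$ supports — which is immediate from the monomial expansion above.
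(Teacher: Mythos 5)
Your proof is correct and follows essentially the same route as the paper: reduce to monomials, observe that each monomial in the oracle bits collapses to a single parity character of $x'$, and conclude via orthogonality of characters that the sum vanishes unless some symmetric difference $\bigoplus_{i\in T} r_i$ of at most $d$ of the $r_i$'s equals $1^n$, which Claim~\ref{clm:gen-comb} rules out with probability $\ge 2/3$. The paper computes the character sum directly as $\sum_{x\in\{0,1\}^n}(-1)^{\langle x, 1^n\oplus\bigoplus_{j\in T}r_j\rangle}$ rather than phrasing it in the language of Fourier supports, but the two are the same calculation.
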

    \begin{proof}

    Fix a polynomial $p'$ of degree at most $d = \frac{n}{4\log{m}} - 1$.
    By linearity it suffices to consider the case where $p'$ is a monomial, $p'(\oracle') = \prod_{j \in T} \oracle'_{j}$ for some $T \subseteq [m], |T| \leq d$. So 
    
    $$\sum_{\oracle' \in D'_r} \CRAP'_{r}(\oracle') p'(\oracle') = \sum_{x' \in \{-1, 1\}^n} \left(\prod_{i \in [n]} (x'_i) \right)\left(\prod_{j \in T} (\Oracle'(r, x'))_j\right) = \sum_{x \in \{0, 1\}^n}\left((-1)^{\langle x, 1^n\rangle} \right) \left(\prod_{j \in T} (-1)^{\langle x, r_{j}\rangle}\right) $$ $$ = \sum_{x \in \{0, 1\}^n}(-1)^{\langle x, 1^n\rangle}  (-1)^{\sum_{j \in T} \langle x, r_{j}\rangle} = \sum_{x \in \{0, 1\}^n}(-1)^{\langle x, 1^n\rangle}  (-1)^{\langle x, \bigoplus_{j \in T} r_{j}\rangle} = \sum_{x \in \{0, 1\}^n} (-1)^{\langle x, 1^n \oplus (\bigoplus_{j \in T} r_j)\rangle}$$
    
    This expression is not zero if and only if $\bigoplus_{j \in T} r_{j} = 1^n$. By Claim~\ref{clm:gen-comb} the probability that such $T$ exists is at most $\frac{1}{3}$. So the probability over the choice of $r$ for some polynomial $p' : \{-1, 1\}^m \to \reals$ of degree at most $d = \frac{n}{4\log{m}} - 1$ to have $$\sum_{\oracle' \in D'_r} \CRAP'_{r}(\oracle') p'(\oracle') \neq 0$$ 
    is at most $\frac{1}{3}$. 
    \end{proof}
    
    Combining expressions (\ref{eq:1}) and (\ref{eq:2}) and Claim~\ref{clm:gen-lb-helper}, we have that with probability at least $\frac{2}{3}$,
        
    $$\varepsilon > \frac{1}{2^n} \left|\sum_{\oracle' \in D'_r} \CRAP'_{r}(\oracle') (\CRAP'_{r}(\oracle') - p'(\oracle'))\right| = \frac{1}{2^n} \left||D'_r| - \left(\sum_{\oracle' \in D'_r} \CRAP'_{r}(\oracle') p'(\oracle')\right)\right| = \frac{|D'_r|}{2^n} = 1.$$

    And so $\varepsilon > 1$ which contradicts our assumption. Thus $\CRAP'_{r}(\Oracle(r, x))$  cannot be approximated by a polynomial of degree at most $\frac{n}{4\log{m}} - 1$ with probability at least $\frac{2}{3}$ over the choice $r \from \R$ sampled uniformly at random. And therefore $\CRAP_{r}(\Oracle(r, x))$ cannot be $\eps$-approximated for every constant $\eps < \frac{1}{2}$ with a polynomial of degree less than $\frac{n}{4\log{m}}$ with probability at least $\frac{2}{3}$ over the choice $r \from \R$ sampled uniformly at random for any \emph{good base} $\R$ of size $m$.
\end{proof}

\subsection{Lower bound for ordered search}

%\NV{Mark, please look at this part (new). From here...}

Finally, we combine our general lower bound on the approximate degree of $\CRAP$ with the upper bound on approximating $\GT_i$ to conclude our lower bound on the approximate degree of ordered search. We will use the statement of Claim~\ref{clm:os2-ub} with a lower degree of polynomials approximating $\GT_i$ since, even though its proof is more complicated than the proof of the weaker bound, as it allows us to obtain a better lower bound on ordered search. 

First, we apply Theorem~\ref{thm:gen-lb} to obtain a lower bound on the approximate degree for $\CRAPWithParameter{\R_{\OS++}}$.

%\NV{...to here.}

\begin{corollary}\label{cor:os2-lb}
    If $r \from \R_{\OS++}$ u.a.r. then with probability at least $\frac{2}{3}$ over the choice of $r$ every polynomial $p : \{0, 1\}^{m} \to \reals$ of degree at most $d = \frac{n}{16\log{n}} - 1$ fails to approximate $\CRAPWithParameter{\R_{\OS++}}_r$:

    $$\Pr_{r \from \R_{\OS++}}\left[\forall \varepsilon < \frac{1}{2}, \ \forall p, \deg(p) \leq d, \exists \oracle \in D[\R_{\OS++}]_r: \left|p(\oracle) - \CRAPWithParameter{\R_{\OS++}}_r(\oracle)\right| > \varepsilon\right] \geq \frac{2}{3}.$$
\end{corollary}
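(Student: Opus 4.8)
The plan is to derive this corollary as a direct instantiation of Theorem~\ref{thm:gen-lb} with the particular good base $\R_{\OS++}$ furnished by Claim~\ref{clm:os2-ub}. First I would recall that Claim~\ref{clm:os2-ub} guarantees that $\R_{\OS++}$ is a \emph{good base} (it is built by the same Cartesian-product-of-$S_{\tau_i}$ recipe together with the basis block $\mathcal{B}$, just using the more efficient $O(\log n)$-bit $\GT$ protocol), and that its size is $m = O(n^3 \log n)$. Since Theorem~\ref{thm:gen-lb} applies to \emph{every} good base, it applies to $\R_{\OS++}$: with probability at least $\tfrac{2}{3}$ over $r \from \R_{\OS++}$, no polynomial of degree at most $\tfrac{n}{4\log m} - 1$ approximates $\CRAPWithParameter{\R_{\OS++}}_r$ to any error $\varepsilon < \tfrac12$.

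The only remaining step is a parameter comparison: I would show that $\tfrac{n}{4\log m} - 1 \ge \tfrac{n}{16\log n} - 1$ for the relevant range of $n$. Since $m = O(n^3 \log n)$, we have $\log m = 3\log n + \log\log n + O(1)$, which is at most $4\log n$ for sufficiently large $n$; therefore $\tfrac{n}{4\log m} \ge \tfrac{n}{16\log n}$, and the degree bound $d = \tfrac{n}{16\log n} - 1$ claimed in the corollary is no larger than the bound $\tfrac{n}{4\log m}-1$ supplied by Theorem~\ref{thm:gen-lb}. A polynomial that fails against the larger threshold a fortiori fails against the smaller one, so the probabilistic statement transfers verbatim, including the fact that it holds for all $\varepsilon < \tfrac12$ (hence is a threshold-degree statement), and the quantification over $\oracle \in D[\R_{\OS++}]_r$.

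There is essentially no obstacle here beyond bookkeeping; the substantive work has already been done in Claim~\ref{clm:os2-ub} (that $\R_{\OS++}$ really is a good base of the stated size) and in Theorem~\ref{thm:gen-lb} (the hardness of $\CRAP$ for arbitrary good bases). If one wanted to be careful, the one point to double-check is that the constant hidden in $m = O(n^3\log n)$ does not spoil the inequality $\log m \le 4\log n$; since the constant only contributes an additive $O(1)$ to $\log m$, this holds for all sufficiently large $n$, and small $n$ can be absorbed into the asymptotic notation of the final bound $\Omega(n/\log^2 n)$ anyway. Thus the corollary follows.
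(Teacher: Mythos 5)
Your proposal is correct and follows essentially the same route as the paper: invoke Theorem~\ref{thm:gen-lb} for the good base $\R_{\OS++}$ of size $m = O(n^3\log n)$, then observe that $m \le n^4$ (equivalently $\log m \le 4\log n$) for large $n$, so the threshold $\frac{n}{4\log m}-1$ is at least $\frac{n}{16\log n}-1$, and a polynomial that fails at the larger threshold a fortiori fails at the smaller one. The bookkeeping and quantifier-transfer observations you make are exactly what the paper's proof does.
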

\begin{proof}
    The set $\R_{\OS++}$ is a \emph{good base} and has size $m = O(n^3 \log{n})$. By Theorem~\ref{thm:gen-lb}, with probability at least $\frac{2}{3}$ over the choice of $r$ every polynomial $p : \{0, 1\}^{m} \to \reals$ of degree at most $\frac{n}{4\log{m}} - 1$ fails to approximate $\CRAPWithParameter{\R_{\OS++}}_r$. But since the size of $\R_{\OS++}$ is $m \leq n^4$ for sufficiently large $n$ then every polynomial of degree at most $d = \frac{n}{16\log{n}} - 1 = \frac{n}{4\log{n^4}} - 1 \leq \frac{n}{4\log{m}} - 1$ fails to approximate $\CRAPWithParameter{\R_{\OS++}}_r$.
\end{proof}

By combining Claim~\ref{clm:os2-ub} and Corollary~\ref{cor:os2-lb}, we obtain the following.

\begin{theorem} \label{thm:os2-lb}
The approximate degree of ordered search is 

$$\adeg_{\frac{1}{2} - \gamma}(\OS_{2^n}) = \Omega\left(\frac{n}{\log^2{n}} - \log{\frac{1}{\gamma}}\right)$$

where $\gamma$ could depend on $n$, $0 < \gamma < \frac{1}{2}$.

\end{theorem}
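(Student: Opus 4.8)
The plan is to run the indirect argument of Buhrman--de Wolf through the redundant encoding $\Oracle$, inserting a robustification step whose error parameter is tuned to the advantage $\gamma$ so that the dependence on $\gamma$ ends up \emph{additive} in the degree. Suppose $p : \{0,1\}^{2^n} \to \reals$ is a polynomial of degree $D$ that $(\tfrac12-\gamma)$-approximates $\OS_{2^n}$; the promise-approximation definition already gives $|p(z)| \le \tfrac32$ for every $z \in \{0,1\}^{2^n}$, so $p$ is eligible to be fed to a robustification theorem. First I would fix a single outcome $r$ of $\R_{\OS++}$ that is simultaneously ``good'' for Claim~\ref{clm:os2-ub} and for Corollary~\ref{cor:os2-lb}: each of these events has probability at least $\tfrac23$ over $r \from \R_{\OS++}$, so a union bound shows that such an $r$ exists. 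For this $r$ there is a family $\{q_{(r,i)} : i \in \{0,1\}^n\}$ of polynomials of degree $O(\log n)$ with $|q_{(r,i)}(\Oracle(r,x)) - \GT_i(x)| \le \tfrac16$ for \emph{all} $i,x$, and simultaneously no polynomial of degree at most $d = \tfrac{n}{16\log n} - 1$ approximates $\CRAPWithParameter{\R_{\OS++}}_r$ to any error $< \tfrac12$ on its domain $D[\R_{\OS++}]_r$.

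Next I would robustify $p$. Applying the polynomial robustification theorem (\cite{sherstov2012making}; see also~\cite{BuhrmanNRW07}) to $p$ (after rescaling so it is bounded by $1$ on the cube), with error parameter $\delta = \gamma/4$ and constant noise tolerance $\tfrac16$, produces a polynomial $p^{\mathrm{rob}} : \reals^{2^n} \to \reals$ of degree $O(D + \log(1/\gamma))$ such that $|p^{\mathrm{rob}}(z) - p(b)| \le \gamma/4$ whenever $b \in \{0,1\}^{2^n}$ and $\|z - b\|_\infty \le \tfrac16$. I would then form the composed polynomial $\tilde p(y) := p^{\mathrm{rob}}\big(q_{(r,0)}(y),\, \dots,\, q_{(r,2^n-1)}(y)\big)$; substituting the degree-$O(\log n)$ polynomials $q_{(r,i)}$ into each monomial of $p^{\mathrm{rob}}$ shows $\deg(\tilde p) = O\big((D + \log(1/\gamma))\log n\big)$.

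The correctness check is that $\tilde p$ approximates $\CRAPWithParameter{\R_{\OS++}}_r$ to error $< \tfrac12$ everywhere on $D[\R_{\OS++}]_r$. For any $x \in \{0,1\}^n$, the vector $\vec a := (\GT_0(x), \dots, \GT_{2^n-1}(x))$ has the form $0^x 1^{2^n-x}$, so it is a legal ordered-search input with $\OS_{2^n}(\vec a) = \parity(x)$, while $\vec z := (q_{(r,0)}(\Oracle(r,x)), \dots, q_{(r,2^n-1)}(\Oracle(r,x)))$ satisfies $\|\vec z - \vec a\|_\infty \le \tfrac16$ by the accuracy of the $q_{(r,i)}$; hence, using $\CRAPWithParameter{\R_{\OS++}}_r(\Oracle(r,x)) = \parity(x) = \OS_{2^n}(\vec a)$,
$$\big|\tilde p(\Oracle(r,x)) - \CRAPWithParameter{\R_{\OS++}}_r(\Oracle(r,x))\big| \;\le\; \big|p^{\mathrm{rob}}(\vec z) - p(\vec a)\big| + \big|p(\vec a) - \OS_{2^n}(\vec a)\big| \;\le\; \tfrac{\gamma}{4} + \big(\tfrac12 - \gamma\big) \;<\; \tfrac12.$$
By Corollary~\ref{cor:os2-lb} (applied with the chosen good $r$) this forces $\deg(\tilde p) > d$, i.e. $O\big((D+\log(1/\gamma))\log n\big) \ge \tfrac{n}{16\log n} - 1$. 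Rearranging gives $D + \log(1/\gamma) = \Omega(n/\log^2 n)$, hence $D = \Omega\big(n/\log^2 n - \log(1/\gamma)\big)$ (the claim being vacuous when $n/\log^2 n - \log(1/\gamma)$ is non-positive, since $D \ge 0$ always).

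The step I expect to require the most care is the robustification: to obtain the \emph{subtractive} $-\log(1/\gamma)$ in the final bound, one must use a version of the robustness theorem whose degree overhead is only $O(D + \log(1/\delta))$, rather than a multiplicative $\mathrm{poly}(1/\delta)$ or even $\mathrm{polylog}(1/\delta)$ blow-up — with a weaker robustification bound one only recovers something like $D = \Omega\big(n/(\log^2 n \cdot \mathrm{polylog}(1/\gamma))\big)$, or the $\Omega(\gamma^2 n/\log^2 n)$-type bound obtainable by first amplifying $p$ with a degree-$O(1/\gamma)$ univariate polynomial. The remaining points are routine bookkeeping: checking that $\tilde p$ is only ever evaluated at inputs $\Oracle(r,x)$ where the $q_{(r,i)}$ are guaranteed accurate, that boundedness of $p$ on the cube (needed to invoke robustification) is built into the promise-approximation definition, and that the union bound over $r$ leaves the two ``good'' events jointly satisfiable. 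Finally, the same argument applied to Claim~\ref{clm:os-ub} in place of Claim~\ref{clm:os2-ub} would give the slightly weaker $\Omega(n/\log^3 n - \log(1/\gamma))$ bound, explaining why the sharper $O(\log n)$-degree $\GT$ approximation is used.
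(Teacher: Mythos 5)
Your proposal is essentially the paper's proof. Both combine Sherstov's additive-overhead robustification theorem (degree $O(D + \log(1/\delta))$ with $\delta$ chosen proportional to $\gamma$) with the composition $p^{\mathrm{rob}}(q_{(r,i)}(\Oracle(r,x)))_i$, fix a single good $r$ via a union bound over Claim~\ref{clm:os2-ub} and Corollary~\ref{cor:os2-lb}, and then invoke the degree lower bound on $\CRAPWithParameter{\R_{\OS++}}_r$ to force $(D+\log(1/\gamma))\log n = \Omega(n/\log n)$; the only cosmetic differences are your choice $\delta=\gamma/4$ versus the paper's $\gamma/2$, and your explicit triangle-inequality decomposition of the error, which the paper folds into a single statement that the robustified polynomial $(\tfrac12-\gamma+\delta)$-approximates $\OS_{2^n}$.
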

\begin{proof}

Suppose $\OS_{2^n}$ can be $(\frac{1}{2} - \gamma)$-approximated by a bounded polynomial of degree $d$ for some $\frac{1}{2} > \gamma > 0$. By \cite[Theorem 1.1]{sherstov2012making}, for every $\delta > 0$, this polynomial can be converted to a polynomial $p$ of degree $O(d+\log{\frac{1}{\delta}})$ that $(\frac{1}{2} - \gamma +\delta)$-approximates $\OS_{2^n}$ and is robust to noise in its inputs. That is,
    
    $$|\OS_{N}(\oracle) - p(\oracle+\Delta)| < \frac{1}{2} - \gamma + \delta$$
    for all $\oracle \in \{0, 1\}^N$, all $\Delta \in [-\frac{1}{6}, \frac{1}{6}]^N$, and $N = 2^n$. If we put $\delta = \frac{\gamma}{2}$, then $p$ is a $(\frac{1}{2} - \frac{\gamma}{2})$-approximating polynomial for $\OS_{2^n}$ with degree $O\left(d + \log(\frac{1}{\gamma})\right)$.
    
    Note that $\OS_{2^n}(\GT_{0^n}(x), \GT_{0^{n-1}1}(x), \ldots, \GT_{1^n}(x)) = \CRAPWithParameter{\R_{\OS++}}_{r}(\Oracle(r, x))$ for every $r \in \R_{\OS++}$. So by Claim~\ref{clm:os2-ub}, there exists a constant $c$ such that the composed polynomial \\ $p(q_{(r, 0^n)}(\Oracle(r, x)), q_{(r, 0^{n-1}1)}(\Oracle(r, x)), \ldots q_{(r, 1^n)}(\Oracle(r, x)))$ has degree at most $\deg(p) \max_{i}(\deg(q_{(r, i)})) = c \left(d + \log(\frac{1}{\gamma})\right) \log{n}$ and approximates $\CRAPWithParameter{\R_{\OS++}}_{r}(\Oracle(r, x))$ to error $(\frac{1}{2} - \frac{\gamma}{2})$ with probability at least $\frac{2}{3}$ over the choice of $r \from \R_{\OS++}$. This holds because although the polynomials $q_{(r, i)}$ do not compute the functions $\GT_i$ exactly, but only approximate them with small error, the outer polynomial $p$ is robust to this small error in the inputs. Note also that while the composed polynomial is bounded on the domain of $\CRAP_r$, it may be arbitrarily unbounded on points outside its domain. 
    
    On the other hand, by Claim~\ref{cor:os2-lb}, with probability at least $\frac{2}{3}$ over the choice of $r$, the function $\CRAPWithParameter{\R_{\OS++}}_{r}$ cannot be approximated to any error $(\frac{1}{2} - \frac{\gamma}{2}) \in (0, \frac{1}{2})$ by a polynomial in $\Oracle$ of degree less than $\frac{n}{16\log{n}}$. By a union bound, with probability at least $1 - (1-\frac{2}{3}) - (1-\frac{2}{3}) = \frac{1}{3}$ both conditions on $r$ hold simultaneously. Thus there exists $r \in \R_{\OS++}$ such that $p(q_{(r, 0^n)}(\Oracle(r, x)), q_{(r, 0^{n-1}1)}(\Oracle(r, x)), \ldots q_{(r, 1^n)}(\Oracle(r, x)))$ approximates $\CRAP_{r}(\Oracle(r, x))$ and $\CRAP_{r}(\Oracle(r, x))$ cannot be approximated by a polynomial of degree less than $\frac{n}{16\log{n}}$. So
    
    $$c \left(d + \log(\frac{1}{\gamma})\right) \log{n} \geq \frac{n}{16\log{n}}.$$
    And thus
    
    $$d + \log(\frac{1}{\gamma}) \geq \frac{n}{16c\log^2{n}},$$
    so we conclude that
    $$d = \Omega\left(\frac{n}{\log^2{n}}  - \log(\frac{1}{\gamma})\right).$$

\end{proof}

\section{Anchored hidden string and hidden string}\label{sec:HS}

Now let us switch gears and consider the (anchored) hidden string problem, in which the goal is to reconstruct a string given information about the presence of specific substrings. In the decisional anchored hidden string ($\AHS$) problem, the information given as input consists of whether each string $s$ is a substring of the hidden input $x$ starting at position $i$, for all valid $i$ and $s$. The goal is then to compute $\parity(x)$.

In order to prove a lower bound for $\HSA$ we will follow the same outline as for $\OS$. That is, first we will introduce a convenient set $\R_{\AHS}$ of collections of $n$-bit strings and show that oracle $\OracleWithParameter{\R_{\AHS}}$ providing the inner products of $x$ with strings from the random sample from $\R_{\AHS}$ are useful for computing $\phi_{i, s}(x)$ for all possible queries $(i, s)$ where $i \in [n], s \in \{0, 1\}^{\leq n - i+1}$ and $\phi_{i, s}(x) = 1$ iff $s$ is a substring of $x$ starting from position $i$. After that, we will show that it is hard to compute $\CRAPWithParameter{\R_{\AHS}}$, the parity of $x$ using the oracle $\OracleWithParameter{\R_{\AHS}}$ with high probability. And finally, we will conclude that computing $\AHS$ is hard since composing an approximating polynomial for $\AHS$ with polynomials approximating $\phi_{i, s}(x)$ would allow us to approximate the $\CRAPWithParameter{\R_{\AHS}}$ function.

%\NV{...to here.}

\subsection{Preliminaries}

W define several functions in order to formalize the problem.

Throughout this section, we use the following notation $\{0, 1\}^{\leq n}$ to denote the set of all bit strings of size at most $n$: $\{0, 1\}^{\leq n} = \bigcup_{k = 0}^{n} \{0, 1\}^k$.

\begin{definition}
    For all $s \in \{0, 1\}^{\leq n}$ define the function $\chi_s: \{0, 1\}^n \to \{0, 1\}$ to be the indicator of whether the input string $x$ has $s$ as a substring: that is, there exists an integer $i$ such that $x_{i + k - 1} = s_k$ for all $1 \leq k \leq |s|$ then $\chi_s(x) = 1$ and otherwise $\chi_s(x) = 0$.
\end{definition}

\begin{definition}
    Define the ``hidden string'' function $\HS_N: \{0, 1\}^N \to \{0, 1\}$ be the partial function that takes $N = |\{0, 1\}^{\leq n}| = 2^{n+1}-1$ inputs, each corresponding to a substring $s \in \{0, 1\}^{\leq n}$, and, given a collection of $\chi_s(x)$ for some fixed $x \in \{0, 1\}^n$ as an input, outputs $\parity(x)$. 
\end{definition}

We will also consider a variation of this problem where the additional information is not only whether a specific substring is present in the hidden string, but if this substring is present at a specific location of the hidden string.

\begin{definition}
    For all $i \in [n]$ and $s \in \{0, 1\}^{\leq n-i+1}$ define the function $\phi_{i, s}: \{0, 1\}^n \to \{0, 1\}$ to be the indicator of whether the input string $x$ has $s$ as a substring starting from position $i$: that is, if $x_{i + k - 1} = s_k$ for all $1 \leq k \leq |s|$ then $\phi_{i, s}(x) = 1$ and otherwise $\phi_{i, s}(x) = 0$.
\end{definition}

\begin{definition}
    Let the ``anchored hidden string'' function $\AHS_N : \{0, 1\}^N \to \{0, 1\}$ be the partial function that takes $N = |\{(i, s) \mid i \in [n], s \in \{0, 1\}^{\leq n-i+1}\}| = 2^{n+2}-n-4$ inputs, each corresponding to a pair of $i \in [n]$ and $s \in \{0, 1\}^{\leq n-i+1}$, and, given a collection of $\phi_{i, s}(x)$ for some fixed $x \in \{0, 1\}^n$ as an input, outputs $\parity(x)$. 
\end{definition}

\subsection{Approximating polynomials for $\phi_{i, s}$}

We start our proof by showing that for some \emph{good base} $\R_{\AHS}$ the oracle $\OracleWithParameter{\R_{\AHS}}$ could be used to make the computation of the functions $\phi_{i, s}$ more efficient.

\begin{claim}\label{clm:ahs-ub}
   There exists a \emph{good base} $\R_{\AHS}$ of size $m = O(n^3)$ such that if $r \from \R_{\AHS}$ is sampled uniformly at random, then with probability at least $\frac{2}{3}$ over the choice of $r$ there exists a family of polynomials $\{q_{(r, i, s)}: \{0, 1\}^{m} \to \{0, 1\} \mid i \in [n], s \in \{0, 1\}^{\leq n-i+1}\}$ of degree at most $4$ such that given $\OracleWithParameter{\R_{\AHS}}(r, x)$ as the input, each polynomial $q_{(r, i, s)}(\OracleWithParameter{\R_{\AHS}}(r, x)$ approximates the corresponding $\phi_{i, s}(x)$ with error at most $\frac{1}{6}$. That is,
   
    $$\Pr_{r \from \R_{\AHS}}\left[\exists i \in [n], x \in \{0, 1\}^n, s \in \{0, 1\}^{\leq n-i+1}: \left|q_{(r, i, s)}(\OracleWithParameter{\R_{\AHS}}(r, x)) - \phi_{i, s}(x)\right| > \frac{1}{6}\right] < \frac{1}{3}.$$
    
\end{claim}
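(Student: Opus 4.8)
The plan is to mirror the proof of Claim~\ref{clm:os-ub} almost verbatim, exploiting the fact that each $\phi_{i,s}$ is a \emph{single} equality test --- between the length-$|s|$ window of $x$ starting at position $i$ and the fixed string $s$ --- rather than a $\log n$-round binary search. Consequently no amplification ``across rounds'' is needed, the relevant random strings are supported on a single window at a time, and the resulting polynomials will have constant degree.

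First I would build the good base. For a start position $i \in [n]$ and a length $\ell$ with $1 \le \ell \le n-i+1$, let $\tau^{(i,\ell)} = 0^{i-1}1^{\ell}0^{n-i-\ell+1}$ be the template marking exactly the window $[i, i+\ell-1]$, and form one ``package'' consisting of $\alpha = 4$ independent copies of $\R^{\tau^{(i,\ell)}}$ for every such pair $(i,\ell)$. Then take $t = \Theta(n)$ such packages $\R_1, \dots, \R_t$ and set $\R_{\AHS} = \mathcal{B} \times \bigtimes_{j \in [t]} \R_j$. This is a good base by construction, and since there are $\binom{n+1}{2} = O(n^2)$ windows, its size is $m = n + t \cdot \alpha \cdot O(n^2) = O(n^3)$, as required.

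Next I would define, for each fixed $r \in \R_{\AHS}$ (which is hardcoded into the algorithm, exactly as in Claim~\ref{clm:os-ub}) and each query $(i,s)$ with $\ell = |s|$, the randomized query algorithm $A_{(r,i,s)}$: sample a package $j \from [t]$, and accept iff for each of the $\alpha$ strings $r_v$ drawn from $\R^{\tau^{(i,\ell)}}$ inside package $j$ the oracle bit $(\Oracle(r,x))_v = \langle r_v, x\rangle$ agrees with the known constant $\langle r_v, s'\rangle$, where $s'$ is the string that equals $s$ on the window $[i, i+\ell-1]$ and is $0$ elsewhere (the empty-string case $\ell = 0$ is trivial: take $q \equiv 1$). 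Since each equality check is an affine function of a single oracle bit, the acceptance probability $q_{(r,i,s)}$ of $A_{(r,i,s)}$ is a polynomial in $\Oracle(r,x)$ of degree at most $\alpha = 4$, and it is $[0,1]$-valued everywhere. For correctness: if $\phi_{i,s}(x) = 1$ then $x$ and $s'$ agree on the window, so every check passes in every package and $q_{(r,i,s)}(\Oracle(r,x)) = 1$; if $\phi_{i,s}(x) = 0$ then $\delta := x|_{\tau^{(i,\ell)}} \oplus s'|_{\tau^{(i,\ell)}} \ne 0^{\ell}$, and for a fixed such $x$ and a fixed package $j$ the indicator $W_j$ that all $\alpha$ checks in package $j$ pass is a product of $\alpha$ independent $\mathrm{Bernoulli}(1/2)$ variables over the choice of $r$ (each check passes iff $\langle r_v|_{\tau^{(i,\ell)}}, \delta\rangle = 0$, which has probability exactly $\frac{1}{2}$ since $r_v|_{\tau^{(i,\ell)}}$ is uniform and $\delta \ne 0$), so $\E_r[W_j] = 2^{-\alpha} = \frac{1}{16}$, and the $W_j$ are independent across $j$ since distinct packages use disjoint randomness. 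Then $q_{(r,i,s)}(\Oracle(r,x)) = \frac{1}{t}\sum_{j \in [t]} W_j$, Hoeffding's inequality gives $\Pr_r[q_{(r,i,s)}(\Oracle(r,x)) > \frac{1}{16} + \frac{1}{12}] \le e^{-t/72}$, and since there are at most $n \cdot 2^{n+1} \cdot 2^n = 2^{O(n)}$ triples $(i,s,x)$, a union bound with $t = \Theta(n)$ chosen large enough makes the total failure probability at most $\frac{1}{3}$; on the complementary event every $q_{(r,i,s)}$ approximates $\phi_{i,s}$ to error at most $\frac{1}{16} + \frac{1}{12} < \frac{1}{6}$, which is the claim.

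The only genuinely non-routine point --- and the main obstacle --- is the same quantifier switch that arises in Claim~\ref{clm:os-ub}: for a \emph{fixed} triple $(i,s,x)$ a random base $r$ is good with high probability, but we need a single $r$ that is simultaneously good for all $\approx n \cdot 2^{2n}$ triples. This is precisely what the $t = \Theta(n)$ independent packages together with the ``pick a random package'' step of $A_{(r,i,s)}$ achieve, via a Newman-style argument~\cite{newman1991private}; taking $t$ proportional to $n$ is what makes the union bound go through while keeping $m = O(n^3)$. Everything else --- the elementary analysis of the inner-product equality test and the bookkeeping turning the randomized query algorithm into a degree-$4$ polynomial --- is routine and structurally identical to the ordered-search case.
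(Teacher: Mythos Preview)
Your proposal is correct and follows essentially the same approach as the paper: the same good base $\R_{\AHS}=\mathcal{B}\times\bigtimes_{j\in[t]}\R_j$ with $\alpha=4$ copies of each window template per package and $t=\Theta(n)$ packages, the same randomized algorithm $A_{(r,i,s)}$ that picks a random package and runs the $\alpha$-fold inner-product equality test, and the same Hoeffding-plus-union-bound Newman-style argument to handle the quantifier switch. The only cosmetic difference is that you track the exact mean $2^{-\alpha}=\tfrac{1}{16}$ and threshold $\tfrac{1}{16}+\tfrac{1}{12}$, whereas the paper rounds the mean up to $\tfrac{1}{12}$ and uses threshold $\tfrac{1}{6}$ directly; both give the same conclusion.
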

\begin{proof}

The proof of this statement follows the same outline as the proof of  Claim~\ref{clm:os-ub}. First, we will construct a \emph{good base} $\R_{\AHS}$, and then we will show (in two stages) how to compute $\phi_{i, s}$ given $\OracleWithParameter{\R_{\AHS}}$.

\textbf{Constructing the \emph{good base} $\R_{\AHS}$.}
Our base for the oracle should contain all the strings needed to check the equality with every substring of $x$, so let $\hat{\R} = \R^{1^{n}} \times \left(\R^{1^{n-1} 0^1} \times \R^{0^{1} 1^{n-1}}\right) \times \left(\R^{1^{n-2} 0^2} \times \R^{0^{1} 1^{n-1} 0^1} \times \R^{0^{2} 1^{n-2}}\right) \times \ldots \times \left(\bigtimes_{i = 0}^{k}\R^{0^{i} 1^{i+k}0^{n-i-k}}\right) \times \ldots \times \left(\bigtimes_{i = 0}^{n-1}\R^{0^{i} 1^{i+1}0^{n-(i+1)}}\right)$. See Figure~\ref{fig:struct-ahs} for an illustration.

\begin{figure}[h]
\begin{center}
\begin{tabular}{ccc}
{\Large $\tau$} & {\Large$\R^{\tau}$} & Structure of $\R^{\tau}$\\ \\
{\Large 1111} & {\Large $\{0, 1\}^4$}   & \begin{tabular}{|c|c|c|c|} \hline \cellcolor{blue_color}$\star$ & \cellcolor{blue_color}$\star$ & \cellcolor{blue_color}$\star$ & \cellcolor{blue_color}$\star$  \\ \hline \end{tabular} \\
                         &                                                                                                                                                                                                                                           \\
{\Large 1110} & {\Large $\{0, 1\}^3 \{0\}$} & \begin{tabular}[c]{@{}l@{}}\begin{tabular}{|c|c|c|c|} \hline \cellcolor{blue_color}$\star$ & \cellcolor{blue_color}$\star$ & \cellcolor{blue_color}$\star$ & $0$ \\ \hline \end{tabular}\end{tabular}                     \\
{\Large 0111} & {\Large $\{0\} \{0, 1\}^3$}   & \begin{tabular}{|c|c|c|c|c|}\hline $0$ & \cellcolor{blue_color}$\star$ & \cellcolor{blue_color}$\star$ & \cellcolor{blue_color}$\star$ \\ \hline \end{tabular}                     \\
                         &                                                                                                                                                                                                                                           \\
{\Large 1100} & {\Large $\{0, 1\}^2 \{0\}^2$} & \begin{tabular}{|c|c|c|c|} \hline \cellcolor{blue_color}$\star$ & \cellcolor{blue_color}$\star$ & $0$ & $0$ \\ \hline \end{tabular}                                         \\
{\Large 0110} & {\Large $\{0\} \{0, 1\}^2 \{0\}$} & \begin{tabular}{|c|c|c|c|} \hline $0$ & \cellcolor{blue_color}$\star$ & \cellcolor{blue_color}$\star$ & $0$ \\ \hline \end{tabular}                                         \\
{\Large 0011} & {\Large $\{0\}^2 \{0, 1\}^2$}   & \begin{tabular}{|c|c|c|c|} \hline $0$ & $0$ & \cellcolor{blue_color}$\star$ & \cellcolor{blue_color}$\star$ \\ \hline \end{tabular}                                         \\
                         &                                                                                                                                                                                                                                           \\
{\Large 1000} & {\Large $\{0, 1\} \{0\}^3$} & \begin{tabular}{|c|c|c|c|c|} \hline \cellcolor{blue_color}$\star$ & $0$ & $0$ & $0$ \\ \hline \end{tabular}                                                             \\
{\Large 0100} & {\Large $\{0\} \{0, 1\} \{0\}^2$} & \begin{tabular}{|c|c|c|c|c|} \hline $0$ & \cellcolor{blue_color}$\star$ & $0$ & $0$ \\ \hline \end{tabular}                                                             \\
{\Large 0010} & {\Large $\{0\}^2 \{0, 1\} \{0\}$} & \begin{tabular}{|c|c|c|c|} \hline $0$ & $0$ & \cellcolor{blue_color}$\star$ & $0$ \\ \hline \end{tabular}                                                             \\
{\Large 0001} & {\Large $\{0\}^3 \{0, 1\}$}   & \begin{tabular}{|c|c|c|c|c|} \hline $0$ & $0$ & $0$ & \cellcolor{blue_color}$\star$ \\ \hline \end{tabular}                                                             \\
\end{tabular}
\end{center}
\caption{Structure of $\hat{\R}$ for $n = 4$. Blue cells with $\star$ represent places where either 0 or 1 values could be.}
\label{fig:struct-ahs}
\end{figure}

On top of this, we are going to add two other steps to the structure. First, we are going to have $t$ individual ``prepackaged'' copies. Let $\R_1 = \ldots = \R_{t} = \times_{\alpha} \hat{\R}$, and $\R' = \bigtimes_{j \in [t]} \R_{j}$. Secondly, we add a set of ``basis'' strings to the structure: $\mathcal{B} = \{\one_1\} \times \{\one_2\} \times \ldots \times \{\one_{n-1}\} \times \{\one_{n}\}= \{10\ldots0\}\times\{010\ldots0\} \times \ldots \times \{00\ldots010\} \times \{00\ldots01\}$. The final underlying structure of the oracle will be a Cartesian product of all $t$ copies and $\mathcal{B}$: 
$\R = \mathcal{B} \times \R' = \mathcal{B} \times (\bigtimes_{j \in [t]} \R_{j})$. See Figure~\ref{fig:oracle-ahs} for an illustration. We also set the parameters to be $\alpha = 4, t = 1000n\ln{2}$.
Notice that this set $\R_{\AHS}$ is a \emph{good base} by construction, and has size $m = n + \alpha t \frac{n(n+1)}{2} =  O(n^3)$.

\begin{figure}
\begin{center}
\begin{tabular}{ccccccccccc}
$\mathcal{B}$ & & $\R_1$ & & $\R_2$ & & & & $\R_{t-1}$ &  & $\R_t$ \\
 & & $\alpha$ copies & & $\alpha$ copies & & & & $\alpha$ copies &  & $\alpha$ copies \\
\footnotesize
\tabcolsep=0.11cm
\begin{tabular}{|cccc|} \hline \cellcolor{yellow}1 & 0 & 0 & 0  \\ 0 & \cellcolor{yellow}1 & 0 & 0 \\ 0 & 0 & \cellcolor{yellow}1 & 0 \\ 0 & 0 & 0 & \cellcolor{yellow}1 \\ \hline \end{tabular} & $\bigtimes$ & \begin{tabular}{|cccc|} \hline $\cellcolor{blue_color}$ & $\cellcolor{blue_color}$ & $\cellcolor{blue_color}$ & $\cellcolor{blue_color}$ \\  $\cellcolor{blue_color}$ & $\cellcolor{blue_color}$ & $\cellcolor{blue_color}$ &                    \\                    & $\cellcolor{blue_color}$ & $\cellcolor{blue_color}$ & $\cellcolor{blue_color}$ \\ $\cellcolor{blue_color}$ & $\cellcolor{blue_color}$ &                    &                    \\                    & $\cellcolor{blue_color}$ & $\cellcolor{blue_color}$ &                    \\                    &                    &  $\cellcolor{blue_color}$ & $\cellcolor{blue_color}$ \\ $\cellcolor{blue_color}$ &                    &                    &                    \\                    & $\cellcolor{blue_color}$ &                    &                    \\                    &                    & $\cellcolor{blue_color}$ &                    \\                    &                    &                    & $\cellcolor{blue_color}$ \\ \hline \end{tabular} & $\bigtimes$ & \begin{tabular}{|cccc|} \hline $\cellcolor{blue_color}$ & $\cellcolor{blue_color}$ & $\cellcolor{blue_color}$ & $\cellcolor{blue_color}$ \\  $\cellcolor{blue_color}$ & $\cellcolor{blue_color}$ & $\cellcolor{blue_color}$ &                    \\                    & $\cellcolor{blue_color}$ & $\cellcolor{blue_color}$ & $\cellcolor{blue_color}$ \\ $\cellcolor{blue_color}$ & $\cellcolor{blue_color}$ &                    &                    \\                    & $\cellcolor{blue_color}$ & $\cellcolor{blue_color}$ &                    \\                    &                    &  $\cellcolor{blue_color}$ & $\cellcolor{blue_color}$ \\ $\cellcolor{blue_color}$ &                    &                    &                    \\                    & $\cellcolor{blue_color}$ &                    &                    \\                    &                    & $\cellcolor{blue_color}$ &                    \\                    &                    &                    & $\cellcolor{blue_color}$ \\ \hline \end{tabular} & $\bigtimes$ & $\ldots$ & $\bigtimes$ & \begin{tabular}{|cccc|} \hline $\cellcolor{blue_color}$ & $\cellcolor{blue_color}$ & $\cellcolor{blue_color}$ & $\cellcolor{blue_color}$ \\  $\cellcolor{blue_color}$ & $\cellcolor{blue_color}$ & $\cellcolor{blue_color}$ &                    \\                    & $\cellcolor{blue_color}$ & $\cellcolor{blue_color}$ & $\cellcolor{blue_color}$ \\ $\cellcolor{blue_color}$ & $\cellcolor{blue_color}$ &                    &                    \\                    & $\cellcolor{blue_color}$ & $\cellcolor{blue_color}$ &                    \\                    &                    &  $\cellcolor{blue_color}$ & $\cellcolor{blue_color}$ \\ $\cellcolor{blue_color}$ &                    &                    &                    \\                    & $\cellcolor{blue_color}$ &                    &                    \\                    &                    & $\cellcolor{blue_color}$ &                    \\                    &                    &                    & $\cellcolor{blue_color}$ \\ \hline \end{tabular} & $\bigtimes$ & \begin{tabular}{|cccc|} \hline $\cellcolor{blue_color}$ & $\cellcolor{blue_color}$ & $\cellcolor{blue_color}$ & $\cellcolor{blue_color}$ \\  $\cellcolor{blue_color}$ & $\cellcolor{blue_color}$ & $\cellcolor{blue_color}$ &                    \\                    & $\cellcolor{blue_color}$ & $\cellcolor{blue_color}$ & $\cellcolor{blue_color}$ \\ $\cellcolor{blue_color}$ & $\cellcolor{blue_color}$ &                    &                    \\                    & $\cellcolor{blue_color}$ & $\cellcolor{blue_color}$ &                    \\                    &                    &  $\cellcolor{blue_color}$ & $\cellcolor{blue_color}$ \\ $\cellcolor{blue_color}$ &                    &                    &                    \\                    & $\cellcolor{blue_color}$ &                    &                    \\                    &                    & $\cellcolor{blue_color}$ &                    \\                    &                    &                    & $\cellcolor{blue_color}$ \\ \hline \end{tabular}
\end{tabular}
\end{center}
\caption{Structure of $\R_{\AHS}$ for $n = 4$. Each $\R_j$ consist of $\alpha$ copies of $\hat{\R}$.}
\label{fig:oracle-ahs}
\end{figure}

\normalsize

\textbf{Constructing the family of approximating polynomials.} 
For all $i \in [n], s \in \{0, 1\}^{\leq n-i+1}, j \in [t], r \in \R_{\AHS}$ let $B_{(r, i, s, j)}(\Oracle[\R_{\AHS}](r, x))$ be the following deterministic algorithm.

    \begin{enumerate}
        \item Set $\tau = 0^{i} 1^{|s|} 0^{n - |s| - i}$
        \item Set $s' = 0^{i} s 0^{n - |s| - i}$ so $s'$ is an $n$-bit string
        \item For all $v \in [m]$ such that $v$ corresponds to $n$-bit strings drawn from $\R^{\tau}$ within the $j$th copy $\R_j$:
        \item \qquad Compute $\langle s', r_v \rangle$ and compare it to $(\Oracle(r, x))_v = \langle x, r_v \rangle$. 
        \item If for some $v$ the inner products don't have the same value, $\langle i, r_v \rangle \neq (\Oracle(r, x))_v$, then reject. 
        \item Otherwise, accept.
    \end{enumerate}

    This algorithm determines if $s$ equal to the substring of $x$ of length $|s|$ that starts at the $i$th position with probability at least $1 - 2^{-\alpha} = 1 - 2^{-4} \geq \frac{11}{12}$ independently of the choice of $j \in [t]$. That is, for all $j \in [t]$ and for all $i \in [n], s \in \{0, 1\}^{n-i}, x \in \{0, 1\}^n$

    $$\Pr_{r \from \R_{\AHS}}[B_{(r, i, s, j)}(\Oracle(r, x)) = \phi_{i, s}(x)] \geq \frac{11}{12}.$$

    This algorithm makes at most $\alpha = 4$ queries to the oracle $\Oracle(r, x)$. 

    We have shown that for every fixed $i \in [n], s \in \{0, 1\}^{\leq n-i+1}$, and $x \in \{0, 1\}^n$ there are many $r \in \R_{\AHS}$ that if used as the first input for the oracle $\Oracle$ allow $B_{(r, i, s, j)}$ to compute $\phi_{i, s}(x)$. 
    
    Let $W(i, s, x, r, j)$ be the indicator that the $j$-th ``package'' of random strings in $r$ defines a set of ``bad'' random strings for $i, s, x$: $W(i, s, x, r, j) = 1$ if and only if $B_{(r, i, s, j)}(\Oracle(r, x)) \neq \phi_{i, s}(x)$. We established that $B_{(r, i, s, j)}(\Oracle(r, x))$ works well if given a random $r \from \R_{\AHS}$ for every $j \in [t]$ and the probability of this algorithm outputting an incorrect answer is at most $\frac{1}{12}$. So $$\Pr_{r \from \R_{\AHS}}[W(i, s, x, r, j) = 1] = \E_{r \from \R_{\AHS}}[W(i, s, x, r, j)] \leq \frac{1}{12}.$$

    Using the same transformation from before (described in detail in the proof of Claim~\ref{clm:os-ub}), we design a new family of algorithms that succeeds on all $i, s, x$ simultaneously. For all $i \in [n], s \in \{0, 1\}^{\leq n-i+1}, r \in \R_{\OS++}$ let $A_{(r, i, s)}(\Oracle(r, x))$ be the following randomized algorithm:
        
    \begin{itemize}
        \item Choose $j \from [t]$ uniformly at random.
        \item Run $B_{(r, i, s, j)}(\Oracle(r, x))$.
    \end{itemize}
    The number of queries that $A_{(r, i, s)}$ makes to the oracle is the same as $B_{(r, i, s, j)}$ which is $\alpha = 4$. We fix arbitrary $i, s, x$ and evaluate the following probability.
    
    $$ \Pr_{r \from \R_{\AHS}}\left[\Pr_{j \from [t]}[B_{(r, i, s,  j)} \neq \phi_{i, s}(x)] > \frac{1}{6}\right] = \Pr_{r \from \R_{\AHS}}\left[\frac{1}{t} \sum_{j \in [t]} W(i, s, x, r, j) > \frac{1}{6}\right].$$
    We established that $\E_{r \from \R_{\AHS}}[W(i, s, x, r, j)] \leq \frac{1}{12}$ and so by Hoeffding's inequality,
    
    $$\Pr_{r \from \R_{\AHS}}\left[\frac{1}{t} \sum_{j \in [t]} W(i, s, x, r, j) > \frac{1}{12} + \frac{1}{12}\right] \leq e^{-2\frac{t}{144}} \leq 2^{-\frac{2000n}{144}}.$$
    And by a union bound,
    
    $$\Pr_{r \from \R_{\AHS}}\left[\exists i \in [n], s \in \{0, 1\}^{\leq n - i+1}, x \in \{0, 1\}^n: \frac{1}{t} \sum_{j \in [t]} W(i, s, x, r, j) > \frac{1}{6}\right] \leq 2^{2n+2} 2^{-\frac{2000n}{144}} < \frac{1}{3},$$
    since the number of possible pairs of $i \in [n]$ and $s \in \{0, 1\}^{\leq n - i+1}$ is $\sum_{i = 1}^{n} \sum_{|s| = 1}^{n-i+1} 2^{|s|} \leq 2^{n+2}$.
    So, we proved that
        
    $$\Pr_{r \from \R_{\AHS}}\left[\exists i \in [n], s \in \{0, 1\}^{\leq n-i+1}, x \in \{0, 1\}^n: \Pr_{j \from [t]}[A_{(r, i, s)}(\Oracle(r, x))) \neq \phi_{i, s}(x)] > \frac{1}{6}\right] < \frac{1}{3}.$$
        
    The last step is to convert this family of query algorithms into a family of approximating polynomials. Let $q_{(r, i, s)}$ denote the acceptance probability of $A_{(r, i, s)}$ which is a polynomial of degree at most $\alpha = 4$ such that
        
    $$\Pr_{r \from \R_{\AHS}}\left[\exists i \in [n], s \in \{0, 1\}^{\leq n-i+1}, x \in \{0, 1\}^n: \left|q_{(r, i, s)}(\OracleWithParameter{\R_{\AHS}}(r, x)) - \phi_{i, s}(x)\right| > \frac{1}{6}\right] < \frac{1}{3}$$
    which is exactly what we were looking for.

\end{proof}

\subsection{Lower bounds for anchored hidden string and hidden string}

%\NV{Mark, please look at this part (new). From here...}

Following the same framework, we can combine the general statement about the degree of $\CRAP$ proven earlier with the upper bound for approximating $\phi_{i, s}$ to conclude our lower bound on the approximate degree of anchored hidden string.

First, we apply Theorem~\ref{thm:gen-lb} to obtain a lower bound on approximate degree for $\CRAPWithParameter{\R_{\AHS}}$.

%\NV{...to here.}

\begin{corollary}\label{cor:ahs-lb}
    If $r \from \R_{\AHS}$ u.a.r. then with probability at least $\frac{2}{3}$ over the choice of $r$ every polynomial $p : \{0, 1\}^{m} \to \reals$ of degree at most $d = \frac{n}{16\log{n}} - 1$ fails to approximate $\CRAPWithParameter{\R_{\AHS}}_r$:

    $$\Pr_{r \from \R_{\AHS}}\left[\forall \varepsilon < \frac{1}{2}, \ \forall p, \deg(p) \leq d, \exists \oracle \in D[\R_{\AHS}]_r: \left|p(\oracle) - \CRAPWithParameter{\R_{\AHS}}_r(\oracle)\right| > \varepsilon\right] \geq \frac{2}{3}.$$
\end{corollary}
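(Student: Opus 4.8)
The plan is to derive this corollary as an immediate instantiation of the general lower bound, Theorem~\ref{thm:gen-lb}, applied to the good base $\R_{\AHS}$ constructed in the proof of Claim~\ref{clm:ahs-ub}. The only two things that need checking are that $\R_{\AHS}$ genuinely has the structure of a good base, and that its size $m$ is small enough for the degree bound $\frac{n}{4\log m}-1$ supplied by Theorem~\ref{thm:gen-lb} to subsume the cleaner bound $\frac{n}{16\log n}-1$ claimed here. Everything substantive has already been done: Claim~\ref{clm:ahs-ub} pins down the size of $\R_{\AHS}$, and Theorem~\ref{thm:gen-lb} does all the combinatorial work (via Claim~\ref{clm:gen-comb}) of showing that parity cannot be approximated, even to error approaching $1/2$ and even by polynomials unbounded outside the domain, given only $\OracleWithParameter{\R}$ for a good base $\R$ of small size.

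First I would observe that $\R_{\AHS} = \mathcal{B} \times \bigtimes_{j \in [t]} \R_j$, where each $\R_j$ is a Cartesian product of sets of the form $\R^\tau$ and $\mathcal{B} = \{\one_1\} \times \cdots \times \{\one_n\}$, so $\R_{\AHS}$ matches the definition of a good base verbatim; and its size is $m = n + \alpha t \cdot \frac{n(n+1)}{2} = O(n^3)$, exactly as recorded in Claim~\ref{clm:ahs-ub}. Then, since $m = O(n^3)$, for all sufficiently large $n$ we have $m \le n^4$, hence $\log m \le 4\log n$, and therefore
$$\frac{n}{16\log n} - 1 = \frac{n}{4\log n^4} - 1 \le \frac{n}{4\log m} - 1.$$
Consequently any polynomial $p$ of degree at most $d = \frac{n}{16\log n}-1$ also has degree at most $\frac{n}{4\log m}-1$, so Theorem~\ref{thm:gen-lb} (with $\R = \R_{\AHS}$) directly yields that with probability at least $\tfrac23$ over $r \from \R_{\AHS}$, no such $p$ can approximate $\CRAPWithParameter{\R_{\AHS}}_r$ to any error $\varepsilon < \tfrac12$ on its domain $D[\R_{\AHS}]_r$, which is precisely the statement of the corollary.

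I do not expect any real obstacle here: this corollary is purely a repackaging step, the analogue for $\AHS$ of Corollary~\ref{cor:os2-lb} for ordered search, and its only role is to set up the final composition argument (robustifying an approximating polynomial for $\AHS$ and composing it with the degree-$4$ polynomials $q_{(r,i,s)}$ from Claim~\ref{clm:ahs-ub}). The one minor point to be careful about is the phrase ``for sufficiently large $n$'' used in the size comparison, but this is harmless since for the finitely many small $n$ the claimed degree bound $\frac{n}{16\log n}-1$ is non-positive and the statement is vacuous.
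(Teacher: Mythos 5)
Your proposal matches the paper's proof exactly: both first note that $\R_{\AHS}$ is a good base of size $m = O(n^3)$ (from the construction in Claim~\ref{clm:ahs-ub}), then apply Theorem~\ref{thm:gen-lb} to get a degree bound of $\frac{n}{4\log m}-1$, and finally use $m \le n^4$ for large $n$ to translate this into the stated bound $\frac{n}{16\log n}-1$. The observation that small $n$ are vacuous is a nice touch the paper leaves implicit.
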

\begin{proof}
    The set $\R_{\AHS}$ is a \emph{good base} and has size $m = O(n^3)$. By Theorem~\ref{thm:gen-lb}, with probability at least $\frac{2}{3}$ over the choice of $r$ every polynomial $p : \{0, 1\}^{m} \to \reals$ of degree at most $\frac{n}{4\log{m}} - 1$ fails to approximate $\CRAPWithParameter{\R_{\AHS}}_r$. But since the size of $\R_{\AHS}$ is $m \leq n^4$ for sufficiently large $n$ then every polynomial of degree at most $d = \frac{n}{16\log{n}} - 1 = \frac{n}{4\log{n^4}} - 1 \leq \frac{n}{4\log{m}} - 1$ fails to approximate $\CRAPWithParameter{\R_{\AHS}}_r$.
\end{proof}

By combining Claim~\ref{clm:ahs-ub} and Corollary~\ref{cor:ahs-lb}, we obtain the following.

\begin{theorem} \label{thm:ahs-lb}
The approximate degree of $\AHS_{N}$ is

$$\adeg_{\frac{1}{2} - \gamma}(\AHS_{N}) = \Omega\left(\frac{n}{\log{n}} - \log{\frac{1}{\gamma}}\right)$$
where $N = 2^{n+2}-n-4$ and $\gamma$ could depend on $n$, $0 < \gamma < \frac{1}{2}$.

\end{theorem}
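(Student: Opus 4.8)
The plan is to mirror the proof of Theorem~\ref{thm:os2-lb} essentially verbatim, with Claim~\ref{clm:ahs-ub} and Corollary~\ref{cor:ahs-lb} substituted for Claim~\ref{clm:os2-ub} and Corollary~\ref{cor:os2-lb}. Starting from a bounded polynomial of degree $d$ that $(\tfrac12-\gamma)$-approximates $\AHS_N$, I would first invoke Sherstov's robustification theorem \cite[Theorem~1.1]{sherstov2012making} with $\delta = \gamma/2$ to obtain a polynomial $p$ of degree $O(d + \log(1/\gamma))$ that $(\tfrac12-\tfrac{\gamma}{2})$-approximates $\AHS_N$ and is robust to additive noise in $[-\tfrac16,\tfrac16]$ on each input coordinate. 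As in the ordered search case, $p$ need only be bounded on the promise domain of $\AHS_N$, so its possible unboundedness off-domain is harmless, since Theorem~\ref{thm:gen-lb} (hence Corollary~\ref{cor:ahs-lb}) rules out even off-domain-unbounded approximators.

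Next I would record the key identity $\AHS_N\big((\phi_{i,s}(x))_{i,s}\big) = \CRAPWithParameter{\R_{\AHS}}_r\big(\OracleWithParameter{\R_{\AHS}}(r,x)\big)$, which holds for every $r \in \R_{\AHS}$: the first $n$ coordinates of $\OracleWithParameter{\R_{\AHS}}(r,x)$ (those coming from $\mathcal{B}$) are literally the bits of $x$, so $\parity(x) = \bigoplus_{i\in[n]}(\OracleWithParameter{\R_{\AHS}}(r,x))_i$. Claim~\ref{clm:ahs-ub} supplies, with probability at least $\tfrac23$ over $r \from \R_{\AHS}$, a family of degree-$4$ polynomials $q_{(r,i,s)}$ approximating each $\phi_{i,s}(x)$ to error $\tfrac16$ from $\OracleWithParameter{\R_{\AHS}}(r,x)$; Corollary~\ref{cor:ahs-lb} supplies, with probability at least $\tfrac23$ over $r$, that no polynomial of degree less than $\tfrac{n}{16\log n}$ approximates $\CRAPWithParameter{\R_{\AHS}}_r$ to any error below $\tfrac12$. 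A union bound then yields an $r \in \R_{\AHS}$ for which both conclusions hold.

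For that fixed $r$, I would form the composed polynomial $P := p\big(\ldots,\, q_{(r,i,s)}(\OracleWithParameter{\R_{\AHS}}(r,x)),\, \ldots\big)$. Since each $q_{(r,i,s)}$ has degree at most $4$ and $\deg(p) = O(d + \log(1/\gamma))$, we get $\deg(P) \le 4\deg(p) = O(d + \log(1/\gamma))$. Because the vector of values $q_{(r,i,s)}(\OracleWithParameter{\R_{\AHS}}(r,x))$ deviates coordinatewise from $(\phi_{i,s}(x))_{i,s}$ by at most $\tfrac16$, the robustness of $p$ makes $P$ a $(\tfrac12-\tfrac{\gamma}{2})$-approximation to $\CRAPWithParameter{\R_{\AHS}}_r$. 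But by the choice of $r$ this forces $\deg(P) \ge \tfrac{n}{16\log n}$, so $O(d+\log(1/\gamma)) \ge \tfrac{n}{16\log n}$, whence $d = \Omega\!\left(\tfrac{n}{\log n} - \log\tfrac{1}{\gamma}\right)$.

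I do not expect a genuine obstacle here: the substantive content is already packaged in Claim~\ref{clm:ahs-ub} (the $O(1)$-degree approximations of $\phi_{i,s}$ obtained from the constant-repetition equality test) and in Theorem~\ref{thm:gen-lb}/Corollary~\ref{cor:ahs-lb}. The one point meriting care is that robustification inflates the degree only by a constant factor plus an additive $\log(1/\gamma)$, and that the inner polynomials have \emph{constant} degree, so the composition incurs no $\log n$ loss; this is precisely why the bound here is $\Omega(n/\log n)$ rather than the $\Omega(n/\log^2 n)$ obtained for ordered search, where the $q_{(r,i)}$ had degree $\Theta(\log n)$.
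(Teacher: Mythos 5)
Your proposal matches the paper's proof of this theorem step for step: robustify via Sherstov's theorem with $\delta = \gamma/2$, identify $\AHS_N\big((\phi_{i,s}(x))_{i,s}\big)$ with $\CRAPWithParameter{\R_{\AHS}}_r(\OracleWithParameter{\R_{\AHS}}(r,x))$, compose with the degree-$4$ polynomials from Claim~\ref{clm:ahs-ub}, union-bound to fix an $r$ for which both Claim~\ref{clm:ahs-ub} and Corollary~\ref{cor:ahs-lb} hold, and compare degrees. Your closing remark about why the loss here is only $\log n$ rather than $\log^2 n$ is exactly the right intuition.
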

\begin{proof}
Suppose $\AHS_{N}$ can be $(\frac{1}{2} - \gamma)$-approximated by a bounded polynomial of degree $d$.

By the same argument as used in Theorem~\ref{thm:os2-lb} we can conclude that there exists a polynomial $p$ of degree $O(d + \log{1}{\gamma})$ that $(\frac{1}{2} - \frac{\gamma}{2})$-approximates $\AHS_N$ and is robust to noise. That is,

    $$|\AHS_{N}(\oracle) - p(\oracle+\Delta)| < \frac{1}{2} - \frac{\gamma}{2}$$
    for all $\oracle \in \{0, 1\}^N$ where $\Delta \in [-\frac{1}{6}, \frac{1}{6}]^N$. 
    
    Note that $\AHS_{N}((\phi_{i, s}(x))_{i \in [n], s \in \{0, 1\}^{\leq n-i+1}}) = \CRAPWithParameter{\R_{\AHS}}_{r}(\Oracle(r, x))$ for every $r \in \R_{\AHS}$. So by Claim~\ref{clm:ahs-ub} the polynomial $p(q_{(r, i, s)}(\Oracle(r, x)))$ of degree at most $\deg(p) \max_{i, s}(\deg(q_{(r, i, s)})) = 4c(d+\log{\frac{1}{\gamma}})$ for some constant $c$ approximates $\CRAPWithParameter{\R_{\AHS}}_{r}(\Oracle(r, x))$ to error $(\frac{1}{2} - \frac{\gamma}{2})$ with probability at least $\frac{2}{3}$ over the choice of $r \from \R_{\AHS}$. This holds because although the polynomials $q_{(r, i, s)}$ do not compute the functions $\phi_{i, s}$ exactly, but only approximate them with small error, the outer polynomial $p$ is robust to this small error in the inputs. Note also that while the composed polynomial is bounded on the domain of $\CRAP_r$, it may be arbitrarily unbounded on points outside its domain. 
    
    On the other hand, by Claim~\ref{cor:ahs-lb}, with probability at least $\frac{2}{3}$ over the choice of $r$, the function $\CRAPWithParameter{\R_{\AHS}}_{r}$ cannot be approximated by a polynomial in $\Oracle$ of degree less than $\frac{n}{16\log{n}}$. By a union bound, with probability at least $(1 - (1-\frac{2}{3}) - (1-\frac{2}{3})) = \frac{1}{3}$ both conditions on $r$ hold simultaneously. Thus there exists $r \in \R_{\AHS}$ such that $p(q_{(r, i, s)}(\Oracle(r, x)))$ approximates $\CRAP_{r}(\Oracle(r, x))$ and $\CRAP_{r}(\Oracle(r, x))$ cannot be approximated by a polynomial of degree less than $\frac{n}{16\log{n}}$. So
    
    $$4 c(d + \log{\frac{1}{\gamma}}) \geq \frac{n}{16\log{n}}.$$
    And thus
    
    $$d = \Omega\left(\frac{n}{\log{n}} - \log{\frac{1}{\gamma}}\right).$$
\end{proof}

This lower bound on the approximate degree of $\AHS_N$ entails a lower bound on the approximate degree of $\HS_N$. In \cite{cleve2012reconstructing} the authors gave a reduction between these two problems, showing that if there exists a quantum query algorithm for $\HS$ then there exists a quantum query algorithm for $\AHS$ with a small blow-up in the number of queries. Specifically, they showed how to compute $\AHS$ by applying a query algorithm for $\HS$ with a slightly bigger input, where each bit of the bigger input can be computed using a constant number of queries to the original $\AHS$ input. This argument works just as well to relate the approximate degrees of $\HS$ and $\AHS$, giving the following statement.

\begin{claim}\label{clm:ahs-hs}
    If for every $n'$, there is a polynomial of degree $d(n')$ approximating $\HS_{2^{n'+1} - 1}$ to some error, then for every $n$ there is a polynomial of degree $2d(10n \log{n})$ approximating $\AHS_{2^{n+2} - n - 4}$ to the same error.
\end{claim}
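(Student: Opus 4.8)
The plan is to realize the query-to-query reduction of Cleve et al.~\cite{cleve2012reconstructing} at the level of approximating polynomials. Everything rests on an encoding $\mathrm{Enc} : \{0,1\}^n \to \{0,1\}^{n'}$ with $n' = 10 n \log n$ that converts \emph{anchored} substring information about $x$ into ordinary substring information about $\mathrm{Enc}(x)$. Concretely, one interleaves the payload bits of $x$ with short self-delimiting ``locator blocks'': $\mathrm{Enc}(x) = B_1\,x_1\,B_2\,x_2\cdots B_n\,x_n$, followed by a fixed padding string so that the length is exactly $n'$ (the raw encoding has length $O(n\log n)$, and $B_i \in \{0,1\}^{O(\log n)}$ is an encoding of the index $i$ built around a synchronization pattern engineered so that it cannot occur anywhere else in $\mathrm{Enc}(x)$). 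Two features of this encoding are what we use. First, $\mathrm{Enc}(x)$ consists of a fixed part (all locator and padding bits) together with the payload bits $x_1,\dots,x_n$, each appearing exactly once, so $\parity(\mathrm{Enc}(x)) = c \oplus \parity(x)$ for a known constant $c$; inserting one additional fixed bit if necessary, we arrange $c = 0$, i.e.\ $\parity(\mathrm{Enc}(x)) = \parity(x)$. Second, for every $s \in \{0,1\}^{\le n'}$ the indicator $\chi_s(\mathrm{Enc}(x))$ --- whether $s$ occurs somewhere in $\mathrm{Enc}(x)$ --- can be evaluated by a decision tree of depth at most $2$ querying bits of the $\AHS_N$ input $\bigl(\phi_{i,t}(x)\bigr)_{i,t}$, and hence equals a multilinear polynomial of degree at most $2$ in that input.

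The second feature is the combinatorial heart of the reduction. Because the synchronization pattern cannot be forged, only $O(1)$ positions of $\mathrm{Enc}(x)$ can possibly start an occurrence of a given $s$: either $s$ is too short to meet a payload bit, in which case $\chi_s$ is a fixed constant; or $s$ spans a located window, which forces certain fixed-bit equalities (each vacuously true or identically false, contributing nothing to the decision tree) together with the values of a contiguous block $x_a,\dots,x_b$ of payload positions --- and a contiguous block of payload positions is exactly what a single anchored query $\phi_{a,t}(x)$ reports. The bookkeeping that exactly two such queries always suffice is precisely the content of the encoding analysis of~\cite{cleve2012reconstructing}, which is purely combinatorial and transfers unchanged.

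To finish, let $p : \{0,1\}^{N'} \to \reals$ with $N' = 2^{n'+1}-1$ be the hypothesized degree-$d(n')$ polynomial that $\varepsilon$-approximates $\HS_{N'}$, and for each substring $s$ let $g_s : \{0,1\}^N \to \reals$ (with $N = 2^{n+2}-n-4$) be the degree-$\le 2$ multilinear polynomial realizing $\chi_s(\mathrm{Enc}(x))$ as above; note each $g_s$ takes values in $\{0,1\}$ on all of $\{0,1\}^N$, not just on the domain of $\AHS_N$. Set $\tilde p(z) := p\bigl((g_s(z))_s\bigr)$. On the domain of $\AHS_N$, i.e.\ for $z = (\phi_{i,t}(x))_{i,t}$, the tuple $(g_s(z))_s$ is exactly the substring-indicator vector of $\mathrm{Enc}(x)\in\{0,1\}^{n'}$, which lies in the domain of $\HS_{N'}$; hence
$$\bigl|\tilde p(z) - \AHS_N(z)\bigr| = \bigl|p\bigl((\chi_s(\mathrm{Enc}(x)))_s\bigr) - \HS_{N'}\bigl((\chi_s(\mathrm{Enc}(x)))_s\bigr)\bigr| \le \varepsilon,$$
using $\AHS_N(z) = \parity(x) = \parity(\mathrm{Enc}(x)) = \HS_{N'}\bigl((\chi_s(\mathrm{Enc}(x)))_s\bigr)$. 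For an arbitrary $z \in \{0,1\}^N$, $(g_s(z))_s$ still lies in $\{0,1\}^{N'}$, so $\tilde p(z) = p\bigl((g_s(z))_s\bigr) \in [-\varepsilon, 1+\varepsilon]$ by the boundedness of $p$ off the domain of $\HS_{N'}$. Thus $\tilde p$ is an $\varepsilon$-approximation of $\AHS_N$ of degree at most $\deg(p)\cdot\max_s\deg(g_s) \le 2\,d(n') = 2\,d(10 n\log n)$, as required. (If the fixed part of $\mathrm{Enc}$ has odd weight instead, replace $p$ by $1-p$; crucially, no noise-robustness step is needed here since the $g_s$ compute the $\chi_s$ \emph{exactly}.)

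I expect the main obstacle to be the design and analysis of $\mathrm{Enc}$: one must choose the locator blocks so that the synchronization pattern genuinely cannot recur and so that every substring has at most two possible anchors --- this is exactly the part inherited from~\cite{cleve2012reconstructing}. The only genuinely new point is the observation that the substitution $y \mapsto (g_s(y))_s$ maps $\{0,1\}^N$ into $\{0,1\}^{N'}$, which is what lets the off-domain boundedness required of a partial-function approximation pass through the composition unchanged; the degree accounting is then immediate.
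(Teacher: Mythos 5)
Your proposal takes essentially the same approach as the paper: the paper gives no written proof of Claim~\ref{clm:ahs-hs} beyond citing the encoding of Cleve et al.\ and observing that ``each bit of the bigger input can be computed using a constant number of queries to the original $\AHS$ input,'' and your argument is a faithful realization of exactly this plan at the polynomial level --- compose $p$ with exact degree-$\le 2$ multilinear polynomials $g_s$ computing $\chi_s(\mathrm{Enc}(x))$, use $\parity(\mathrm{Enc}(x))=\parity(x)$ (up to a fixed bit), and observe that since each $g_s$ is $\{0,1\}$-valued on all of $\{0,1\}^N$ the composition inherits $p$'s off-domain boundedness and needs no robustification. The one caveat is that, like the paper, your write-up outsources the combinatorial heart --- that every $\chi_s(\mathrm{Enc}(x))$ reduces to a depth-$2$ decision tree over the $\phi_{i,t}$ queries --- entirely to~\cite{cleve2012reconstructing} without verifying the encoding details; this is acceptable here since the paper does the same, but it is the part of the argument that actually carries the content.
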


This allows us to prove a lower bound for $\HS_N$ as well.

\begin{corollary}\label{cor:hs-lb}
The approximate degree of $\HS_{N}$ is

$$\adeg_{\frac{1}{2} - \gamma}(\HS_{N}) = \Omega\left(\frac{n}{\log^2{n}} - \log(\frac{1}{\gamma})\right)$$
where $N = 2^{n+1}-1$ and $\gamma$ could depend on $n$, $0 < \gamma < \frac{1}{2}$.
\end{corollary}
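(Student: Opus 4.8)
The plan is to derive the corollary directly from two results already established above: the approximate degree lower bound $\adeg_{\frac12-\gamma}(\AHS_{2^{n+2}-n-4}) = \Omega\!\left(\frac{n}{\log n} - \log\frac1\gamma\right)$ of Theorem~\ref{thm:ahs-lb}, and the approximation‑preserving reduction from $\HS$ to $\AHS$ of Claim~\ref{clm:ahs-hs}. Concretely, for each $n'$ write $d(n') := \adeg_{\frac12-\gamma}(\HS_{2^{n'+1}-1})$, so that by definition $\HS_{2^{n'+1}-1}$ admits a $(\tfrac12-\gamma)$‑approximating polynomial of degree $d(n')$. Claim~\ref{clm:ahs-hs} then supplies, for every $n$, a $(\tfrac12-\gamma)$‑approximating polynomial for $\AHS_{2^{n+2}-n-4}$ of degree $2\,d(10 n\log n)$. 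Comparing this with Theorem~\ref{thm:ahs-lb} gives, for every $n$,
\[
2\,d(10 n\log n) \;\geq\; \adeg_{\frac12-\gamma}\!\left(\AHS_{2^{n+2}-n-4}\right) \;=\; \Omega\!\left(\frac{n}{\log n} - \log\frac1\gamma\right).
\]

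The remaining step is to rewrite this inequality in terms of the hidden‑string length $n' = 10 n\log n$ of the $\HS$ instance. Since $\log n' = \log n + \log\log n + O(1) = \Theta(\log n)$, we have $n = \Theta\!\left(n'/\log n'\right)$ and hence $\frac{n}{\log n} = \Theta\!\left(\frac{n'}{\log^2 n'}\right)$, while the additive term $\log\frac1\gamma$ is untouched. Substituting yields $d(n') = \Omega\!\left(\frac{n'}{\log^2 n'} - \log\frac1\gamma\right)$, which is exactly the claimed bound on $\adeg_{\frac12-\gamma}(\HS_{2^{n'+1}-1})$ once $n'$ is renamed to $n$.

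I do not anticipate any genuine obstacle here: once Theorem~\ref{thm:ahs-lb} and Claim~\ref{clm:ahs-hs} are in hand, the corollary is essentially a change of variables, and the real content — the $\AHS$ lower bound (via the general lower bound on $\CRAP$ in Theorem~\ref{thm:gen-lb}) and the query‑model reduction of \cite{cleve2012reconstructing} adapted to polynomials — has already been done. The only point meriting a word of care is integrality: the displayed inequality is obtained for hidden‑string lengths of the special form $10 n\log n$, whereas the statement is phrased for all $n$. This is the usual rounding issue: for a general target length one applies the argument with the largest admissible $n$, losing only a constant factor that the asymptotic notation absorbs (and the bound is in any case intended for $n$ sufficiently large). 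I would flag this in one sentence rather than belabor it.
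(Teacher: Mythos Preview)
Your proposal is correct and follows essentially the same route as the paper's own proof: invoke Claim~\ref{clm:ahs-hs} to obtain $2\,d(10 n\log n)\ge \adeg_{\frac12-\gamma}(\AHS_{2^{n+2}-n-4})$, apply the $\AHS$ lower bound (Theorem~\ref{thm:ahs-lb}), and then change variables via $n'=10n\log n$ using $\log n'=\Theta(\log n)$. Your explicit remark about the integrality issue is a nice touch that the paper's proof leaves implicit.
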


\begin{proof}
    By Claim~\ref{clm:ahs-hs} if there exists a polynomial of degree $d(n')$ approximating $\HS_{2^{n'+1}-1}$ then there exists a polynomial of degree $2d(10 n \log{n})$ approximating $\AHS_{2^{n+2}-n-2}$. On the other hand, by Corollary~\ref{cor:ahs-lb} no polynomial of degree less than $\frac{cn}{\log{n}} - c\log(\frac{1}{\gamma})$ can approximate $\AHS_{2^{n+2}-n-2}$ to error $\frac{1}{2} - \gamma$ for some constant $c$. Therefore,

    $$2d(10n\log{n}) \geq \frac{cn}{\log{n}} - c\log{\frac{1}{\gamma}}.$$
    Set $n' = 10 n \log{n}$. Then

    $$d(n') \geq \frac{cn}{2\log{n}} - c\log{\frac{1}{\gamma}} = \frac{cn'}{20 \log^2{n}} - c\log{\frac{1}{\gamma}} \geq \frac{cn'}{20 \log^2{n'}} - c\log{\frac{1}{\gamma}}.$$
    And thus 
    $$\adeg_{\frac{1}{2} - \gamma}(\HS_{2^{n'+1}-1}) = \Omega\left(\frac{n'}{\log^2{n'}} - c\log{\frac{1}{\gamma}}\right).$$

\end{proof}

\paragraph{Acknowledgments.} We thank Arkadev Chattopadhyay for suggesting the problem of determining the approximate degree of ordered search, and Arkadev and Justin Thaler for many helpful conversations about it. We also thank the anonymous TQC 2023 reviewers for helpful suggestions on the presentation.

%\NV{Mark, please look at this part below. How much of it should stay here and how much is moved to the intro?}

%Proven lower bounds on the approximate degrees of $\HS_N$ and $\AHS_N$ also imply lower bounds for the corresponding problem in the quantum query setting. In this setting, the input is given through oracle access, where oracle expects $s \in \{0, 1\}^{\leq n}$ for the analog of $\HS_{2^{n+1}-1}$ function (or $i \in [n], s \in \{0, 1\}^{\leq n - i+1}$  for the analog of $\AHS_{2^{n+2}-n-4}$) as a query and outputs $\chi_s(x)$ ($\phi_{i, s}(x)$ respectively) and the goal of the algorithm is to output $x$ while minimizing the number of queries to the oracle.

%By \cite{beals2001quantum} if there exists a quantum query algorithm that makes $d$ queries to the oracle and answers the decision problem with high probability then there exists a polynomial of degree at most $2d$ that, given all possible query replies of the oracle as the input, approximates the corresponding to the decision problem function with low error. Thus we can imply the following:

%\begin{claim}
%    Every quantum query algorithm that solves the $\AHS_{2^{n+2}-n-4}$ in the quantum query setting requires $\Omega\left(\frac{n}{\log{n}}\right)$ queries. Every quantum query algorithm that solves the $\HS_{2^{n+1}-1}$ in the quantum query setting requires $\Omega\left(\frac{n}{\log^2{n}}\right)$ queries. 
%\end{claim}

%These lower bounds are matching the state of the art and were first presented in \cite{cleve2012reconstructing} where they were obtained by the adversary method.

%\section{Conclusion}
\bibliographystyle{alpha}
\bibliography{references}

\begin{appendix}
\section{Upper bounds for the unbounded error regime} \label{sec:unbounded-ub}

We describe upper bounds on the randomized (and hence, quantum) query complexities of the ordered search and hidden string problems in the setting of weakly unbounded error. Our algorithms are simple modifications of the corresponding deterministic algorithms. They show that the approximate degree and quantum query lower bounds we prove for these problems are nearly tight.

\subsection{Ordered search}

\paragraph{Reconstruction.}
We first describe a randomized query algorithm that computes ordered search (reconstruction version) with probability at least $\gamma$ while making $O(n - \log{\frac{1}{\gamma}})$ queries.

To attempt to identify a hidden string $x$, the algorithm makes the first $t$ queries of binary search and thus exactly identifies the first $t$ bits of $x$. Then it samples the rest of the bits uniformly at random and outputs the resulting $n$-bit string. It succeeds in sampling the correct sequence of the last $(n-t)$ bits with probability at least $2^{-(n-t)}$. The upper bound follows by setting $t = n - \log{\frac{1}{\gamma}}$.

\paragraph{Decision.}
We now modify the algorithm above to compute the decision version of ordered search with probability at least $\frac{1}{2}+\gamma$, while making $O(n - \log{\frac{1}{\gamma}})$ queries.

The algorithm makes the first $t$ queries of binary search to exactly identify the first $t$ bits of $x$. Then it samples the rest of the bits uniformly at random, obtaining an $n$-bit candidate string $x'$. It then queries the input twice, on indices $x' - 1$ and $x'$, to check that $x \not \le x' - 1$ and $x \le x'$. If both conditions hold, then $x' = x$ and the algorithm outputs outputs $\parity(x') = \parity(x)$. Otherwise, it outputs a random bit.

This algorithm succeeds when either it succeeds at identifying $x$ (which happens with probability $2^{-(n-t)}$) or when it fails to identify $x$, but correctly guesses its parity, which happens with probability $\frac{1}{2} (1-2^{-(n-t)})$. Thus the algorithm succeeds with probability at least $\frac{1}{2}+2^{-(n-t)-1}$. The upper bound follows by setting $t = n + 1 - \log{\frac{1}{\gamma}}$.

\subsection{Hidden string}

Since our unbounded-error algorithm relies on the deterministic algorithm for the hidden string, we provide a sketch of that algorithm here.

\begin{theorem*}(\cite{SkienaS95})
There exists a deterministic query algorithm that computes hidden string (reconstruction version) using $O(n)$ queries.
\end{theorem*}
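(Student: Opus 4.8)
The plan is to reconstruct $x$ by maintaining and incrementally growing a string $w$ that is always guaranteed to be a (contiguous) substring of $x$, extending it by one bit per round until $|w| = n$. First I would initialize: query whether the one-bit strings $0$ and $1$ occur in $x$; since $n \ge 1$, at least one does, and I set $w$ to such a bit. Then I repeat the following extension step while $|w| < n$: query whether $w0$ or $w1$ is a substring of $x$, and if so replace $w$ by that string (``extend to the right''); otherwise query whether $0w$ or $1w$ is a substring of $x$ and replace $w$ by whichever one is (``extend to the left''). When $|w| = n$, output $w$.

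The correctness rests on the following structural observation, which I would establish first: if $w$ is a substring of $x$ and $|w| < n$, then at least one of $w0, w1, 0w, 1w$ is a substring of $x$. Indeed, fix an occurrence $x_i \cdots x_{i+|w|-1} = w$; since $|w| < |x| = n$, this occurrence is missing a character on at least one side, i.e.\ either $i + |w| \le n$, in which case $w\,x_{i+|w|}$ occurs in $x$, or $i > 1$, in which case $x_{i-1}\,w$ occurs in $x$. Hence the extension step always succeeds, so $|w|$ strictly increases each round; since a substring of $x$ has length at most $n$, the loop terminates after at most $n-1$ rounds with $|w| = n$, and then $w$ is a length-$n$ substring of the length-$n$ string $x$, forcing $w = x$. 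For the query count, each round issues at most four oracle queries (and in fact three suffice, since if $w0, w1, 0w$ all fail then $1w$ must succeed), and with $O(1)$ queries for initialization this gives $O(n)$ queries total; note also that no query is ever made on a string longer than $n$, since we stop as soon as $|w| = n$.

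The only real subtlety — and the step I would be most careful about — is the ``no-stall'' claim, i.e.\ the structural observation above: one must check that ``$w$ cannot be extended as a substring on either side'' genuinely implies $w = x$, rather than merely that $w$ sits at a left or right boundary of one of its occurrences. This is handled by the case analysis on the position $i$ of an occurrence together with the strict inequality $|w| < n$. Everything else is routine bookkeeping: monotonicity and boundedness of $|w|$ for termination, and a constant charge per round for the query bound.
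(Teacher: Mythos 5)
Your proof is correct and uses essentially the same greedy substring-extension idea as the paper's sketch (which follows Skiena--Sundaram): the paper first extends to the right until it stalls, then extends to the left, while you interleave the two per round, but the invariant, the key structural observation that a proper substring can always be extended on one side, and the $O(n)$ query count are all the same. The paper's two-phase version gives a slightly better constant (at most $2n+2$ queries versus your roughly $3n$), but this is immaterial for the $O(n)$ claim.
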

Here is the sketch of the algorithm:
    \begin{enumerate}
        \item Let $s$ be the empty string.
        \item While either $s0$ or $s1$ is present in $x$:
        \item \qquad Update $s$ to be either $s0$ or $s1$, whichever is present
        \item While either $0s$ or $1s$ is present in $x$:
        \item \qquad Update $s$ to be either $0s$ or $1s$, whichever is present
    \end{enumerate}
    This algorithm makes at most $2n+2$ queries to the input and outputs the hidden string $x$.

\paragraph{Reconstruction.}
We now describe a randomized query algorithm that computes hidden string (reconstruction version) with probability at least $\gamma$ while making $O(n - \log{\frac{1}{\gamma}})$ queries.

The algorithm makes the first $t$ steps of the exact deterministic algorithm above and thus exactly identifies $t$ bits of $x$. Then it samples the rest of the bits uniformly at random, samples a location among these $n-t$ bits in which to insert the identified substring of $x$, and outputs the resulting $n$-bit string. It succeeds in guessing the location for the substring with probability at least $\frac{1}{n-t+1}$ and it succeeds in guessing the correct values of the rest of the bits with probability at least $2^{-(n-t)}$. The upper bound follows by setting $t = 2n - \log{\frac{1}{\gamma}}$.

\paragraph{Decision.}
We now describe a randomized query algorithm that computes hidden string (decision version) with probability at least $\frac{1}{2}+\gamma$ while making $O(n - \log{\frac{1}{\gamma}})$ queries.

The algorithm makes the first $t$ steps of the exact deterministic algorithm above and thus exactly identifies $t$ bits of $x$. Then it samples the rest of the bits uniformly at random, samples where to split this string of random bits to put the identified substring of $x$, thus getting an $n$-bit string $x'$. It then queries if $x'$ is a substring of $x$. Since $|x'| = |x|$, this is equivalent to checking if $x' = x$. If the answer is ``yes'', then the algorithm succeeded in identifying $x$ and it outputs $\parity(x') = \parity(x)$. Otherwise, it samples a random bit and outputs it. 

This algorithm succeeds when either it successfully identifies $x$ (which happens with probability $\frac{1}{n-t}2^{-(n-t)}$) or when it fails to identify $x$ but correctly guesses the value of its parity, which happens with probability $\frac{1}{2} (1-\frac{1}{n-t}2^{-(n-t)})$. Thus the algorithm succeeds with probability at least $\frac{1}{2}+\frac{1}{n-t}2^{-(n-t)-1}$. The upper bound follows by taking $t = 2n + 1 - \log{\frac{1}{\gamma}}$.

\section{Proof of Claim \ref{clm:os2-ub}}\label{sec:os2-ub}

    The communication protocol that we use to construct our polynomials itself is based on a result by \cite{feige1994computing} on algorithms in a noisy comparison model. To understand the protocol and to convert it to the family of polynomials later we need to open up the protocol and state their result.

    The following is implicit in \cite{feige1994computing}:
    \begin{claim}\label{clm:comp-tree}
        Consider the following problem. There is an unknown ``key'' in $[n]$. Given the ability to ask questions of the type ``is the unknown key greater than $a$?'' for every $a \in [n]$ and get the correct answer with probability at least $\frac{3}{4}$ independently for each question, the algorithm's goal is to  find the correct location in $(0, n]$ for the key while minimizing the number of questions it asks.
        
        Then there exists an algorithm that finds the correct location in $(0, n]$ for the key by asking at most $c\log{n}$ questions for some constant $c$ with probability at least $\frac{11}{12}$. 
    \end{claim}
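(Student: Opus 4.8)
The plan is to derive Claim~\ref{clm:comp-tree} as a standard instance of \emph{noisy binary search}, using the Bayesian / weighted‑majority strategy behind the results of~\cite{feige1994computing}. Identify the $n$ possible values of the key with the points $1,\dots,n$ and maintain a nonnegative weight vector $w=(w_1,\dots,w_n)$, initialized to the uniform distribution $w_s=1/n$; think of $w$ as a posterior distribution over the key's location under the pessimistic model in which each answer is independently flipped with probability exactly $1/4$. In every round the algorithm picks a weighted‑median threshold $a$ (so that $\sum_{s\le a}w_s$ and $\sum_{s>a}w_s$ are each close to half of the total weight $W=\sum_s w_s$), asks ``is the key $>a$?'', and then multiplies $w_s$ by $3/4$ for every $s$ consistent with the reported answer and by $1/4$ for every inconsistent $s$ (Bayes' rule under the assumed noise model). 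After a fixed number $T=c\log n$ of rounds the algorithm outputs $\operatorname{argmax}_s w_s$; equivalently, it declares a value to be the key once its weight first exceeds $W/2$.

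The analysis tracks the single potential $\Phi_t=\log\!\big(w^*_t/W_t\big)$, where $s^*$ is the true location and $w^*_t,W_t$ denote $w_{s^*}$ and the total weight after $t$ rounds. Here $\Phi_0=-\log n$, and whenever $\Phi_T>-\log 2$ the true $s^*$ is forced to be the unique maximizer, since then $w^*>W/2\ge w_s$ for every $s\neq s^*$. The key point is that one round changes $\Phi$ by a bounded amount with a \emph{constant positive drift}: because a weighted‑median query (almost) halves $W$ regardless of the answer, a correctly‑answered query (probability $\ge 3/4$) multiplies $w^*$ by $3/4$ and $W$ by $\approx 1/2$, raising $\Phi$ by $\approx\log\tfrac32$, while an incorrectly‑answered query (probability $\le 1/4$) multiplies $w^*$ by $1/4$ and $W$ by $\approx 1/2$, lowering $\Phi$ by $\approx\log 2$. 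Hence $\mathbb{E}[\Phi_{t+1}-\Phi_t\mid\text{rounds }1,\dots,t]\ \ge\ \tfrac34\log\tfrac32-\tfrac14\log 2>0$, and the increments lie in a bounded interval, so $\Phi_t$ plus a suitable linear term is a submartingale with bounded differences. An Azuma--Hoeffding bound then shows that for $T=c\log n$ with $c$ a large enough absolute constant, $\Pr[\Phi_T\le-\log 2]\le\tfrac1{12}$, which is exactly the claimed guarantee (the constant $c$ and the failure probability $1/12$ can be traded against one another, and the finitely many small values of $n$ are handled by a constant number of repeated queries).

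The only real subtlety — rather than a genuine obstacle — is that a weighted median need not split $W$ exactly in half: a single point can carry more than half the current weight. I would dispatch this with a short case analysis. In the ``balanced'' regime, where no single weight exceeds $\tfrac13 W$, the threshold can be chosen so that the heavier side has weight in $[\tfrac12 W,\tfrac56 W]$, and one checks that a correct answer still raises $\Phi$ by a fixed positive constant while a wrong answer lowers it by at most $\log 3$, so the drift computation goes through verbatim. In the complementary ``concentrated'' regime there is a single heavy point $s_0$: if $s_0=s^*$ then $\Phi>-\log 3$ is already large and a direct continuation finishes in $O(1)$ further expected rounds; if $s_0\neq s^*$, then a round whose threshold separates $s_0$ from $s^*$ leaves $w_{s_0}/w^*$ unchanged on non‑separating queries and shrinks it by a factor $3$ on a correctly‑answered separating query, pushing the process back toward the balanced regime; since $\Phi$ never decreases on a correct answer in either regime, amortizing over such rounds preserves both the boundedness of increments and the positive drift, and the Azuma argument above applies. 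Converting the resulting weighted‑majority procedure into a decision tree of depth $c\log n$ and bounding its failure probability by $1/12$ is then routine.
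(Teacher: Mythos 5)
Your proof is correct in outline, but it takes a genuinely different route from the one the paper leans on. The paper does not prove Claim~\ref{clm:comp-tree} from scratch: it cites it as implicit in~\cite{feige1994computing}, whose algorithm is a (backtracking) random walk on the fixed dyadic binary-search tree, and then records only the \emph{shape} of the query sequence (at each node $(a,b]$ the walk queries the three fixed thresholds $a$, $b$, $(a+b)/2$, with a $\Theta(\log^2 n)$ repetition at leaves). You instead give a self-contained argument via the Bayesian/multiplicative-weights strategy (weighted-median queries with a $3/4$--$1/4$ reweighting), tracking the potential $\Phi=\log(w^*/W)$ and applying Azuma; this is closer in spirit to Ben-Or--Hassidim than to Feige et al. Both establish the claim, and your route is arguably cleaner as a free-standing proof. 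The one place your sketch is genuinely incomplete is the concentrated regime: when a single point $s_0\neq s^*$ holds most of the mass on the same side of the weighted median as $s^*$, the drift of $\Phi$ on a correctly-answered query can be arbitrarily close to~$0$ (both $w^*$ and $W$ shrink by nearly the same factor $3/4$), so "a round whose threshold separates $s_0$ from $s^*$'' is not guaranteed by the median rule alone; you would need to explicitly force a separating threshold (e.g.\ alternate $a=s_0$ and $a=s_0-1$) and amortize, or switch to an analysis that does not case-split on concentration. There is also a consequence worth flagging for how this claim is used downstream: the paper exploits the fact that FRPU's queries lie on a \emph{fixed}, data-independent set of dyadic thresholds when it builds the good base $\R_{\OS++}$ in the proof of Claim~\ref{clm:os2-ub}; your weighted-median algorithm asks data-dependent thresholds that can be any index in $[n]$, so while your argument proves Claim~\ref{clm:comp-tree}, plugging it into the paper's construction would require enlarging $\hat{\R}$ to contain every template $1^a0^{n-a}$ with enough multiplicity (still polynomial, but a different bookkeeping than the paper's).
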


    The algorithm basically performs a binary search for the correct location of the key with slight modifications. We are not going to present the algorithm in detail, but we are going to describe what questions the algorithm asks along the way. For each interval $(a, b]$ where $b-a > 1$, the algorithm seeks answers to the following questions: ``is the key $> a$?'', ``is the key $> b$?'' and ``is the key $> \frac{a+b}{2}$?'', each correct with probability $\frac{3}{4}$ independently from all other questions, and the algorithm needs to be able to ask each potential question of each type $c\log{n}$ times for some constant $c$. For each interval $(a, a+1]$ the algorithm seeks answers to the following questions: ``is the key $> a$?'' and ``is the key $> a+1$?'', each correct with probability at least $\frac{3}{4}$ independently from all other questions, and the algorithm needs to be able to ask each potential question of each type $2c^2\log^2{n}$ times. Note that if the range for the key position is $(0, n]$ then questions ``is the key $> 0$?'' and ``is the key $ > n$?'' can be omitted by the algorithm since it already knows the correct answer to them.

    % Please add the following required packages to your document preamble:
% \usepackage{multirow}
\begin{figure}[h]
\begin{center}
\begin{tabular}{cccc}
Interval                                       & Questions                        & {\Large $\tau$}              & Structure of $\R^{\tau}$ \\
                                               &                                  &                     &                          \\
$(0, n]$                                       & ``is the key $> \frac{n}{2}$?''  & $1^{n/2} 0^{n/2}$   & 
    \small
    \tabcolsep=0.12cm
    \begin{tabular}{|l|l|l|l|l|l|l|l|l|l|l|l|l|l|l|l|}     \hline    \cellcolor{blue_color}$\star$ &\cellcolor{blue_color}$\star$ &\cellcolor{blue_color}$\star$ &\cellcolor{blue_color}$\star$ &\cellcolor{blue_color}$\star$ &\cellcolor{blue_color}$\star$ &\cellcolor{blue_color}$\star$ &\cellcolor{blue_color}$\star$ & 0 & 0 & 0 & 0 & 0 & 0 & 0 & 0 \\ \hline     \end{tabular}
    \normalsize\\
                                               &                                  &                     &                          \\
\multirow{2}{*}{$(0, \frac{n}{2}]$}            & ``is the key $> \frac{n}{2}$?''  & $1^{n/2} 0^{n/2}$   & 
    \small
    \tabcolsep=0.12cm
    \begin{tabular}{|l|l|l|l|l|l|l|l|l|l|l|l|l|l|l|l|}     \hline    \cellcolor{blue_color}$\star$ &\cellcolor{blue_color}$\star$ &\cellcolor{blue_color}$\star$ &\cellcolor{blue_color}$\star$ &\cellcolor{blue_color}$\star$ &\cellcolor{blue_color}$\star$ &\cellcolor{blue_color}$\star$ &\cellcolor{blue_color}$\star$ & 0 & 0 & 0 & 0 & 0 & 0 & 0 & 0 \\ \hline     \end{tabular}
    \normalsize\\
                                               & ``is the key $> \frac{n}{4}$?''  & $1^{n/4} 0^{3n/4}$  & 
                                                   \small
    \tabcolsep=0.12cm
                                               \begin{tabular}{|l|l|l|l|l|l|l|l|l|l|l|l|l|l|l|l|}     \hline    \cellcolor{blue_color}$\star$ &\cellcolor{blue_color}$\star$ &\cellcolor{blue_color}$\star$ &\cellcolor{blue_color}$\star$ &0 &0 &0 &0 & 0 & 0 & 0 & 0 & 0 & 0 & 0 & 0 \\ \hline     \end{tabular} 
                                               \normalsize\\
                                               &                                  &                     &                          \\
\multirow{2}{*}{$(\frac{n}{2}, n]$}            & ``is the key $> \frac{n}{2}$?''  & $1^{n/2} 0^{n/2}$   &     \small
    \tabcolsep=0.12cm
    \begin{tabular}{|l|l|l|l|l|l|l|l|l|l|l|l|l|l|l|l|}     \hline    \cellcolor{blue_color}$\star$ &\cellcolor{blue_color}$\star$ &\cellcolor{blue_color}$\star$ &\cellcolor{blue_color}$\star$ &\cellcolor{blue_color}$\star$ &\cellcolor{blue_color}$\star$ &\cellcolor{blue_color}$\star$ &\cellcolor{blue_color}$\star$ & 0 & 0 & 0 & 0 & 0 & 0 & 0 & 0 \\ \hline     \end{tabular}  \normalsize                 \\
                                               & ``is the key $> \frac{3n}{4}$?'' & $1^{3n/4} 0^{n/4}$  &     \small
    \tabcolsep=0.12cm
    \begin{tabular}{|l|l|l|l|l|l|l|l|l|l|l|l|l|l|l|l|}     \hline    \cellcolor{blue_color}$\star$ &\cellcolor{blue_color}$\star$ &\cellcolor{blue_color}$\star$ &\cellcolor{blue_color}$\star$ &\cellcolor{blue_color}$\star$ &\cellcolor{blue_color}$\star$ &\cellcolor{blue_color}$\star$ &\cellcolor{blue_color}$\star$ & \cellcolor{blue_color}$\star$ & \cellcolor{blue_color}$\star$ & \cellcolor{blue_color}$\star$ & \cellcolor{blue_color}$\star$ & 0 & 0 & 0 & 0 \\ \hline     \end{tabular}    \normalsize               \\
                                               &                                  &                     &                          \\
\multirow{2}{*}{$(0, \frac{n}{4}]$}            & ``is the key $> \frac{n}{4}$?''  & $1^{n/4} 0^{3n/4}$  &     \small
    \tabcolsep=0.12cm \begin{tabular}{|l|l|l|l|l|l|l|l|l|l|l|l|l|l|l|l|}     \hline    \cellcolor{blue_color}$\star$ &\cellcolor{blue_color}$\star$ &\cellcolor{blue_color}$\star$ &\cellcolor{blue_color}$\star$ &0 &0 &0 &0 & 0 & 0 & 0 & 0 & 0 & 0 & 0 & 0 \\ \hline     \end{tabular}  \normalsize                \\
                                               & ``is the key $> \frac{n}{8}$?''  & $1^{n/8} 0^{7n/8}$  & 
                                                   \small
    \tabcolsep=0.12cm
                                               \begin{tabular}{|l|l|l|l|l|l|l|l|l|l|l|l|l|l|l|l|}     \hline    \cellcolor{blue_color}$\star$ &\cellcolor{blue_color}$\star$ &0 &0 &0 &0 &0 &0 & 0 & 0 & 0 & 0 & 0 & 0 & 0 & 0 \\ \hline     \end{tabular} \normalsize                  \\
                                               &                                  &                     &                          \\
\multirow{3}{*}{$(\frac{n}{4}, \frac{n}{2}]$}  & ``is the key $> \frac{n}{4}$?''  & $1^{n/4} 0^{3n/4}$  &     \small
    \tabcolsep=0.12cm \begin{tabular}{|l|l|l|l|l|l|l|l|l|l|l|l|l|l|l|l|}     \hline    \cellcolor{blue_color}$\star$ &\cellcolor{blue_color}$\star$ &\cellcolor{blue_color}$\star$ &\cellcolor{blue_color}$\star$ &0 &0 &0 &0 & 0 & 0 & 0 & 0 & 0 & 0 & 0 & 0 \\ \hline     \end{tabular} \normalsize                   \\
                                               & ``is the key $> \frac{n}{2}$?''  & $1^{n/2} 0^{n/2}$                     & 
                                                   \small
    \tabcolsep=0.12cm
    \begin{tabular}{|l|l|l|l|l|l|l|l|l|l|l|l|l|l|l|l|}     \hline    \cellcolor{blue_color}$\star$ &\cellcolor{blue_color}$\star$ &\cellcolor{blue_color}$\star$ &\cellcolor{blue_color}$\star$ &\cellcolor{blue_color}$\star$ &\cellcolor{blue_color}$\star$ &\cellcolor{blue_color}$\star$ &\cellcolor{blue_color}$\star$ & 0 & 0 & 0 & 0 & 0 & 0 & 0 & 0 \\ \hline     \end{tabular} \normalsize                  \\
                                               & ``is the key $> \frac{3n}{8}$?'' & $1^{3n/8} 0^{5n/8}$ &     \small
    \tabcolsep=0.12cm \begin{tabular}{|l|l|l|l|l|l|l|l|l|l|l|l|l|l|l|l|}     \hline    \cellcolor{blue_color}$\star$ &\cellcolor{blue_color}$\star$ &\cellcolor{blue_color}$\star$ &\cellcolor{blue_color}$\star$ &\cellcolor{blue_color}$\star$ &\cellcolor{blue_color}$\star$ &0 &0 & 0 & 0 & 0 & 0 & 0 & 0 & 0 & 0 \\ \hline     \end{tabular}  \normalsize                 \\
                                               &                                  &                     &                          \\
\multirow{3}{*}{$(\frac{n}{2}, \frac{3n}{4}]$} & ``is the key $> \frac{n}{2}$?''  & $1^{n/2} 0^{n/2}$   &     \small
    \tabcolsep=0.12cm \begin{tabular}{|l|l|l|l|l|l|l|l|l|l|l|l|l|l|l|l|}     \hline    \cellcolor{blue_color}$\star$ &\cellcolor{blue_color}$\star$ &\cellcolor{blue_color}$\star$ &\cellcolor{blue_color}$\star$ &\cellcolor{blue_color}$\star$ &\cellcolor{blue_color}$\star$ &\cellcolor{blue_color}$\star$ &\cellcolor{blue_color}$\star$ & 0 & 0 & 0 & 0 & 0 & 0 & 0 & 0 \\ \hline     \end{tabular} \normalsize                   \\
                                               & ``is the key $> \frac{3n}{4}$?'' & $1^{3n/4} 0^{n/4}$  & 
                                                   \small
    \tabcolsep=0.12cm \begin{tabular}{|l|l|l|l|l|l|l|l|l|l|l|l|l|l|l|l|}     \hline    \cellcolor{blue_color}$\star$ &\cellcolor{blue_color}$\star$ &\cellcolor{blue_color}$\star$ &\cellcolor{blue_color}$\star$ &\cellcolor{blue_color}$\star$ &\cellcolor{blue_color}$\star$ &\cellcolor{blue_color}$\star$ &\cellcolor{blue_color}$\star$ & \cellcolor{blue_color}$\star$ & \cellcolor{blue_color}$\star$ & \cellcolor{blue_color}$\star$ & \cellcolor{blue_color}$\star$ & 0 & 0 & 0 & 0 \\ \hline     \end{tabular}  \normalsize                 \\
                                               & ``is the key $> \frac{5n}{8}$?'' & $1^{5n/8} 0^{3n/8}$ & 
                                                   \small
    \tabcolsep=0.12cm
    \begin{tabular}{|l|l|l|l|l|l|l|l|l|l|l|l|l|l|l|l|}     \hline    \cellcolor{blue_color}$\star$ &\cellcolor{blue_color}$\star$ &\cellcolor{blue_color}$\star$ &\cellcolor{blue_color}$\star$ &\cellcolor{blue_color}$\star$ &\cellcolor{blue_color}$\star$ &\cellcolor{blue_color}$\star$ &\cellcolor{blue_color}$\star$ & \cellcolor{blue_color}$\star$ & \cellcolor{blue_color}$\star$ & 0 & 0 & 0 & 0 & 0 & 0 \\ \hline     \end{tabular}     \normalsize              \\
                                               &                                  &                     &                          \\
\multirow{2}{*}{$(\frac{3n}{4}, n]$}           & ``is the key $> \frac{3n}{4}$?'' & $1^{3n/4} 0^{n/4}$  &     \small
    \tabcolsep=0.12cm \begin{tabular}{|l|l|l|l|l|l|l|l|l|l|l|l|l|l|l|l|}     \hline    \cellcolor{blue_color}$\star$ &\cellcolor{blue_color}$\star$ &\cellcolor{blue_color}$\star$ &\cellcolor{blue_color}$\star$ &\cellcolor{blue_color}$\star$ &\cellcolor{blue_color}$\star$ &\cellcolor{blue_color}$\star$ &\cellcolor{blue_color}$\star$ & \cellcolor{blue_color}$\star$ & \cellcolor{blue_color}$\star$ & \cellcolor{blue_color}$\star$ & \cellcolor{blue_color}$\star$ & 0 & 0 & 0 & 0 \\ \hline     \end{tabular} \normalsize                   \\
                                               & ``is the key $> \frac{7n}{8}$?'' & $1^{7n/8} 0^{n/8}$  &     \small
    \tabcolsep=0.12cm \begin{tabular}{|l|l|l|l|l|l|l|l|l|l|l|l|l|l|l|l|}     \hline    \cellcolor{blue_color}$\star$ &\cellcolor{blue_color}$\star$ &\cellcolor{blue_color}$\star$ &\cellcolor{blue_color}$\star$ &\cellcolor{blue_color}$\star$ &\cellcolor{blue_color}$\star$ &\cellcolor{blue_color}$\star$ &\cellcolor{blue_color}$\star$ & \cellcolor{blue_color}$\star$ & \cellcolor{blue_color}$\star$ & \cellcolor{blue_color}$\star$ & \cellcolor{blue_color}$\star$ & \cellcolor{blue_color}$\star$ & \cellcolor{blue_color}$\star$ & 0 & 0 \\ \hline     \end{tabular}    \normalsize              
\end{tabular}
\caption{The structure of random strings used in the communication protocol.}
\label{fig:os2-table}
\end{center}
\end{figure}
    
    The efficient communication protocol for the $\GT$ heavily relies on the algorithm above. In the communication protocol, both Alice and Bob run the algorithm to find the most significant bit where their inputs differ. Each time the algorithm asks ``is the position of the most significant bit where the inputs differ greater than $a$?'' (or ``is the key greater than $a$?''), Alice and Bob compute the equality of the first $a$ bits of their inputs to error $\frac{1}{4}$ by computing $\alpha = 2$ inner products of their inputs with random strings from $\R^{\tau}$ where $\tau = 1^a 0^{n-a}$. See Figure~\ref{fig:os2-table} for an illustration. If their inner products are the same then the answer to the algorithm is ``no, the key is not greater than $a$'' and otherwise, it's ``yes, the key is greater than $a$''. And at the end of the protocol, Alice and Bob check who has a greater value in the most significant bit discovered during the procedure.
    Now we are ready to prove the claim.

    \textbf{Constructing the \emph{good base} $\R_{\OS++}$.}
The construction is similar to the construction of $\R_{\OS}$.
    Let $\hat{\R}$ have the following structure.

\begin{align*}
\hat{\R} &=\R^{1^{n/2}0^{n/2}} \\
         &\times \left(\R^{1^{n/2}0^{n/2}} \times \R^{1^{n/4}0^{3n/4}}\right) \times \left(\R^{1^{n/2}0^{n/2}} \times \R^{1^{3n/4}0^{n/4}}\right)\\
         &\times \ldots \\
         &\times \left(\R^{1^{2}0^{n-2}} \times \R^{1^{1}0^{n-1}} \times \bigtimes_{a = 1}^{(n-4)/2}\left(\R^{1^{2a}0^{n-2a}} \times \R^{1^{2a+2}0^{n-2a-2}} \times \R^{1^{2a+1}0^{n-2a-1}} \right) \times \R^{1^{n-2}0^{2}} \times \R^{1^{n-1}0^{1}} \right)\\
         &\times \left(\bigtimes_{ 2c\log{n}}\left(\R^{1^{1}0^{n-1}} \times \bigtimes_{a = 1}^{n-2}\left(\R^{1^{a}0^{n-a}} \times \R^{1^{a+1}0^{n-a-1}}\right) \times \R^{1^{n-1}0^{1}} \right)\right)
\end{align*}

% Please add the following required packages to your document preamble:
% \usepackage{multirow}
\begin{figure}[p]
\begin{center}
\begin{tabular}{cccc}
Interval                  & {\Large $\tau$}                    & Structure of $\R^{\tau}$ &                          \\
                          &                           &                          &                          \\
$(0, 8]$                  & $11110000$             & \begin{tabular}{|l|l|l|l|l|l|l|l|}     \hline    \cellcolor{blue_color}$\star$ &\cellcolor{blue_color}$\star$ &\cellcolor{blue_color}$\star$ &\cellcolor{blue_color}$\star$ & 0 & 0 & 0 & 0  \\ \hline     \end{tabular}                   &                          \\
                          &                           &                          &                          \\
\multirow{2}{*}{$(0, 4]$} & $1111 0000$             & \begin{tabular}{|l|l|l|l|l|l|l|l|}     \hline    \cellcolor{blue_color}$\star$ &\cellcolor{blue_color}$\star$ &\cellcolor{blue_color}$\star$ &\cellcolor{blue_color}$\star$ & 0 & 0 & 0 & 0  \\ \hline     \end{tabular}                   &                          \\
                          & $11000000$             & \begin{tabular}{|l|l|l|l|l|l|l|l|}     \hline    \cellcolor{blue_color}$\star$ &\cellcolor{blue_color}$\star$ & 0 & 0 & 0 & 0 & 0 & 0  \\ \hline     \end{tabular}                   &                          \\
                          &                           &                          &                          \\
\multirow{2}{*}{$(4, 8]$} & $11110000$             & \begin{tabular}{|l|l|l|l|l|l|l|l|}     \hline    \cellcolor{blue_color}$\star$ &\cellcolor{blue_color}$\star$ &\cellcolor{blue_color}$\star$ &\cellcolor{blue_color}$\star$ & 0 & 0 & 0 & 0  \\ \hline     \end{tabular}                   &                          \\
                          & $11111100$             & \begin{tabular}{|l|l|l|l|l|l|l|l|}     \hline    \cellcolor{blue_color}$\star$ &\cellcolor{blue_color}$\star$ &\cellcolor{blue_color}$\star$ &\cellcolor{blue_color}$\star$ & \cellcolor{blue_color}$\star$ & \cellcolor{blue_color}$\star$ & 0 & 0  \\ \hline     \end{tabular}                   &                          \\
                          &                           &                          &                          \\
\multirow{2}{*}{$(0, 2]$} & $11000000$             & \begin{tabular}{|l|l|l|l|l|l|l|l|}     \hline    \cellcolor{blue_color}$\star$ &\cellcolor{blue_color}$\star$ & 0 & 0 & 0 & 0 & 0 & 0  \\ \hline     \end{tabular}                   &                          \\
                          & $10000000$             & \begin{tabular}{|l|l|l|l|l|l|l|l|}     \hline    \cellcolor{blue_color}$\star$ & 0 & 0 & 0 & 0 & 0 & 0 & 0  \\ \hline     \end{tabular}                   &                          \\
                          &                           &                          &                          \\
\multirow{3}{*}{$(2, 4]$} & $11000000$             & \begin{tabular}{|l|l|l|l|l|l|l|l|}     \hline    \cellcolor{blue_color}$\star$ &\cellcolor{blue_color}$\star$ & 0 & 0 & 0 & 0 & 0 & 0  \\ \hline     \end{tabular}                   &                          \\
                          &          $11110000$                 & \begin{tabular}{|l|l|l|l|l|l|l|l|}     \hline    \cellcolor{blue_color}$\star$ &\cellcolor{blue_color}$\star$ &\cellcolor{blue_color}$\star$ &\cellcolor{blue_color}$\star$ & 0 & 0 & 0 & 0  \\ \hline     \end{tabular}                   &                          \\
                          & $11100000$             & \begin{tabular}{|l|l|l|l|l|l|l|l|}     \hline    \cellcolor{blue_color}$\star$ &\cellcolor{blue_color}$\star$ &\cellcolor{blue_color}$\star$ & 0 & 0 & 0 & 0 & 0  \\ \hline     \end{tabular}                   &                          \\
                          &                           &                          &                          \\
\multirow{3}{*}{$(4, 6]$} & $11110000$             & \begin{tabular}{|l|l|l|l|l|l|l|l|}     \hline    \cellcolor{blue_color}$\star$ &\cellcolor{blue_color}$\star$ &\cellcolor{blue_color}$\star$ &\cellcolor{blue_color}$\star$ & 0 & 0 & 0 & 0  \\ \hline     \end{tabular}                   &                          \\
                          & $11111100$             & \begin{tabular}{|l|l|l|l|l|l|l|l|}     \hline    \cellcolor{blue_color}$\star$ &\cellcolor{blue_color}$\star$ &\cellcolor{blue_color}$\star$ &\cellcolor{blue_color}$\star$ & \cellcolor{blue_color}$\star$ & \cellcolor{blue_color}$\star$ & 0 & 0  \\ \hline     \end{tabular}                   &                          \\
                          & $11111000$             & \begin{tabular}{|l|l|l|l|l|l|l|l|}     \hline    \cellcolor{blue_color}$\star$ &\cellcolor{blue_color}$\star$ &\cellcolor{blue_color}$\star$ &\cellcolor{blue_color}$\star$ & \cellcolor{blue_color}$\star$ & 0 & 0 & 0  \\ \hline     \end{tabular}                   &                          \\
                          &                           &                          &                          \\
\multirow{2}{*}{$(6, 8]$} & $11111100$             & \begin{tabular}{|l|l|l|l|l|l|l|l|}     \hline    \cellcolor{blue_color}$\star$ &\cellcolor{blue_color}$\star$ &\cellcolor{blue_color}$\star$ &\cellcolor{blue_color}$\star$ & \cellcolor{blue_color}$\star$ & \cellcolor{blue_color}$\star$ & 0 & 0  \\ \hline     \end{tabular}                   &                          \\
                          & $11111110$             & \begin{tabular}{|l|l|l|l|l|l|l|l|}     \hline    \cellcolor{blue_color}$\star$ &\cellcolor{blue_color}$\star$ &\cellcolor{blue_color}$\star$ &\cellcolor{blue_color}$\star$ & \cellcolor{blue_color}$\star$ & \cellcolor{blue_color}$\star$ & \cellcolor{blue_color}$\star$ & 0  \\ \hline     \end{tabular}                   &                          \\
                          &                           &                          &                          \\
\multirow{3}{*}{$(0, 1]$} & \multirow{3}{*}{$10000000 \Biggl\{$}              & \begin{tabular}{|l|l|l|l|l|l|l|l|}     \hline    \cellcolor{blue_color}$\star$ & 0 & 0 & 0 & 0 & 0 & 0 & 0  \\ \hline     \end{tabular}                   & \multirow{3}{*}{$\Biggr\}$ repeat $2c \log{n}$ times} \\
                          &  & $\ldots$                 &                          \\
                          &                           & \begin{tabular}{|l|l|l|l|l|l|l|l|}     \hline    \cellcolor{blue_color}$\star$ &0 &0 &0 & 0 & 0 & 0 & 0  \\ \hline     \end{tabular}                   &                          \\
                          &                           &                          &                          \\
\multirow{6}{*}{$(1, 2]$} & \multirow{3}{*}{$10000000 \Biggl\{$}              & \begin{tabular}{|l|l|l|l|l|l|l|l|}     \hline    \cellcolor{blue_color}$\star$ & 0 & 0 & 0 & 0 & 0 & 0 & 0  \\ \hline     \end{tabular}                   & \multirow{3}{*}{$\Biggr\}$ repeat $2c \log{n}$ times} \\
                          &  & $\ldots$                 &                          \\
                          &                           & \begin{tabular}{|l|l|l|l|l|l|l|l|}     \hline    \cellcolor{blue_color}$\star$ & 0 & 0 & 0 & 0 & 0 & 0 & 0  \\ \hline     \end{tabular}                   &                          \\ \\
                          & \multirow{3}{*}{$11000000 \Biggl\{$}              & \begin{tabular}{|l|l|l|l|l|l|l|l|}     \hline    \cellcolor{blue_color}$\star$ &\cellcolor{blue_color}$\star$ &0 &0 & 0 & 0 & 0 & 0  \\ \hline     \end{tabular}                   & \multirow{3}{*}{$\Biggr\}$ repeat $2c \log{n}$ times} \\
                          &  & $\ldots$                 &                          \\
                          &                           & \begin{tabular}{|l|l|l|l|l|l|l|l|}     \hline    \cellcolor{blue_color}$\star$ &\cellcolor{blue_color}$\star$ &0 &0 & 0 & 0 & 0 & 0  \\ \hline     \end{tabular}                   &                          \\
\multicolumn{3}{c}{{\Huge$\ldots$}}                                                     &                          \\
\multirow{3}{*}{$(7, 8]$} & \multirow{3}{*}{$11111110 \Biggl\{$}              & \begin{tabular}{|l|l|l|l|l|l|l|l|}     \hline    \cellcolor{blue_color}$\star$ &\cellcolor{blue_color}$\star$ &\cellcolor{blue_color}$\star$ &\cellcolor{blue_color}$\star$ & \cellcolor{blue_color}$\star$ & \cellcolor{blue_color}$\star$ & \cellcolor{blue_color}$\star$ & 0  \\ \hline     \end{tabular}                   & \multirow{3}{*}{$\Biggr\}$ repeat $2c \log{n}$ times} \\
                          & & $\ldots$                 &                          \\
                          &                           & \begin{tabular}{|l|l|l|l|l|l|l|l|}     \hline    \cellcolor{blue_color}$\star$ &\cellcolor{blue_color}$\star$ &\cellcolor{blue_color}$\star$ &\cellcolor{blue_color}$\star$ & \cellcolor{blue_color}$\star$ & \cellcolor{blue_color}$\star$ & \cellcolor{blue_color}$\star$ & 0  \\ \hline     \end{tabular}                   &                         
\end{tabular}
\end{center}
\caption{Structure of $\hat{\R}$. Blue cells with $\star$ represent indices where either a 0 or 1 could appear.}
\label{fig:struct-os2}
\end{figure}
See Figure~\ref{fig:struct-os2} for an illustration. This $\hat{\R}$ describes all the strings as the source of randomness needed for the $O(\log{n})$ communication protocol for $\GT$, but each of the strings appears in the structure only once instead of $\alpha c \log{n}$ times. So, we need to duplicate this structure $\alpha c \log{n}$ times to properly simulate the protocol.

As in the proof of Claim~\ref{clm:os-ub}, we are going to add two other steps to the structure. 
First, we are going to have $t$ individual ``prepackaged'' copies for the $\GT$ protocol. Let $\R_1 = \ldots = \R_{t} = \times_{\alpha c  \log{n}} \hat{\R}$. Each of the copies has enough randomness and the right structure of that randomness to simulate one full run of the $\GT$ protocol. Let $\R' = \bigtimes_{j \in [t]} \R_{j}$, which allows us to handle $t$ runs.
Secondly, we add a set of ``basis'' strings to the structure: $\mathcal{B} = \{\one_1\} \times \{\one_2\} \times \ldots \{\one_{n-1}\} \times \{\one_n\} = \{10\ldots0\}\times\{010\ldots0\} \times \ldots \times \{00\ldots010\} \times \{00\ldots01\}$. 

The final underlying structure of the oracle will be a Cartesian product of $\R'$ and $\mathcal{B}$: 
$\R_{\OS++} = \mathcal{B} \times \R' = \mathcal{B} \times (\bigtimes_{j \in [t]} \R_{j})$. See Figure~\ref{fig:oracle-os2} for an illustration.

\begin{figure}[h]
\begin{center}
\begin{tabular}{ccccccccc}
$\mathcal{B}$ &             & $\R_1$ &             & $\R_2$ &             &          &             & $\R_t$ \\
 &             & $c \alpha \log{n}$ copies &             & $c \alpha \log{n}$ copies &             &          &             & $c \alpha \log{n}$ copies \\
\footnotesize
\tabcolsep=0.11cm
\begin{tabular}{|llllll|}
\hline
\cellcolor{yellow}1 & 0 & 0 & 0 & 0 & 0 \\
0 & \cellcolor{yellow}1 & 0 & 0 & 0 & 0 \\
0 & 0 & \cellcolor{yellow}1 & 0 & 0 & 0 \\
0 & 0 & 0 & \cellcolor{yellow}1 & 0 & 0 \\
0 & 0 & 0 & 0 & \cellcolor{yellow}1 & 0 \\
0 & 0 & 0 & 0 & 0 & \cellcolor{yellow}1 \\ \hline
\end{tabular}
& $\bigtimes$ &
\footnotesize
\tabcolsep=0.11cm
\begin{tabular}{|llllllll|}
\hline
$\cellcolor{blue_color}$ & $\cellcolor{blue_color}$ & $\cellcolor{blue_color}$ & $\cellcolor{blue_color}$ & & & & \\

$\cellcolor{blue_color}$ & $\cellcolor{blue_color}$ & $\cellcolor{blue_color}$ & $\cellcolor{blue_color}$ & & & & \\
$\cellcolor{blue_color}$ & $\cellcolor{blue_color}$ & & & & & & \\

$\cellcolor{blue_color}$ & $\cellcolor{blue_color}$ & $\cellcolor{blue_color}$ & $\cellcolor{blue_color}$ & & & & \\
$\cellcolor{blue_color}$ & $\cellcolor{blue_color}$ & $\cellcolor{blue_color}$ & $\cellcolor{blue_color}$ & $\cellcolor{blue_color}$ & $\cellcolor{blue_color}$ & & \\

$\cellcolor{blue_color}$ & $\cellcolor{blue_color}$ & & & & & & \\
$\cellcolor{blue_color}$ & & & & & & & \\
$\cellcolor{blue_color}$ & $\cellcolor{blue_color}$ & & & & & & \\
$\cellcolor{blue_color}$ & $\cellcolor{blue_color}$ & $\cellcolor{blue_color}$ & $\cellcolor{blue_color}$ & & & & \\
$\cellcolor{blue_color}$ & $\cellcolor{blue_color}$ & $\cellcolor{blue_color}$ &  &  &  &  & \\
&\multicolumn{6}{c}{{\LARGE$\ldots$}}&\\
\hline
\end{tabular}
& $\bigtimes$ & 
\footnotesize
\tabcolsep=0.11cm
\begin{tabular}{|llllllll|}
\hline
$\cellcolor{blue_color}$ & $\cellcolor{blue_color}$ & $\cellcolor{blue_color}$ & $\cellcolor{blue_color}$ & & & & \\

$\cellcolor{blue_color}$ & $\cellcolor{blue_color}$ & $\cellcolor{blue_color}$ & $\cellcolor{blue_color}$ & & & & \\
$\cellcolor{blue_color}$ & $\cellcolor{blue_color}$ & & & & & & \\

$\cellcolor{blue_color}$ & $\cellcolor{blue_color}$ & $\cellcolor{blue_color}$ & $\cellcolor{blue_color}$ & & & & \\
$\cellcolor{blue_color}$ & $\cellcolor{blue_color}$ & $\cellcolor{blue_color}$ & $\cellcolor{blue_color}$ & $\cellcolor{blue_color}$ & $\cellcolor{blue_color}$ & & \\

$\cellcolor{blue_color}$ & $\cellcolor{blue_color}$ & & & & & & \\
$\cellcolor{blue_color}$ & & & & & & & \\
$\cellcolor{blue_color}$ & $\cellcolor{blue_color}$ & & & & & & \\
$\cellcolor{blue_color}$ & $\cellcolor{blue_color}$ & $\cellcolor{blue_color}$ & $\cellcolor{blue_color}$ & & & & \\
$\cellcolor{blue_color}$ & $\cellcolor{blue_color}$ & $\cellcolor{blue_color}$ &  &  &  &  & \\
&\multicolumn{6}{c}{{\LARGE$\ldots$}}&\\
\hline
\end{tabular}
& $\bigtimes$ & $\ldots$ & $\bigtimes$ & 
\footnotesize
\tabcolsep=0.11cm
\begin{tabular}{|llllllll|}
\hline
$\cellcolor{blue_color}$ & $\cellcolor{blue_color}$ & $\cellcolor{blue_color}$ & $\cellcolor{blue_color}$ & & & & \\

$\cellcolor{blue_color}$ & $\cellcolor{blue_color}$ & $\cellcolor{blue_color}$ & $\cellcolor{blue_color}$ & & & & \\
$\cellcolor{blue_color}$ & $\cellcolor{blue_color}$ & & & & & & \\

$\cellcolor{blue_color}$ & $\cellcolor{blue_color}$ & $\cellcolor{blue_color}$ & $\cellcolor{blue_color}$ & & & & \\
$\cellcolor{blue_color}$ & $\cellcolor{blue_color}$ & $\cellcolor{blue_color}$ & $\cellcolor{blue_color}$ & $\cellcolor{blue_color}$ & $\cellcolor{blue_color}$ & & \\

$\cellcolor{blue_color}$ & $\cellcolor{blue_color}$ & & & & & & \\
$\cellcolor{blue_color}$ & & & & & & & \\
$\cellcolor{blue_color}$ & $\cellcolor{blue_color}$ & & & & & & \\
$\cellcolor{blue_color}$ & $\cellcolor{blue_color}$ & $\cellcolor{blue_color}$ & $\cellcolor{blue_color}$ & & & & \\
$\cellcolor{blue_color}$ & $\cellcolor{blue_color}$ & $\cellcolor{blue_color}$ &  &  &  &  & \\
&\multicolumn{6}{c}{{\LARGE$\ldots$}}&\\
\hline
\end{tabular}
\end{tabular}
\caption{Structure of $\R_{\OS++}$. Each $\R_j$ consists of $c \alpha \log{n}$ copies of $\hat{\R}$.}
\label{fig:oracle-os2}
\end{center}
\end{figure}

\normalsize

We also set the parameters to be $\alpha = 2, t = 250n\ln{2}$.
Notice that this set $\R_{\OS++}$ is a \emph{good base} by construction and has size $m \leq n + 3 \alpha t (c n \log{n} + 2 c n \log^2{n}) = O(n^3 \log^2{n})$.
    
    \textbf{Constructing the family of approximating polynomials.}
    The construction will follow the same outline as the construction of polynomials for $\R_{\OS++}$. We construct a family of deterministic algorithms that work well for every fixed pair of inputs $i, x\in \{0, 1\}^n$, and then construct a family of algorithms that work for all inputs with high probability at the same time, and then finally explain how to convert it to a family of approximating polynomials.
            
    For all $i \in \{0, 1\}^n, j \in [t], r \in \R_{\OS++}$ let $B_{(r, i, j)}(\Oracle(r, x))$ be the following deterministic algorithm.

    \begin{enumerate}
        \item Simulate the algorithm from Claim~\ref{clm:comp-tree}.
        \item Each time the algorithm asks ``is the key $> a$?'':
        \item \qquad Set $\tau = 1^{a} 0^{n - a}$.
        \item \qquad For the $\alpha$ indices $v \in [m]$ corresponding to $n$-bit strings drawn from $\R^{\tau}$ within the $j$th 
        
        \qquad copy $\R_j$ that were not already used prior to this step:
        \item \qquad \qquad Compute $\langle i, r_v \rangle$ and compare it to $(\Oracle(r, x))_v = \langle x, r_v \rangle$. 
        \item \qquad If for all $\alpha$ indices $v$ the inner products are the same then reply ``yes, the key $> a$''. 
        
        \qquad Otherwise, reply ``no, the key $\leq a$''.
        \item Let $k$ be the output of the algorithm.
        \item Compare $i_k = \langle i, \one_k \rangle$ and $(\Oracle(r, x))_k = \langle x, \one_k \rangle = x_k$. If $x_k \leq i_k$ then accept. Otherwise, reject.
    \end{enumerate}
    Notice that this algorithm emulates the $O(\log{n})$ randomized communication protocol for $\GT$ communication problem. 
   
    This algorithm computes $\GT_i(x)$ with probability at least $\frac{11}{12}$ for all $j \in [t]$. That is, for all $j \in [t]$ and for all $i, x \in \{0, 1\}^n$

    $$\Pr_{r \from \R_{\OS++}}[B_{(r, i, j)}(\Oracle(r, x)) = \GT_i(x)] \geq \frac{11}{12}.$$
    This algorithm makes at most $3 \alpha c \log{n} = 6 c \log{n} $ queries to the oracle $\Oracle(r, x)$. 
    Let $W(i, x, r, j)$ be the indicator that the $j$-th package of random strings in $r$ defines a set of ``bad'' random strings for $(i, x)$: $W(i, x, r, j) = 1$ if and only if $B_{(r, i, j)}(\Oracle(r, x)) \neq \GT_i(x)$. We established that $B_{(r, i, j)}(\Oracle(r, x))$ works well if given a random $r \from \R_{\OS++}$ for every $j \in [t]$ and the probability of this algorithm outputting an incorrect answer is at most $\frac{1}{12}$. So for all $i, x \in \{0, 1\}^n, j \in [t]$ 
    $$\Pr_{r \from \R_{\OS++}}[W(i, x, r, j) = 1] = \E_{r \from \R_{\OS++}}[W(i, x, r, j)] \leq \frac{1}{12}.$$

    For all $i \in \{0, 1\}^n, r \in \R_{\OS++}$ let $A_{(r, i)}(\Oracle(r, x))$ be the following randomized algorithm:
        
    \begin{itemize}
        \item Choose $j \from [t]$ uniformly at random.
        \item Run $B_{(r, i, j)}(\Oracle(r, x))$.
    \end{itemize}
    The number of queries that $A_{(r, i)}$ makes to the oracle is the same as $B_{(r, i, j)}$ which is $3\alpha c \log{n} = 6c\log{n}$. We fix a pair $(i, x)$ and evaluate the following probability.
    
    $$ \Pr_{r \from \R_{\OS++}}\left[\Pr_{j \from [t]}[B_{(r, i, j)} \neq \GT_i(x)] > \frac{1}{6}\right] = \Pr_{r \from \R_{\OS++}}\left[\frac{1}{t} \sum_{j \in [t]} W(i, x, r, j) > \frac{1}{6}\right].$$
    We established that $\E_{r \from \R_{\OS++}}[W(i, x, r, j)] \leq \frac{1}{12}$ and so by Hoeffding's inequality,
    $$\Pr_{r \from \R_{\OS++}}\left[\frac{1}{t} \sum_{j \in [t]} W(i, x, r, j) > \frac{1}{12} + \frac{1}{12}\right] \leq e^{-2\frac{t}{144}} \leq 2^{-\frac{500n}{144}}.$$
    By a union bound over all possible $i, x \in \{0, 1\}^n$,
    $$\Pr_{r \from \R_{\OS++}}\left[\exists i, x \in \{0, 1\}^n: \frac{1}{t} \sum_{j \in [t]} W(i, x, r, j) > \frac{1}{6}\right] \leq 2^{2n} 2^{-\frac{500n}{144}} \leq 2^{-n} < \frac{1}{3}.$$
    Therefore, we have proven that
    $$\Pr_{r \from \R_{\OS++}}\left[\exists i, x \in \{0, 1\}^n: \Pr_{j \from [t]}[A_{(r, i)}(\Oracle(r, x))) \neq \GT_i(x)] > \frac{1}{6}\right] < \frac{1}{3}.$$
        
    The last step is to convert this family of query algorithms into a family of approximating polynomials. Let $q_{(r, i)}$ denote the acceptance probability of $A_{(r, i)}$ which is a polynomial of degree at most $6c \log{n}$ such that
        
    $$\Pr_{r \from \R_{\OS++}}\left[\exists i, x \in \{0, 1\}^n: \left|q_{(r, i)}(\Oracle(r, x)) - \GT_i(x)\right| > \frac{1}{6}\right] < \frac{1}{3}$$
    which is exactly what we were looking for.

\end{appendix}

\end{document}